\documentclass{lmcs}
\usepackage[utf8]{inputenc}
\pdfoutput=1

\usepackage{lastpage}
\lmcsdoi{19}{1}{16}
\lmcsheading{}{\pageref{LastPage}}{}{}%
{Mar.~15,~2022}{Mar.~06,~2023}{}

\pdfoutput=1

\usepackage{hyperref} 
\usepackage{cleveref,thm-restate}


\newcommand{\renewtheorem}[1]{%
  \expandafter\let\csname #1\endcsname\relax
  \expandafter\let\csname c@#1\endcsname\relax
  \expandafter\let\csname end#1\endcsname\relax
  \newtheorem{#1}%
}

\theoremstyle{plain}

\renewtheorem{thm}{Theorem}[section]
\renewtheorem{cor}[thm]{Corollary}
\renewtheorem{lem}[thm]{Lemma}
\renewtheorem{prop}[thm]{Proposition}
\renewtheorem{asm}[thm]{Assumption}

\theoremstyle{thmC}
\renewtheorem{thmC}[thm]{Theorem}
\renewtheorem{lemC}[thm]{Lemma}

\theoremstyle{definition}

\renewtheorem{rem}[thm]{Remark}
\renewtheorem{rems}[thm]{Remarks}
\renewtheorem{exa}[thm]{Example}
\renewtheorem{exas}[thm]{Examples}
\renewtheorem{defi}[thm]{Definition}
\renewtheorem{conv}[thm]{Convention}
\renewtheorem{conj}[thm]{Conjecture}
\renewtheorem{prob}[thm]{Problem}
\renewtheorem{oprob}[thm]{Open Problem}
\renewtheorem{algo}[thm]{Algorithm}
\renewtheorem{obs}[thm]{Observation}
\renewtheorem{qu}[thm]{Question}
\renewtheorem{fact}[thm]{Fact}
\renewtheorem{pty}[thm]{Property}
\renewtheorem{clm}[thm]{Claim}


\usepackage{amsmath,amsthm}
\usepackage{graphicx}
\usepackage{graphics}
\usepackage{latexsym}
\usepackage{amssymb}
\usepackage{stmaryrd}
\usepackage{ifthen}
\usepackage{array}
\usepackage{mathrsfs}
\usepackage{multirow}
\usepackage{bbm}

\usepackage{pgf}
\usepackage{pgfplots}
\pgfplotsset{compat=1.15}
\usepackage{tikz, mathtools, thm-restate, stmaryrd}
\usetikzlibrary{arrows,calc,automata,backgrounds,decorations,intersections,pgfplots.fillbetween,shapes.misc}
\tikzset{every picture/.style={thick,>=angle 60}}
\tikzset{MDPrand/.style={draw,circle,minimum size=11*1.5,inner sep=0}}
\tikzset{MDPcont/.style={draw,rectangle,minimum size=9*1.5,inner sep=0}}
\tikzset{MDPbad/.style={fill=red}}
\definecolor{myorange}{RGB}{255, 128, 0}

\usetikzlibrary{shapes.multipart,shapes.misc}
\usepgflibrary{decorations.pathreplacing}
\usetikzlibrary{arrows,calc,topaths}
\usetikzlibrary{automata,shapes.arrows, backgrounds, intersections, positioning}
\usetikzlibrary{shadows}

\usetikzlibrary{decorations.pathmorphing}
\usetikzlibrary{decorations.shapes}
\usetikzlibrary{shapes}
\usetikzlibrary{backgrounds}
\usetikzlibrary{fit}
\usetikzlibrary{arrows}
\usetikzlibrary{calc}
\usetikzlibrary{mindmap}
\usetikzlibrary{matrix}

\usepackage{mathtools}
\usepackage{thmtools,thm-restate}
\usepackage{xcolor}

\usepackage{bbding}

\hypersetup{colorlinks,linkcolor=blue,citecolor=blue,urlcolor=blue}

\newcommand{\+}[1]{\mathbb{#1}}

\newcommand{\N}{\+{N}}
\newcommand{\R}{\+{R}}
\newcommand{\x}{\times}

\usepackage{wasysym} 
\newcommand{\rsymbol}{\ocircle}
\newcommand{\zsymbol}{\Box}
\newcommand{\zstates}{\states_\zsymbol}
\newcommand{\rstates}{\states_\rsymbol}
\newcommand{\reachset}{T}



\newcommand{\eqby}[2][=]{\stackrel{\text{{\tiny{#2}}}}{#1}}
\newcommand{\eqdef}{\eqby{def}}
\newcommand{\defeq}{\eqdef}


\newcommand{\eps}{\varepsilon}




\newcommand{\problemx}[3]{
\par\noindent\underline{\sc#1}\par\nobreak\vskip.2\baselineskip
\begingroup\clubpenalty10000\widowpenalty10000
\setbox0\hbox{\bf INPUT:\ }\setbox1\hbox{\bf QUESTION:\ }
\dimen0=\wd0\ifnum\wd1>\dimen0\dimen0=\wd1\fi
\vskip-\parskip\noindent
\hbox to\dimen0{\box0\hfil}\hangindent\dimen0\hangafter1\ignorespaces#2\par
\vskip-\parskip\noindent
\hbox to\dimen0{\box1\hfil}\hangindent\dimen0\hangafter1\ignorespaces#3\par
\endgroup}

\newcommand{\Runs}[2]{{\textit{Runs}_{#1}^{#2}}}
\newcommand{\pRuns}[2]{{\textit{pRuns}_{#1}^{#2}}}
\newcommand{\dist}{\mathcal{D}}


\newcommand{\always}{{\sf G}}
\newcommand{\eventually}{{\sf F}}



%

%

%


\newcommand{\hide}[1]{}

\newcommand{\lrc}[1]{(#1)}

\newcommand{\ignore}[1]{}

\newcommand{\tuple}[1]{\lrc{#1}}

\newcommand{\denotationof}[2]{\llbracket #1\rrbracket^{#2}}

\newcommand{\mdp}{{\mathcal M}}

\newcommand{\mdptuple}{\tuple{\states,\zstates,\rstates,\transition,\probp,r}}

\newcommand{\states}{S}

\newcommand{\state}{s}

\newcommand{\transition}{{\longrightarrow}}

\newcommand{\probp}{P}

\newcommand{\complementof}[1]{\overline{#1}}

\newcommand{\play}{\rho}
\newcommand{\playset}{{\mathfrak R}}

\newcommand{\partialplay}{\rho}

\newcommand{\zstrat}{\sigma}



\newcommand{\zstratset}{\Sigma}

\newcommand{\memory}{{\sf M}}
\newcommand{\memconf}{{\sf m}}

\newcommand{\expectval}{{\mathcal E}}
\newcommand{\probm}{{\mathcal P}}

\newcommand{\formula}{{\varphi}}

\newcommand{\valueof}[2]{{\mathtt{val}_{#1}(#2)}}

\mathchardef\mhyphen="2D 

\newcommand{\liminfmpobj}{{\it MP}_{\liminf\ge 0}}
\newcommand{\liminftpobj}{{\it TP}_{\liminf\ge 0}}
\newcommand{\liminfppobj}{{\it PP}_{\liminf\ge 0}}

\newcommand{\transience}{\mathtt{Transience}}
\newcommand{\reward}{\mathit{r}}

\title[Strategy Complexity of Point, Mean and Total Payoff]
{Strategy Complexity of Point Payoff, Mean Payoff and Total Payoff Objectives in Countable MDPs}

\author[R.~Mayr]{Richard Mayr}
\author[E.~Munday]{Eric Munday}

\address{University of Edinburgh, School of Informatics, LFCS, Edinburgh, UK}

\subjclass{F.1.1; D.2.4}
\keywords{Markov decision processes, Strategy complexity, Mean payoff}


\begin{document}

\newcommand\blfootnote[1]{%
  \begingroup
  \renewcommand\thefootnote{}\footnote{#1}%
  \addtocounter{footnote}{-1}%
  \endgroup
}
\blfootnote{Extended version of results presented at CONCUR 2021.}

\begin{abstract}
  We study countably infinite Markov decision processes (MDPs)
  with real-valued transition rewards.
  Every infinite run induces the following sequences of payoffs:
  1. Point payoff (the sequence of directly seen transition rewards),
  2. Mean payoff (the sequence of the sums of all rewards so far, divided by
  the number of steps), and
  3. Total payoff (the sequence of the sums of all rewards so far).
  For each payoff type, the objective is to maximize the probability
  that the $\liminf$ is non-negative.

  We establish the complete picture of the strategy complexity of
  these objectives, i.e., how much memory is necessary and sufficient for
  $\eps$-optimal (resp.\ optimal) strategies.
  Some cases can be won with memoryless deterministic strategies,
  while others require a step counter, a reward counter, or both.
\end{abstract}

\maketitle

\section{Introduction}\label{sec:intro}
\paragraph{Background.}
Countably infinite Markov decision processes (MDPs) are a standard model for dynamic systems that
exhibit both stochastic and controlled behavior; see, e.g., standard
textbooks~\cite{Puterman:book,MaitraSudderth:DiscreteGambling,DubbinsSavage:2014,raghaven2012} and references therein.
Some fundamental results and proof techniques for countable MDPs were
established in the framework of Gambling Theory~\cite{DubbinsSavage:2014,MaitraSudderth:DiscreteGambling}. See also Ornstein's
seminal paper on stationary strategies~\cite{Ornstein:AMS1969}.
Further applications include control theory~\cite{blondel2000survey,NIPS2004_2569},
operations research
and finance~\cite{nowak2005,Flesch:JOTA2020,bauerle2011finance,schal2002markov},
artificial intelligence and machine
learning~\cite{sutton2018reinforcement,sigaud2013markov},
and formal verification~\cite{KMST2020c,ACMSS2016,BBEKW2010,EWY2010,BBEK:IC2013,
EY:JACM2015, ModCheckHB18,ModCheckPrinciples08}.
The latter works often use countable MDPs to describe unbounded structures in
computational models such as stacks/recursion, counters, queues, etc.

An MDP is a directed graph where states are either random or controlled.
In a random state the next state is chosen according to a fixed probability distribution.
In a controlled state the controller can choose
a distribution over all possible successor states.
By fixing a strategy for the controller (and an initial state), one obtains a probability space
of runs of the MDP\@. The goal of the controller is to optimize the expected value of
some objective function on the runs.
The type of strategy necessary to achieve an $\eps$-optimal (resp.\ optimal)
value for a given objective is called its \emph{strategy complexity}.

\paragraph{Transition rewards and liminf objectives.}
MDPs are given a reward structure by assigning a real-valued
(resp.\ integer or rational) reward to
each transition. Every run then induces an infinite sequence of
seen transition rewards $r_0r_1r_2\dots$.
We consider the $\liminf$
of this sequence, as well as
two other important derived sequences.
\begin{enumerate}
\item
The point payoff considers the
$\liminf$ of the sequence $r_0 r_1r_2 \dots$
directly.
\item
The mean payoff considers the
$\liminf$ of the sequence
$\left\{\frac{1}{n}\sum_{i=0}^{n-1} r_i\right\}_{n \in \N}$, i.e., the mean of
all rewards seen
so far in an expanding prefix of the run.
\item
The total payoff considers the
$\liminf$ of the sequence
$\left\{\sum_{i=0}^{n-1} r_i\right\}_{n \in \N}$, i.e., the sum of all rewards seen
so far.
\end{enumerate}
For each of the three cases above, the $\liminf$ threshold objective is
to maximize the probability that the $\liminf$ of the respective type of sequence
is $\ge 0$.

\paragraph{Our contribution.}
We establish the strategy complexity of all the
$\liminf$ threshold objectives above for \emph{countably infinite} MDPs.
(For the simpler case of finite MDPs, see the paragraph on related work below.)
We show the amount and type of memory
that is necessary and sufficient for $\eps$-optimal strategies
(and optimal strategies, where they exist).

Classes of strategies are defined via the amount and type of memory
used, and whether they are randomized or deterministic.
Some canonical types of memory for strategies are the following:
No memory (also called memoryless or positional),
finite memory, a step counter (i.e., a discrete clock),
a reward counter (i.e., a variable that records the sum of all transition
rewards seen so far)
and general infinite memory.
Strategies using only a step counter are also called \emph{Markov strategies}~\cite{Puterman:book}.
The reward counter has the same type as the transition rewards in the MDP, i.e.,
integers, rationals or reals.
Moreover, there can be combinations of these, e.g., a step counter plus some
finite general purpose memory.
Other types of memory are possible, e.g., an unbounded stack or a queue, but they are less
common in the literature.

To establish an upper bound $X$ on the strategy complexity of an objective
in countable MDPs, it suffices to prove that there always exist good
($\eps$-optimal, resp.\ optimal) strategies in some class of strategies $X$.
Lower bounds on the strategy complexity of an objective
can only be established in the sense of proving that good
strategies for the objective do not exist in some classes $Y$, $Z$, etc.
See \Cref{putermanexample} for an example.

\begin{figure}[htbp]
\begin{center}
    \begin{tikzpicture}

    \node[draw, minimum height=0.7cm, minimum width=0.7cm] (S1) at (1,0) {$s_{1}$};
    \node[draw, minimum height=0.7cm, minimum width=0.7cm] (S2) at (2.5,0) {$s_{2}$};
    \node[draw, minimum height=0.7cm, minimum width=0.7cm] (S3) at (4,0) {$s_{3}$};
    \node[draw, minimum height=0.7cm, minimum width=0.7cm] (S4) at (5.5,0) {$s_{4}$};

    \node[draw, minimum height=0.7cm, minimum width=0.7cm] (SK) at (7,0) {$s_{k}$};
    \node (SI1) at (8.5,0) {};


    \draw[->,>=latex] (S1) -- (S2) node[above, midway]{\small $-1$};
    \draw[->,>=latex] (S2) -- (S3) node[above, midway]{\small $-1$};
    \draw[->,>=latex] (S3) -- (S4) node[above, midway]{\small $-1$};
    \draw[->,>=latex, dotted, thick] (S4) -- (SK) node[above, midway]{};
    \draw[dotted, thick] (SK) -- (SI1);

	\path (S1) edge[->,>=latex, loop below] node[below] {\small $-1$} (S1);
	\path (S2) edge[->,>=latex, loop below] node[below] {\small $-\dfrac{1}{2}$} (S2);
	\path (S3) edge[->,>=latex, loop below] node[below] {\small $-\dfrac{1}{3}$} (S3);
	\path (S4) edge[->,>=latex, loop below] node[below] {\small $-\dfrac{1}{4}$} (S4);
	\path (SK) edge[->,>=latex, loop below] node[below] {\small $-\dfrac{1}{k}$} (SK);

    \end{tikzpicture}
    \caption{Adapted from~\cite[Example 8.10.2]{Puterman:book}. While there is
      no optimal MD (memoryless deterministic) strategy, the following
      strategy is optimal for lim inf/lim sup mean payoff:
      Loop $\exp(\exp(k))$ many
      times
      in state $s_k$ for all $k$.
      In this particular example, this can be implemented with either
      just a step counter or just a reward counter, but
      in general both are needed; cf.~\Cref{table:allresults}.}%
    \label{putermanexample}
\end{center}
    \end{figure}
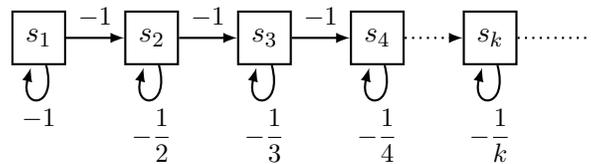

Note that different classes of strategies are not always comparable,
for several reasons.
First, different types of memory may be incomparable. E.g.,
a step counter uses infinite memory, but it is updated in a very particular
way, and thus it does not subsume a finite general purpose memory.
Second, randomized strategies are more general than deterministic ones if they
use the same memory, but not if the memory is different.
E.g., randomized positional strategies are incomparable to deterministic
strategies with finite memory (or a step counter).
Since strategy classes are not always comparable, there
there can be cases with several incomparable upper/lower bounds.

Moreover, there is no weakest type of infinite memory with restricted use.
Hence, upper and lower bounds on the strategy complexity of an objective
can be only be tight \emph{relative} to the
considered alternative strategy classes, e.g., the canonical classes mentioned above.

Our results are summarized in \Cref{table:allresults}.
By Rand(X) (resp.\ Det(X)) we denote the classes of randomized
(resp.\ deterministic) strategies that use memory of size/type X.
SC denotes a step counter, RC denotes a reward counter and F denotes arbitrary finite memory.
Positional/memoryless means that no memory is used.
The simplest type are memoryless deterministic (MD) strategies.
The results depend on the type of objective (point, total, or
mean payoff) and on whether the MDP is finitely or infinitely
branching. For our objectives, the strategy complexities  of $\eps$-optimal
and optimal strategies (where they exist) coincide, but the proofs are different.

For clarity of presentation,
our counterexamples use large transition rewards and high degrees of
branching.
However, the results can be strengthened such that the lower bounds hold even for just binary
branching MDPs with rational transition probabilities and transition rewards in $\{-1,0,1\}$;
cf.~\Cref{sec:strengthening}.

\begin{table*}[hbtp]
{\small
\centering
\begin{tabular}{|l||c|c|c|}\hline
      & Point payoff & Mean payoff & Total payoff \\ \hline
$\eps$-opt., fin.\ br. & Det(Positional)~\ref{finpointpayoff}  & Det(SC+RC)~\ref{mpepsupper},~\ref{infinitesummary},~\ref{mpstepepslower}  & Det(RC)~\ref{fintpepsupper},~\ref{infinitesummarytp}  \\ \hline
opt., fin.\ br.     & Det(Positional)~\ref{finoptupperpp}  & Det(SC+RC)~\ref{infoptuppermp},~\ref{almostsummary},~\ref{mpstepoptlower}  & Det(RC)~\ref{finoptupper},~\ref{almostsummarytp}   \\ \hline
$\eps$-opt., inf.\ br.   & Det(SC)~\ref{infpointpayoff},~\ref{infbranchsteplower} & Det(SC+RC)~\ref{mpepsupper},~\ref{infinitesummary},~\ref{mpstepepslower}   & Det(SC+RC)~\ref{inftpepsupper},~\ref{infinitesummarytp},~\ref{infbranchsteplowertp}  \\ \hline
opt., inf.\ br.   & Det(SC)~\ref{infoptupperpp},~\ref{infbranchsteplower} & Det(SC+RC)~\ref{infoptuppermp},~\ref{almostsummary},~\ref{mpstepoptlower}    & Det(SC+RC)~\ref{infoptuppertp},~\ref{almostsummarytp},~\ref{infbranchsteplowertp}  \\ \hline
\end{tabular}
}
\caption{Combined upper and lower bounds on the strategy complexity of $\eps$-optimal (resp.\ optimal)
strategies for point, mean and total payoff objectives in
finitely branching and infinitely branching MDPs.
All strategies are deterministic and randomization does not help.
For each result, we list the numbers of the theorems that show the
upper and lower bounds on the strategy complexity.
The lower bounds hold wrt.\ the canonical strategies mentioned above.
Explicit lower bounds are listed in the tables in the following sections.
The lower bounds hold even for integer transition
rewards. The upper
bounds hold even for real-valued transition rewards.
}\label{table:allresults}
\end{table*}

Some complex new proof techniques are developed to show these results.
E.g., the examples showing the lower bound in cases where both a step counter
and a reward counter are required use a finely tuned tradeoff between different
risks that can be managed with both counters, but not with just one
counter plus arbitrary finite memory.
The strategies showing the upper bounds need to take into account
convergence effects, e.g., the sequence of point rewards
$-1/2, -1/3, -1/4, \dots$ \emph{does} satisfy $\liminf$ $\ge 0$, i.e.,
one cannot assume that rewards are integers.

\paragraph{Related work.}
Mean payoff objectives for \emph{finite} MDPs have been widely
studied; cf.~survey in~\cite{chatterjee2011games}.
There exist optimal MD strategies for $\liminf$ mean payoff
(which are also optimal for $\limsup$ mean payoff since the transition rewards are bounded),
and the associated computational problems
can be
solved in polynomial time~\cite{chatterjee2011games,Puterman:book}.
Similarly, see~\cite{CDH:ICALP2009} for a survey on
$\limsup$ and $\liminf$ point payoff objectives in finite stochastic games and MDPs,
where there also exist optimal MD strategies, and the more recent paper
by Flesch, Predtetchinski and Sudderth~\cite{FPS:2018} on simplifying optimal strategies.

All this does \emph{not} carry over to countably infinite MDPs.
Optimal strategies need not exist
(not even for much simpler objectives),
($\eps$-)optimal strategies can require
infinite memory, and computational problems are not defined in general, since
a countable MDP need not be finitely presented~\cite{Puterman:book,MaitraSudderth:DiscreteGambling,DubbinsSavage:2014,raghaven2012,KMSW2017}.
Moreover, attainment for $\liminf$ mean payoff need not coincide with attainment
for $\limsup$ mean payoff, even for very simple examples.
E.g., consider the acyclic infinite graph with
transitions $s_n \rightarrow s_{n+1}$ for all $n \in \N$
with reward $(-1)^n2^n$ in the $n$-th step, which yields a $\liminf$ mean payoff
of $-\infty$ and a $\limsup$ mean payoff of $+\infty$.

Mean payoff objectives for countably infinite MDPs have been considered in~\cite[Section 8.10]{Puterman:book}, e.g.,~\cite[Example 8.10.2]{Puterman:book}
(adapted in \Cref{putermanexample})
shows that there are no optimal MD (memoryless deterministic)
strategies for $\liminf$/$\limsup$ mean payoff.~\cite[Counterexample 1.3]{Ross:1983} shows that there are not even
$\eps$-optimal memoryless randomized strategies for $\liminf$/$\limsup$ mean payoff.
(We show much stronger lower/upper bounds; cf.~\Cref{table:allresults}.)

Sudderth~\cite{Sudderth:2020} considered an objective on countable MDPs that is
related to our point payoff threshold objective. However, instead of maximizing the
probability that the $\liminf$/$\limsup$ is non-negative, it asks to maximize the
\emph{expectation} of the $\liminf$/$\limsup$ point payoffs, which is a different
problem (e.g., it can tolerate a high probability of a negative $\liminf$/$\limsup$
if the remaining cases have a huge positive $\liminf$/$\limsup$).
Hill \& Pestien~\cite{Hill-Pestien:1987} showed the existence of good
randomized Markov strategies for the $\limsup$ of the \emph{expected}
average reward up-to step $n$ for growing $n$, and for the \emph{expected}
$\liminf$ of the point payoffs.

\section{Preliminaries}\label{sec:prelim}
\paragraph{Markov decision processes.}
A \emph{probability distribution} over a countable set $S$ is a function
$f:\states\to[0,1]$ with $\sum_{\state\in \states}f(\state)=1$.
We write
$\dist(\states)$ for the set of all probability distributions over $\states$.
A \emph{Markov decision process} (MDP) $\mdp=\mdptuple$ consists of
a countable set~$\states$ of \emph{states},
which is partitioned into a set~$\zstates$ of \emph{controlled states}
and  a set~$\rstates$ of \emph{random states},
a  \emph{transition relation} $\transition\subseteq\states\x\states$,
and a  \emph{probability function}~$\probp:\rstates \to \dist(\states)$.
We  write $\state\transition{}\state'$ if $\tuple{\state,\state'}\in \transition$,
and  refer to~$s'$ as a \emph{successor} of~$s$.
We assume that every state has at least one successor.
The probability function~$P$  assigns to each random state~$\state\in \rstates$
a probability distribution~$P(\state)$ over its (non-empty) set of successor states.
A \emph{sink in $\mdp$} is a subset $T \subseteq \states$ closed under the $\transition$ relation,
that is,  $\state \in \reachset$ and  $\state\transition\state'$ implies that $\state'\in T$.

An MDP is \emph{acyclic} if the underlying directed graph~$(S,\transition)$ is acyclic, i.e.,
there is no directed cycle.
It is  \emph{finitely branching}
if every state has finitely many successors
and \emph{infinitely branching} otherwise.
An MDP without controlled states
($\zstates=\emptyset$) is called a \emph{Markov chain}.

In order to specify our point/mean/total payoff objectives (see below), we
define a function $\reward: \states \times \states \to \R$ that assigns numeric
rewards to transitions.

\paragraph{Strategies and Probability Measures.}
A \emph{run}~$\play$ is an  infinite sequence of states and transitions
$\state_0e_0\state_1e_1\cdots$
such that $e_i = (\state_i, \state_{i+1}) \in \transition$
for all~$i\in \mathbb{N}$.
Let $\Runs{\mdp}{\state_0}$
be the set of all runs from $\state_0$ in the MDP $\mdp$.
A \emph{partial run} is a finite prefix of a run,
$\pRuns{\mdp}{\state_0}$ is the set of all partial runs
from $\state_0$ and $\pRuns{\mdp}{}$ the set of partial runs from any state.

We write~$\play_s(i)\eqdef\state_i$ for the $i$-th state along~$\play$
and $\play_e(i)\eqdef e_i$ for the $i$-th transition along~$\play$.
We sometimes write runs as $\state_0\state_1\cdots$, leaving the transitions implicit.
We say that a (partial) run $\play$ \emph{visits} $\state$ if
$\state=\play_s(i)$ for some $i$, and that~$\play$ starts in~$s$ if $\state=\play_s(0)$.

A \emph{strategy}
is a function $\zstrat:\pRuns{\mdp}{}\!\cdot\!\zstates \to \dist(S)$ that
assigns to partial runs $\partialplay\state$, where $\state \in \zstates$,
a distribution over the successors~$\{\state'\in \states\mid \state \transition{} \state'\}$.
The set of all strategies  in $\mdp$ is denoted by $\zstratset_\mdp$
(we omit the subscript and write~$\zstratset$ if $\mdp$ is clear from the context).
A (partial) run~$\state_0e_0\state_1e_1\cdots$ is consistent with a strategy~$\zstrat$
if for all~$i$
either $\state_i \in \zstates$ and $\zstrat(\state_0e_0\state_1e_1\cdots\state_i)(\state_{i+1})>0$,
or
$\state_i \in \rstates$ and $\probp(\state_i)(\state_{i+1})>0$.
%

An MDP $\mdp=\mdptuple$, an initial state $\state_0\in \states$, and a strategy~$\zstrat$
induce a probability space in which the outcomes are runs starting in $\state_0$
and with measure $\probm_{\mdp,\state_0,\zstrat}$
defined as follows.
It is first defined on \emph{cylinders} $s_0 e_0 s_1 e_1 \ldots
s_n \Runs{\mdp}{s_n}$:
if $s_0 e_0 s_1 e_1 \ldots s_n$
is not a partial run consistent with~$\zstrat$ then
$\probm_{\mdp,\state_0,\zstrat}(s_0 e_0 s_1 e_1 \ldots s_n \Runs{\mdp}{s_n}) \eqdef 0$.
Otherwise, $\probm_{\mdp,\state_0,\zstrat}(s_0 e_0 s_1 e_1 \ldots
s_n \Runs{\mdp}{s_n}) \eqdef \prod_{i=0}^{n-1} \bar{\zstrat}(s_0 e_0
s_1 \ldots s_i)(s_{i+1})$, where $\bar{\zstrat}$ is the map that
extends~$\zstrat$ by $\bar{\zstrat}(w s) = \probp(s)$
for all partial runs $w s \in \pRuns{\mdp}{}\!\cdot\!\rstates$.
By Carath\'eodory's theorem~\cite{billingsley-1995-probability},
this extends uniquely to a probability measure~$\probm_{\mdp,\state_0,\zstrat}$ on
the Borel $\sigma$-algebra $\?F$ of subsets of~$\Runs{\mdp}{s_0}$.
Elements of $\?F$, i.e., measurable sets of runs, are called \emph{events} or \emph{objectives} here.
For $X\in\?F$ we will write $\complementof{X}\eqdef \Runs{\mdp}{s_0}\setminus X\in \?F$ for its complement
and $\expectval_{\mdp,\state_0,\zstrat}$
for the expectation wrt.~$\probm_{\mdp,\state_0,\zstrat}$.
We drop the indices if possible without
ambiguity.

\paragraph{Objectives.}
We consider objectives that are determined by a predicate on infinite runs.
We assume familiarity with the syntax and semantics of the temporal
logic LTL~\cite{CGP:book}.
Formulas are interpreted on the structure $(\states,\transition)$.
We use $\denotationof{\formula}{\state}$ to denote the set of runs starting from
$\state$ that satisfy the LTL formula $\formula$,
which is a measurable set~\cite{Vardi:probabilistic}.
We also write $\denotationof{\formula}{}$ for $\bigcup_{s \in S} \denotationof{\formula}{s}$.
Where it does not cause confusion we will
identify $\varphi$ and $\denotationof{\formula}{}$
and just write
$\probm_{\mdp,\state,\zstrat}(\formula)$
instead of
$\probm_{\mdp,\state,\zstrat}(\denotationof\formula\state)$.
The reachability objective of eventually visiting a set of states $X$ can be expressed by
$\denotationof{\eventually X}{} \eqdef \{\play\,|\, \exists
i.\, \play_s(i) \in X\}$.
Reaching $X$ within at most $k$ steps is expressed by
$\denotationof{\eventually^{\le k} X}{} \eqdef \{\play\,|\, \exists
i \le k.\, \play_s(i) \in X\}$.
The definitions for eventually visiting certain transitions are analogous.
The operator $\always$ (always) is defined as $\neg\eventually\neg$.
So the safety objective of avoiding $X$ is expressed by
$\always \neg X$.

We consider the following objectives.
\begin{itemize}
\item
The $\liminfppobj$ objective is to maximize the
probability that the $\liminf$ of the \emph{point} payoffs (the immediate
transition rewards) is $\ge 0$, i.e.,
$\liminfppobj \defeq
\{\rho \mid \liminf_{n\in \N} \reward(\rho_e(n)) \ge 0\}$.
\item
The $\liminfmpobj$ objective is to maximize the
probability that the $\liminf$ of the \emph{mean} payoff
is $\ge 0$, i.e.,
$\liminfmpobj\defeq
\{\rho \mid \liminf_{n\in \N}\frac{1}{n}\sum_{j=0}^{n-1}
\reward(\rho_e(j)) \ge 0\}$.
\item
The $\liminftpobj$ objective is to maximize the
probability that the $\liminf$ of the \emph{total} payoff (the sum of the
transition rewards seen so far) is $\ge 0$, i.e.,
$\liminftpobj \defeq
\{\rho \mid \liminf_{n\in \N} \sum_{j=0}^{n-1}\reward(\rho_e(j)) \ge 0\}$.
\end{itemize}
An objective $\formula$ is called \emph{shift invariant} in $\mdp$
if for every run
$\rho'\rho$ in $\mdp$ with some finite prefix $\rho'$ we have
$\rho'\rho \in \denotationof{\formula}{} \Leftrightarrow \rho \in \denotationof{\formula}{}$.
An objective is called a \emph{shift invariant objective} if it is shift invariant in every MDP\@.
$\liminfppobj$ and $\liminfmpobj$ are shift invariant objectives,
but $\liminftpobj$ is not.
Also $\liminfppobj$ is more general than co-B\"uchi.
(The special case of integer transition rewards coincides with co-B\"uchi,
since rewards $\le -1$ and accepting states can be encoded into each other.)

\paragraph{Strategy Classes.}
Strategies are in general  \emph{randomized} (R) in the sense that they take values in $\dist(\states)$.
A strategy~$\zstrat$ is \emph{deterministic} (D) if $\zstrat(\rho)$ is a Dirac
distribution for all $\rho$.
General strategies can be \emph{history dependent} (H), while others are
restricted by the size or type of memory they use, see below.
We consider certain classes of strategies:
\begin{itemize}
\item
A strategy $\zstrat$ is  \emph{memoryless}~(M) (also called \emph{positional})
if it can be implemented with a memory  of size~$1$.
We may view
M-strategies as functions $\zstrat: \zstates \to \dist(\states)$.
\item
A strategy~$\zstrat$ is \emph{finite memory}~(F) if
there exists a finite memory~$\memory$ implementing~$\zstrat$.
Hence FR stands for finite memory randomized.
\item
A \emph{step counter (SC) strategy} bases decisions only on
the current state and the number of steps taken so far, i.e., it uses
an unbounded integer counter that gets incremented by $1$ in every step.
Such strategies are also called \emph{Markov strategies}~\cite{Puterman:book}.
\item
\emph{$k$-bit Markov strategies} use $k$ extra bits of general purpose memory
in addition to a step counter~\cite{KMST2020c}.
\item
A \emph{reward counter (RC) strategy} uses infinite memory, but only in
the form of a counter that always contains the sum of all transition rewards
seen so far.
\item
A \emph{step counter + reward counter strategy} uses both a step counter and a reward counter.
\end{itemize}
Step counters and reward counters are very restricted forms of
memory, since the memory update is not directly under the control of the
player. These counters merely record an aspect of the partial run.

\paragraph{Memory and strategies.}
Here we give a formal definition how strategies use memory.
Let $\memory$ be a countable set of memory modes, and let $\tau: \memory\times \states \to \dist(\memory\times \states)$ be a function that meets the
following two conditions: for all modes $\memconf \in \memory$,
\begin{itemize}
	\item for all controlled states~$\state\in \zstates$,
	the distribution $\tau(\memconf,\state)$ is   over
	$\memory \times \{\state'\mid \state \transition{} \state'\}$.
	\item for all random states~$\state \in \rstates$,
        and $\state' \in \states$, we have $\sum_{\memconf'\in \memory} \tau(\memconf,\state)(\memconf',\state')=P(\state)(\state')$.
\end{itemize}

\noindent
The function~$\tau$ together with an initial memory mode~$\memconf_0$
induce a strategy~$\zstrat_{\tau}$
as follows.
Consider the Markov chain with the set~$\memory \times \states$ of states
and the  probability function~$\tau$.
A sequence $\rho=s_0 \cdots s_i$ corresponds to a set
$H(\rho)=\{(\memconf_0,s_0) \cdots (\memconf_i,s_i) \mid \memconf_0,\ldots, \memconf_i\in \memory\}$
of runs in this Markov chain.
Each $\rho s \in \state_0 \states^{*} \zstates$
induces a
probability distribution~$\mu_{\rho \state}\in \dist(\memory)$,
the  probability of   being in  state~$(\memconf,s)$
conditioned on having  taken some partial run
from~$H(\rho s)$.
We define~$\zstrat_{\tau}$ such that
$\zstrat_{\tau}(\rho \state)(\state')=\sum_{\memconf,\memconf'\in \memory} \mu_{\rho \state}(\memconf) \tau(\memconf,\state)(\memconf',\state') $
for all $\rho \state\in \states^{*} \zstates$ and all $\state' \in \states$.

We say that a strategy $\zstrat$ can be \emph{implemented} with
memory~$\memory$ if there exist~$\memconf_0 \in \memory$
and $\tau$ such that  $\zstrat_{\tau}=\zstrat$.

\paragraph{Optimal and $\eps$-optimal Strategies.}
Given an objective~$\formula$, the value of state~$s$ in an MDP~$\mdp$, denoted by
$\valueof{\mdp,\formula}{s}$, is the supremum probability of achieving~$\formula$.
 Formally, $\valueof{\mdp,\formula}{s} \eqdef\sup_{\sigma \in \Sigma} \probm_{\mdp,\state,\zstrat}(\formula)$ where $\Sigma$ is the set of all strategies.
For $\eps\ge 0$ and state~$s\in\states$, we say that a strategy is \emph{$\eps$-optimal} from $s$
if $\probm_{\mdp,\state,\zstrat}(\formula) \geq \valueof{\mdp,\formula}{s} -\eps$.
A $0$-optimal strategy is called \emph{optimal}.
An optimal strategy is \emph{almost-surely winning} if $\valueof{\mdp,\formula}{s} = 1$.
Considering an MD strategy as a function $\zstrat: \zstates \to \states$ and $\eps\ge 0$, $\zstrat$ is \emph{uniformly} $\eps$-optimal  (resp.~uniformly optimal) if it is $\eps$-optimal (resp.~optimal) from \emph{every} $s\in S$.

\bigskip
\paragraph{MDP variants.}
In order to show our results, we will sometimes use derived MDPs.
Given an MDP $\mdp$, we define three different MDPs $S(\mdp)$, $R(\mdp)$ and $A(\mdp)$.
These new MDPs will be used in order to reduce objectives to $\liminfppobj$
in a simpler setting with the step and/or reward counter encoded into the states.

\begin{defi}\label{def:encodestep}
Let $\mdp$ be an MDP with a given initial state $s_{0}$.
We construct the MDP $S(\mdp) \eqdef (S', \zstates', \rstates', \longrightarrow_{S(\mdp)}, P')$
that encodes the step counter into the states as follows:
\begin{itemize}
\item The state space of $S(\mdp)$ is
$S' \eqdef \{ (s,n) \mid s \in S \text{ and } n \in \N \}$.
Note that $S'$ is countable.
We write $s_{0}'$ for the initial state $(s_{0},0)$.
\item $\zstates' \eqdef \{ (s,n) \in S' \mid s \in \zstates \}$
and $\rstates' \eqdef S' \setminus \zstates'$.
\item The set of transitions in $S(\mdp)$ is
\[ \longrightarrow_{S(\mdp)} \eqdef
\{
\left( (s,n),(s',n+1) \right) \mid (s,n),(s',n+1) \in S',
 s \longrightarrow_{\mdp} s'
\}.
\]
\item $P': \rstates' \to \mathcal{D}(S')$ is defined such that
\[
P'(s,n)(s',m) \eqdef
    \begin{cases}
    P(s)(s') & \text{ if } (s,n) \longrightarrow_{S(\mdp)} (s',m) \\
    0 & \text{ otherwise }
    \end{cases}
\]

\item The reward $r((s,n) \longrightarrow_{S(\mdp)} (s',n+1)) \eqdef r(s \longrightarrow_{\mdp} s')$.
\end{itemize}

\noindent
By labeling the state with the path length from $s_0$, we effectively encode a
step counter into the MDP $S(\mdp)$.
\end{defi}

\begin{lem}\label{steptopoint}
Let $\mdp$ be an MDP with initial state $s_{0}$. Then given an MD strategy $\sigma'$ in $S(\mdp)$ attaining $c \in [0,1]$ for
$\liminfppobj$ from $(s_{0},0)$, there exists a strategy
$\sigma$ attaining $c$ for $\liminfppobj$ in $\mdp$ from $s_{0}$ which uses the same memory as $\sigma'$ plus a step counter.
\end{lem}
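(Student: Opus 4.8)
The plan is to read off from the memoryless strategy $\sigma'$ a step counter strategy $\sigma$ in $\mdp$ that makes the same choices, and then to transport the success probability of $\sigma'$ back to $\mdp$ along the canonical correspondence between runs. Concretely, for a partial run $\rho = s_0 e_0 s_1 \cdots s_n \in \pRuns{\mdp}{s_0}$ (which has taken exactly $n$ transitions, so its step counter value is $n$) I define $\sigma(\rho) \defeq \sigma'((s_n,n))$, reading the Dirac choice $\sigma'((s_n,n))$, which selects a unique successor $(s',n+1)$, as the choice of $s'$ in $\mdp$. Note that $(s_n,n) \in \zstates'$ iff $s_n \in \zstates$, so $\sigma'$ is defined exactly where $\sigma$ needs to be. Since $\sigma(\rho)$ depends only on the last state $s_n$ and on $n$, this $\sigma$ is implementable with a step counter (more generally, with whatever memory $\sigma'$ uses plus a step counter; here $\sigma'$ is MD, so a step counter alone suffices).

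Next I set up the bijection $\Phi \colon \Runs{\mdp}{s_0} \to \Runs{S(\mdp)}{(s_0,0)}$ given by $\Phi(s_0 s_1 s_2 \cdots) \defeq (s_0,0)(s_1,1)(s_2,2)\cdots$. This is well defined and bijective because every transition of $S(\mdp)$ increments the second coordinate by exactly $1$, so every run from $(s_0,0)$ has the form $(s_0,0)(s_1,1)(s_2,2)\cdots$ for a unique run $s_0 s_1 s_2 \cdots$ of $\mdp$. By the last clause of \Cref{def:encodestep} the reward of the $i$-th transition is preserved, i.e. $\reward(\Phi(\rho)_e(i)) = \reward(\rho_e(i))$ for all $i$; hence $\rho \in \liminfppobj$ if and only if $\Phi(\rho) \in \liminfppobj$.

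It then remains to show that $\Phi$ is measure preserving, i.e. that $\probm_{\mdp,s_0,\sigma}(\Phi^{-1}(E)) = \probm_{S(\mdp),(s_0,0),\sigma'}(E)$ for every measurable $E$. By the uniqueness part of Carath\'eodory's theorem it suffices to verify this on cylinders. The preimage under $\Phi$ of a cylinder $(s_0,0)\cdots(s_n,n)\,\Runs{S(\mdp)}{(s_n,n)}$ is exactly the cylinder $s_0 \cdots s_n\,\Runs{\mdp}{s_n}$, while cylinders whose second coordinates do not match the step count have empty preimage and measure $0$ on both sides. Comparing the two cylinder measures factor by factor: at a random state $s_i$ both factors equal $P(s_i)(s_{i+1}) = P'((s_i,i))((s_{i+1},i+1))$ by the definition of $P'$, and at a controlled state $s_i$ both factors agree because $\sigma$ was defined to copy $\sigma'$ at $(s_i,i)$; hence the products coincide. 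Combining measure preservation with the objective equivalence of the previous step gives
\[
\probm_{\mdp,s_0,\sigma}(\liminfppobj)
= \probm_{S(\mdp),(s_0,0),\sigma'}(\liminfppobj)
= c,
\]
so $\sigma$ attains $c$ for $\liminfppobj$ from $s_0$, as required.

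The only genuinely delicate point is the measure-preservation claim: one must check that $\Phi$ really is a bijection onto all of $\Runs{S(\mdp)}{(s_0,0)}$ and that the one-step factors of the two cylinder measures align at every position, in particular that the step index $i$ is correctly matched with the second coordinate of the $S(\mdp)$-state. Once the cylinder identity is established, extending it to all Borel events is routine via Carath\'eodory, and the passage to the $\liminf$ point-payoff event is immediate from the reward preservation of $\Phi$.
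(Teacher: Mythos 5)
Your proof is correct and takes essentially the same route as the paper's: define $\sigma$ to mimic $\sigma'$ by keeping the step counter in memory instead of in the state, then transfer the attainment value along the canonical correspondence between runs of $\mdp$ and runs of $S(\mdp)$. The paper leaves the measure-theoretic bookkeeping implicit (asserting only that the point-reward sequences coincide), whereas you verify the run bijection and cylinder-by-cylinder measure preservation explicitly via Carath\'eodory's uniqueness; this is a more detailed write-up of the same argument, not a different one.
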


\begin{proof}
Let $\sigma'$ be an MD strategy in $S(\mdp)$ attaining $c \in [0,1]$ for
$\liminfppobj$ from $(s_{0},0)$.
We define a strategy $\sigma$ on $\mdp$ from $s_0$ that uses the same memory
as $\sigma'$ plus a step counter. Then $\sigma$ plays on $\mdp$ exactly like
$\sigma'$ plays on $S(\mdp)$, keeping the step counter in its memory instead
of in the state.
I.e., at a given state $s$ and step counter value $n$, $\sigma$ plays exactly as $\sigma'$
plays in state $(s,n)$
By our construction of $S(\mdp)$ and the definition of $\sigma$,
the sequences of point rewards seen by $\sigma'$ in runs on $S(\mdp)$
coincide with the sequences of point rewards seen by $\sigma$ in runs in $\mdp$.
Hence we obtain
$\probm_{S(\mdp),(s_0,0),\sigma'}(\liminfppobj)
= \probm_{\mdp,s_0,\sigma}(\liminfppobj)$
\end{proof}

\begin{defi}\label{def:encodereward}
Let $\mdp$ be an MDP\@. From a given initial state $s_{0}$,
the reward level in each state $s \in S$ can be any of the
countably many values $r_{1}, r_{2}, \dots$
corresponding to the rewards accumulated along all the possible paths
leading to $s$ from $s_{0}$.
We construct the MDP $R(\mdp) \eqdef (S', \zstates', \rstates', \longrightarrow_{R(\mdp)}, P')$
that encodes the reward counter into the state as follows:
\begin{itemize}
\item
The state space of $R(\mdp)$ is
$S' \eqdef \{ (s,r) \mid s \in S, r \in \mathbb{R} \text{ is a reward level attainable at } s \}$.
Note that $S'$ is countable.
We write $s_{0}'$ for the initial state $(s_{0},0)$.
\item
$\zstates' \eqdef \{ (s,r) \in S' \mid s \in \zstates \}$
and $\rstates' \eqdef S' \setminus \zstates'$.
\item
The set of transitions in $R(\mdp)$ is
\begin{align*}
\longrightarrow_{R(\mdp)} \eqdef
\{ &
\left( (s,r),(s',r') \right) \mid (s,r),(s',r') \in S', \\
& s \longrightarrow s' \text{ in } \mdp \text{ and }  r' \eqdef
r+r(s \to s')
\}.
\end{align*}

\item $P': \rstates' \to \mathcal{D}(S')$ is defined such that
\[
P'(s,r)(s',r') \eqdef
    \begin{cases}
    P(s)(s') & \text{ if } (s,r) \longrightarrow_{R(\mdp)} (s',r') \\
    0 & \text{ otherwise }
    \end{cases}
\]

\item The reward for taking transition $ (s,r) \longrightarrow (s',r')$ is $r'$.
\end{itemize}
\end{defi}

\noindent
By labeling transitions in $R(\mdp)$ with the state encoded total reward of the
target state, we ensure that the
point rewards in $R(\mdp)$ correspond exactly to the total rewards in $\mdp$.

\begin{restatable}{lem}{lemmatotaltopoint}\label{totaltopoint}
Let $\mdp$ be an MDP with initial state $s_{0}$. Then given an MD
(resp.\ Markov) strategy $\sigma'$ in $R(\mdp)$ attaining $c \in [0,1]$ for
$\liminfppobj$ from $(s_{0},0)$, there exists a strategy
$\sigma$ attaining $c$ for $\liminftpobj$ in $\mdp$ from $s_{0}$ which uses the same memory as $\sigma'$ plus a reward counter.
\end{restatable}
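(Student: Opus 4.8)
The plan is to mirror exactly the structure of the proof of \Cref{steptopoint}, transplanting a strategy from the reward-annotated MDP $R(\mdp)$ back to $\mdp$ while carrying the reward level in the strategy's memory instead of in the state. Given the MD (resp.\ Markov) strategy $\sigma'$ on $R(\mdp)$, I would define $\sigma$ on $\mdp$ to maintain, as part of its memory, a reward counter whose value after a partial run $\rho$ equals $\sum_{j} \reward(\rho_e(j))$, the total reward accumulated so far. By \Cref{def:encodereward} this value is precisely a reward level attainable at the current state $s$, so the pair $(s,r)$ is a genuine state of $R(\mdp)$. I would then set $\sigma$ to play at $(s,r)$ exactly as $\sigma'$ plays at the state $(s,r)$ of $R(\mdp)$. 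In the Markov case, $\sigma'$ additionally depends on the step count, which $\sigma$ also tracks; hence the memory used by $\sigma$ is that of $\sigma'$ plus a reward counter, as claimed.

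The core of the argument is a correspondence between runs of the two processes. I would observe that $R(\mdp)$ differs from $\mdp$ only by the bookkeeping annotation $r$, which is a deterministic function of the partial run (namely the running total reward), and that transition probabilities in $R(\mdp)$ copy those of $\mdp$ via $P'$. Consequently there is a measure-preserving bijection between runs of $\mdp$ from $s_0$ under $\sigma$ and runs of $R(\mdp)$ from $(s_0,0)$ under $\sigma'$: a run $s_0 s_1 s_2 \cdots$ in $\mdp$ corresponds to the run $(s_0,r_0)(s_1,r_1)(s_2,r_2)\cdots$ in $R(\mdp)$ where $r_n = \sum_{j=0}^{n-1}\reward(s_j \to s_{j+1})$.

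The key observation tying the two objectives together is the reward relabeling in \Cref{def:encodereward}: the point reward of the transition $(s,r)\to(s',r')$ in $R(\mdp)$ is defined to be $r'$, the \emph{accumulated} total reward of the target. Therefore, along the corresponding pair of runs, the sequence of point rewards seen in $R(\mdp)$ is exactly $r_1, r_2, r_3, \dots$, i.e.\ the sequence of partial sums $\left\{\sum_{j=0}^{n-1}\reward(\rho_e(j))\right\}_{n\ge 1}$ of the total payoffs in $\mdp$. Hence $\liminf_n$ of the $R(\mdp)$ point payoffs is $\ge 0$ if and only if $\liminf_n$ of the $\mdp$ total payoffs is $\ge 0$, so the measure-preserving correspondence yields
\[
\probm_{\mdp,s_0,\sigma}(\liminftpobj) = \probm_{R(\mdp),(s_0,0),\sigma'}(\liminfppobj) = c.
\]

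I expect the only genuine subtlety — and thus the main point to verify carefully — to be the off-by-one / initial-term alignment between the point-payoff sequence in $R(\mdp)$ (which starts at the first transition and reads the target's accumulated reward) and the total-payoff sequence in $\mdp$ (whose $n$-th term sums the first $n$ rewards). Since the objectives are $\liminf$ conditions, any fixed finite shift or the inclusion/exclusion of the initial term $r_0=0$ is immaterial to whether the $\liminf$ is $\ge 0$, so this does not affect the conclusion; but it is the one place where the bookkeeping must be stated precisely. Everything else is the same routine transfer of a state-encoded counter into strategy memory as in \Cref{steptopoint}.
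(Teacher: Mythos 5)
Your proposal is correct and follows essentially the same route as the paper's own proof: transplant $\sigma'$ to $\mdp$ by keeping the reward level in a counter in memory rather than in the state, and observe that under the induced run correspondence the point rewards in $R(\mdp)$ are exactly the accumulated total rewards in $\mdp$, so the attainment for $\liminfppobj$ in $R(\mdp)$ equals that for $\liminftpobj$ in $\mdp$. Your extra care about the measure-preserving bijection and the off-by-one alignment of the sequences (immaterial for a $\liminf$) only makes explicit what the paper leaves implicit.
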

\begin{proof}
Let $\sigma'$ be an MD (resp.\ Markov) strategy in $R(\mdp)$ attaining $c \in [0,1]$ for
$\liminfppobj$ from $(s_{0},0)$.
We define a strategy $\sigma$ on $\mdp$ from $s_0$ that uses the same memory
as $\sigma'$ plus a reward counter. Then $\sigma$ plays on $\mdp$ exactly like
$\sigma'$ plays on $R(\mdp)$, keeping the reward counter in its memory instead
of in the state.
I.e., at a given state $s$ (and step counter value $m$, in case $\sigma'$ was a
Markov strategy) and reward level $r$, $\sigma$ plays exactly as $\sigma'$
plays in state $(s,r)$ (and step counter value $m$, in case $\sigma'$ was a
Markov strategy).
By our construction of $R(\mdp)$ and the definition of $\sigma$,
the sequences of point rewards seen by $\sigma'$ in runs on $R(\mdp)$
coincide with the sequences of total rewards seen by $\sigma$ in runs in $\mdp$.
Hence we obtain
$\probm_{R(\mdp),(s_0,0),\sigma'}(\liminfppobj)
= \probm_{\mdp,s_0,\sigma}(\liminftpobj)$
as required.
\end{proof}

\begin{defi}\label{def:encodeam}
Given an MDP $\mdp$ with initial state $s_{0}$, we define the new MDP
$A(\mdp)$ that encodes the mean payoffs of partial runs of $\mdp$ into the seen transition
rewards in $A(\mdp)$.
To this end, the states of $A(\mdp)$ encode both the step counter and the
total reward of the run so far. However, the transition rewards in $A(\mdp)$
reflect the \emph{mean} payoff, i.e., the total reward divided by the number
of steps.

We construct $A(\mdp) \eqdef (S', \zstates', \rstates', \longrightarrow_{A(\mdp)}, P')$ as follows:
\begin{itemize}
\item
The state space of $A(\mdp)$ is
\[
S' \eqdef \{(s,n,r) \mid s \in S, n \in \mathbb{N}
\text{ and } r \in \!\mathbb{R}\, \text{ is a reward level attainable at $s$ at step
$n$}\}
\]
Note that $S'$ is countable.
We write $s_{0}'$ for the initial state $(s_{0},0,0)$ of $A(\mdp)$.
\item
$ \zstates' \eqdef
\{(s,n,r) \in S' \mid s \in \zstates \}$ and
$\rstates' \eqdef S' \setminus \zstates'$.
\item
The set of transitions in $A(\mdp)$ is
\begin{align*}
\longrightarrow_{A(\mdp)} \eqdef
\{ &
\left( (s,n,r),(s',n+1,r') \right) \mid \\
& (s,n,r),(s',n+1,r') \in S',\\
& s \longrightarrow s' \text{ in } \mdp \text{ and }
r'= r+r(s \rightarrow s')
\}.
\end{align*}
\item $P': \rstates' \to \mathcal{D}(S')$ is defined such that
\[
P'(s,n,r)(s',n',r') \eqdef
    \begin{cases}
    P(s)(s') & \text{if } (s,n,r) \!\rightarrow_{A(\mdp)}\! (s',n',r') \\
    0 & \!\!\text{otherwise}
    \end{cases}
\]

\item The reward for taking transition $(s,n,r) \longrightarrow (s',n',r')$ is $r' / n'$,
i.e., the transition reward is the \emph{mean} payoff of the partial run so far.
\end{itemize}
\end{defi}


\begin{lem}\label{meantopoint}
Let $\mdp$ be an MDP with initial state $s_{0}$.
Then given an MD strategy $\sigma'$ in $A(\mdp)$ attaining $c \in [0,1]$ for
$\liminfppobj$ from $(s_{0},0,0)$, there exists a strategy $\sigma$ attaining
$c$ for $\liminfmpobj$ in $\mdp$ from $s_{0}$ which uses just a reward counter
and a step counter.
\end{lem}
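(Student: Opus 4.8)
The plan is to follow exactly the template of \Cref{steptopoint} and \Cref{totaltopoint}, since by \Cref{def:encodeam} the MDP $A(\mdp)$ bundles both a step counter and a reward counter into its state space $(s,n,r)$. Given the MD strategy $\sigma'$ on $A(\mdp)$, I would define $\sigma$ on $\mdp$ to mimic $\sigma'$ move for move, maintaining in its memory both the current step count $n$ and the accumulated total reward $r$ (which together with the current state $s$ constitute precisely the $A(\mdp)$-state $(s,n,r)$). Concretely, whenever $\sigma$ is at state $s$ with step-counter value $n$ and reward-counter value $r$, it plays the choice that $\sigma'$ makes at $(s,n,r)$; since $\sigma'$ is memoryless this is well defined and requires of $\sigma$ nothing beyond the step counter and the reward counter, as claimed.

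The heart of the argument is the observation that a run $(s_0,0,0)(s_1,1,R_1)(s_2,2,R_2)\cdots$ in $A(\mdp)$ and its projection $s_0 s_1 s_2 \cdots$ in $\mdp$ carry the same probabilities under $\sigma'$ and $\sigma$ respectively (by the construction of $P'$ and of $\sigma$), and that the point payoffs along the former equal the mean payoffs along the latter. Indeed, by \Cref{def:encodeam} the reward on the $n$-th transition of $A(\mdp)$ is $R_{n+1}/(n+1)$, where $R_m \defeq \sum_{j=0}^{m-1} r(\rho_e(j))$ is the total reward of the length-$m$ prefix in $\mdp$. Hence the sequence of point payoffs seen in $A(\mdp)$ is exactly $\{R_m/m\}_{m \ge 1}$, whose $\liminf$ is by definition the mean payoff $\liminf$ in $\mdp$. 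Thus a run satisfies $\liminfppobj$ in $A(\mdp)$ if and only if its projection satisfies $\liminfmpobj$ in $\mdp$, which yields
\[
\probm_{A(\mdp),(s_0,0,0),\sigma'}(\liminfppobj) = \probm_{\mdp,s_0,\sigma}(\liminfmpobj),
\]
so $\sigma$ attains the same value $c$ for $\liminfmpobj$ from $s_0$.

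The only point requiring a little care is the index bookkeeping: the transition reward $r'/n'$ in $A(\mdp)$ divides by the target step count $n' = n+1$, so the $n$-th point payoff in $A(\mdp)$ is the $(n+1)$-th mean payoff in $\mdp$, and the undefined $m=0$ mean term is harmlessly skipped. Since a single index shift does not affect a $\liminf$, the equivalence of the two objectives is immediate, and I do not expect any genuine obstacle beyond this routine matching.
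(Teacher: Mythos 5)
Your proposal is correct and matches the paper's proof, which simply states that the argument is analogous to that of \Cref{totaltopoint}: define $\sigma$ to mimic the MD strategy $\sigma'$ on $A(\mdp)$, keeping the pair (step count, accumulated reward) in memory rather than in the state, and observe that point payoffs in $A(\mdp)$ coincide with mean payoffs in $\mdp$, so the attainment probabilities are equal. Your extra care about the index shift (the $n$-th point payoff in $A(\mdp)$ being the $(n{+}1)$-th partial mean in $\mdp$) is a harmless refinement the paper leaves implicit.
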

\begin{proof}
The proof is very similar to that of \Cref{totaltopoint}.
\end{proof}

\bigskip
\paragraph{Sums and products.}


In our proofs we will use the following basic properties of sums and products
(see, e.g.,~\cite{Bromwich:1955}).

\begin{prop}\label{prop:product-sum}
Given an infinite sequence of real numbers $a_n$ with $0 \le a_n < 1$, we
have
\[
\prod_{n=1}^\infty (1-a_n) > 0 \quad\Leftrightarrow\quad \sum_{n=1}^\infty a_n
< \infty.
\]
and the ``$\Rightarrow$'' implication holds even for the weaker assumption $0 \le a_n \le 1$.
\end{prop}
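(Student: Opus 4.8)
The plan is to take logarithms and reduce the statement to the convergence of the series $\sum \log(1-a_n)$, controlled by the two elementary inequalities $1-x \le e^{-x}$ and (on a suitable tail) $\log(1-x) \ge -2x$. First I would observe that since $0 \le a_n < 1$, each factor $1-a_n$ lies in $(0,1]$, so the partial products $P_N := \prod_{n=1}^N (1-a_n)$ form a nonincreasing sequence in $[0,1]$; hence $\lim_N P_N$ exists, and ``$\prod_{n=1}^\infty(1-a_n) > 0$'' means precisely that this limit is strictly positive. This reduces everything to deciding whether the monotone limit is zero or positive.

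For the ``$\Rightarrow$'' direction I would argue by contraposition. Using $1-x \le e^{-x}$, which holds for every real $x$, we get $P_N \le \exp(-\sum_{n=1}^N a_n)$. If $\sum_{n=1}^\infty a_n = \infty$ then the right-hand side tends to $0$, forcing $\lim_N P_N = 0$. Since this uses only $1-a_n \le e^{-a_n}$, no strict upper bound on $a_n$ is needed, which is exactly why the implication survives under the weaker hypothesis $0 \le a_n \le 1$: a single factor equal to $0$ merely makes the product vanish, consistently with the claim.

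For the ``$\Leftarrow$'' direction, assume $\sum_{n=1}^\infty a_n < \infty$. Then $a_n \to 0$, so there is an $N$ with $a_n \le 1/2$ for all $n \ge N$. On $[0,1/2]$ one has $\log(1-x) \ge -2x$: the function $f(x) = \log(1-x) + 2x$ satisfies $f(0)=0$ and $f'(x) = (1-2x)/(1-x) \ge 0$, so $f$ is nonnegative there. Consequently $\sum_{n \ge N} \log(1-a_n) \ge -2 \sum_{n \ge N} a_n > -\infty$, whence the tail product satisfies $\prod_{n \ge N}(1-a_n) \ge \exp(-2\sum_{n\ge N} a_n) > 0$. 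The remaining finite product $\prod_{n=1}^{N-1}(1-a_n)$ is a product of strictly positive numbers (this is where the strict bound $a_n < 1$ is used), so the full product is positive.

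I expect nothing deep to be at stake here; the only delicate point is bookkeeping around the boundary case $a_n = 1$, which is harmless for ``$\Rightarrow$'' but genuinely breaks ``$\Leftarrow$'', explaining why the full equivalence is stated only for $a_n < 1$. The main obstacle is therefore purely expository: stating the two one-sided inequalities cleanly and tracking exactly where strictness is and is not needed.
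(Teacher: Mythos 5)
Your proof is correct, and it takes a somewhat different route from the paper's. Both arguments pass to logarithms, but the paper handles the two directions simultaneously via an asymptotic comparison: it notes $\lim_{x\to 0}\frac{-\ln(1-x)}{x}=1$ (by L'H\^{o}pital) and invokes the limit comparison test, so that $\sum_n -\ln(1-a_n)$ and $\sum_n a_n$ converge or diverge together, after disposing of the case $a_n \not\to 0$ as trivial. You instead split the equivalence into two one-sided estimates: $1-x \le e^{-x}$ gives the ``$\Rightarrow$'' direction by contraposition (and, as you note, this inequality needs no strict bound on $a_n$, which cleanly explains why that implication survives under $0 \le a_n \le 1$), while $\ln(1-x) \ge -2x$ on $[0,\tfrac12]$, verified by a derivative computation, gives ``$\Leftarrow$'' on a tail. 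What your version buys is that it is more elementary (no L'H\^{o}pital, no limit comparison test) and quantitative: it yields the explicit bound $\prod_{n\ge N}(1-a_n) \ge \exp\bigl(-2\sum_{n\ge N}a_n\bigr)$, which in fact also gives an immediate proof of the paper's companion statement on tail products (\Cref{prop:tail-product}). What the paper's version buys is symmetry and brevity: one comparison argument settles both directions at once, at the cost of quoting two standard calculus facts. Your bookkeeping of where strictness $a_n<1$ is needed (only in the finite head product for ``$\Leftarrow$'') matches the paper's case analysis exactly.
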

\begin{proof}
If $a_n=1$ for any $n$ then the ``$\Rightarrow$'' implication is vacuously
true, but the ``$\Leftarrow$'' implication does not hold in general.
In the following we assume $0 \le a_n < 1$.

In the case where $a_n$ does not converge to zero, the property is trivial.
In the case where $a_n \rightarrow 0$, it is shown
by taking the logarithm of the product and using the limit comparison test as follows.

Taking the logarithm of the product gives the series
\[
\sum_{n=1}^{\infty} \ln(1 - a_n)
\]
whose convergence (to a finite number $\le 0$) is equivalent to the positivity of the product.
It is also equivalent to the convergence (to a number $\ge 0$) of its negation
$\sum_{n=1}^{\infty} -\ln(1 - a_n)$.
But observe that (by L'H\^{o}pital's rule)
\[
\lim_{x \rightarrow 0} \frac{-\ln(1-x)}{x} = 1.
\]
Since $a_n \rightarrow 0$ we have
\[
\lim_{n \rightarrow \infty} \frac{-\ln(1-a_n)}{a_n} = 1.
\]
By the limit comparison test, the series
$\sum_{n=1}^{\infty} -\ln(1 - a_n)$ converges if and only if the series
$\sum_{n=1}^\infty a_n$ converges.
\end{proof}

\begin{prop}\label{prop:tail-product}
Given an infinite sequence of real numbers $a_n$ with $0 \le a_n \le 1$,
\[
\prod_{n=1}^\infty a_n > 0 \quad\Rightarrow\quad \forall\eps>0\,\exists N.\, \prod_{n=N}^\infty a_n \ge (1-\eps).
\]
\end{prop}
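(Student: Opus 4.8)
The plan is to exploit the fact that a convergent infinite product factors as a finite prefix times its tail. Write $P \defeq \prod_{n=1}^\infty a_n$ and assume $P > 0$. First I would observe that no factor can vanish: if $a_k = 0$ for some $k$, then every partial product with index at least $k$ equals $0$, forcing $P = 0$, a contradiction. Hence $a_n > 0$ for all $n$, and in particular each finite prefix $Q_N \defeq \prod_{n=1}^{N-1} a_n$ is strictly positive.

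Next I would record the two properties of the partial products that make the argument work. Because $0 \le a_n \le 1$, the sequence $(Q_N)_N$ is non-increasing and bounded below by $0$, so it converges; by the definition of the infinite product, its limit is exactly $P$. Monotonicity then also gives $Q_N \ge P$ for every $N$. Since $Q_N > 0$, the finite prefix can be split off, giving, for every $N$,
\[
\prod_{n=N}^\infty a_n = \frac{P}{Q_N},
\]
where the right-hand side is well-defined and lies in $(0,1]$.

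It then remains only to take the limit. As $N \to \infty$ we have $Q_N \to P > 0$, hence $P / Q_N \to 1$, and since $Q_N \ge P$ this ratio approaches $1$ from below. Therefore, given any $\eps > 0$, there is an $N$ with $P/Q_N \ge 1 - \eps$, i.e.\ $\prod_{n=N}^\infty a_n \ge 1 - \eps$, as required. I do not expect a genuine obstacle here: the only point needing care is the justification that the product may legitimately be separated into a finite prefix $Q_N$ and a tail, which is exactly why the first step rules out any zero factor; everything else is routine monotone-convergence bookkeeping.
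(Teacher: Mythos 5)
Your proof is correct, but it takes a different route from the paper's. The paper argues through logarithms: from $\prod_{n=1}^\infty a_n > 0$ it deduces $\sum_{n=1}^\infty \ln(a_n) > -\infty$, uses the fact that tails of a convergent series vanish to find $N$ with $\sum_{n=N}^\infty \ln(a_n) \ge -\delta$, and exponentiates back, choosing $\delta = -\ln(1-\eps)$. You instead stay entirely at the level of products: after ruling out zero factors, you use that the partial products $Q_N$ are non-increasing (here is where $a_n \le 1$ enters) and converge to $P > 0$, split the product as $\prod_{n=N}^\infty a_n = P/Q_N$, and let $N \to \infty$ so that the ratio tends to $1$ from below. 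Your argument is the more elementary of the two, avoiding logarithms and any appeal to series convergence; the paper's choice is natural in context because the immediately preceding proposition (relating $\prod(1-a_n) > 0$ to $\sum a_n < \infty$) is also proved by taking logarithms, so the same machinery is reused. Both proofs are complete and rigorous; the one point your write-up rightly flags and handles --- that the factorization $\prod_{n=N}^\infty a_n = P/Q_N$ needs $Q_N > 0$, hence no factor may vanish --- is exactly the analogue of the paper's opening reduction to the case $a_n > 0$.
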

\begin{proof}
  If $a_n=0$ for any $n$ then the property is vacuously true.
  In the following we assume $a_n >0$.
  Since $\prod_{n=1}^\infty a_n > 0$, by taking the logarithm we obtain
  $\sum_{n=1}^\infty \ln(a_n) > -\infty$.
  Thus for every $\delta>0$ there exists an $N$
  s.t.\ $\sum_{n=N}^\infty \ln(a_n) \ge -\delta$.
  By exponentiation we obtain
  $\prod_{n=N}^\infty a_n \ge \exp(-\delta)$.
  By picking $\delta = -\ln(1-\eps)$ the result follows.
\end{proof}

\section{Point Payoff}\label{sec:pointpayoff}
\begin{table*}[hbtp]
\centering
\begin{tabular}{|ll||c|c|}
\hline
\multicolumn{2}{|l||}{Point Payoff}                                        & $\eps$-optimal & Optimal \\ \hline
\multicolumn{1}{|l|}{\multirow{2}{*}{Finitely branching}}   & Upper Bound & Det(Positional)~\ref{finpointpayoff} & Det(Positional)~\ref{finoptupperpp} \\ \cline{2-4}
\multicolumn{1}{|l|}{}                                      & Lower Bound & n/a  & n/a\\ \hline
\multicolumn{1}{|l|}{\multirow{2}{*}{Infinitely branching}} & Upper Bound & Det(SC)~\ref{infpointpayoff} & Det(SC)~\ref{infoptupperpp} \\ \cline{2-4}
\multicolumn{1}{|l|}{}                                      & Lower Bond  & $\neg$Rand(F)~\ref{infbranchsteplower} & $\neg$Rand(F)~\ref{infbranchsteplower} \\ \hline
\end{tabular}
\caption{Strategy complexity of $\eps$-optimal/optimal
strategies for the point payoff objective in
infinitely/finitely branching MDPs.
}%
\label{table:pointpayoff}
\end{table*}

\subsection{Upper Bounds}
In this section we show that
for \emph{finitely branching} MDPs, there exist $\eps$-optimal MD
strategies for $\liminfppobj$.
Whereas for \emph{infinitely branching} MDPs, a step counter suffices in
order to achieve $\liminfppobj$ $\eps$-optimally.

These two very technical results will form the basis of our upper bound analysis of $\liminfmpobj$ and $\liminftpobj$ in later sections.

\begin{lemC}[{\cite[Lemma 23]{KMST2020c}}]%
\label{acyclicsafety}
For every acyclic MDP with a safety objective and every $\eps > 0$,
there exists an MD strategy that is uniformly $\eps$-optimal.
\end{lemC}

\begin{thmC}[{\cite[Theorem 7]{KMST:Transient-arxiv}}]%
\label{epsilontooptimal}
Let $\mdp=\mdptuple$ be a countable MDP, and let $\formula$ be an event that is shift invariant in~$\mdp$.
Suppose for every $s \in S$ there exist $\eps$-optimal MD strategies for~$\formula$.
Then:
\begin{enumerate}
\item There exist uniform $\eps$-optimal MD strategies for~$\formula$.
\item There exists a single MD strategy that is optimal from every state that has an optimal strategy.
\end{enumerate}
\end{thmC}

\subsubsection{Finitely Branching Case}\label{subsec:upper-fb}

In order to prove the main result of this section, we use the following result
on the $\transience$ objective, which is the set of runs that do not visit any state infinitely often.
Given an MDP $\mdp=\mdptuple$,
$\transience \eqdef \bigwedge_{s \in S} \eventually \always \neg s$.

\begin{thmC}[{\cite[Theorem 8]{KMST:Transient-arxiv}}]%
\label{epstransience}
In every countable MDP there exist uniform $\eps$-optimal MD strategies for
$\transience$.
\end{thmC}

\begin{restatable}{lem}{lemfbavoid}\label{lem:fbavoid}
Given a finitely branching countable MDP $\mdp$, a subset $T \subseteq \to$ of
the transitions and a state $\state$, we have
\[
\valueof{\mdp,\neg\eventually T}{\state} < 1
\ \Rightarrow\ \exists k \in \N.\,
\valueof{\mdp,\neg\eventually^{\le k} T}{\state} < 1
\]
i.e., if it is impossible to completely avoid $T$ then
there is a bounded threshold $k$ and a fixed nonzero
chance of seeing $T$ within $\le k$ steps, regardless of the strategy.
\end{restatable}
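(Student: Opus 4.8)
The plan is to prove the contrapositive: assuming that hitting $T$ cannot be forced within any bounded horizon, I show that $T$ can be avoided forever with value $1$. Concretely, suppose $\valueof{\mdp,\neg\eventually^{\le k}T}{\state}=1$ for every $k\in\N$; I then prove $\valueof{\mdp,\neg\eventually T}{\state}=1$, which contradicts $\valueof{\mdp,\neg\eventually T}{\state}<1$.

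For each $k$ define $W_k\defeq\{t\in S\mid \valueof{\mdp,\neg\eventually^{\le k}T}{t}=1\}$ and $W_\infty\defeq\bigcap_{k\in\N}W_k$. First I would note that the finite-horizon value $\valueof{\mdp,\neg\eventually^{\le k}T}{t}$ is computed by backward induction over the states reachable from $t$ in at most $k$ steps; this set is \emph{finite} because $\mdp$ is finitely branching, so the value is attained and obeys the following recursion (with $W_0=S$ and $k\ge 1$): a controlled state $t$ lies in $W_k$ iff it has a successor $t'$ with $t\transition t'$, $(t,t')\notin T$, and $t'\in W_{k-1}$; a random state $t$ lies in $W_k$ iff every successor $t'$ with $P(t)(t')>0$ satisfies $(t,t')\notin T$ and $t'\in W_{k-1}$. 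In particular the sets are nested, $W_0\supseteq W_1\supseteq\cdots$, and by hypothesis $\state\in W_\infty$.

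The crux is to show that from every state of $W_\infty$ the player can remain in $W_\infty$ forever using only transitions outside $T$. For a controlled state $t\in W_\infty$, membership in every $W_k$ supplies, for each $k\ge 1$, a successor reached by a non-$T$ transition that lies in $W_{k-1}$. Since $t$ has only finitely many successors, by the pigeonhole principle a single successor $t^*$ (via a fixed non-$T$ transition) serves infinitely many $k$, and by nestedness $t^*\in W_\infty$. This is a König-style extraction and is exactly where finite branching is indispensable; it is the one step I expect to be the real obstacle. For a random state $t\in W_\infty$, the recursion already guarantees that all positive-probability transitions are outside $T$ and lead back into $W_\infty$.

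Consequently the MD strategy that, in each controlled state of $W_\infty$, moves to such a successor $t^*$ never leaves $W_\infty$ and never takes a transition in $T$; hence from $\state\in W_\infty$ it avoids $T$ surely, with probability $1$. This gives $\valueof{\mdp,\neg\eventually T}{\state}=1$, contradicting the assumption and establishing the lemma. Once finite branching supplies the pigeonhole argument, the remaining checks (attainment of the finite-horizon values and soundness of the constructed strategy) are routine.
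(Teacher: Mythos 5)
Your proposal is correct and follows essentially the same route as the paper's proof: both argue the contrapositive and use finite branching (via a pigeonhole/maximum argument over the finitely many successors) to find, at each controlled state, a successor that retains the ``value $1$ for every horizon $k$'' property, yielding a strategy that surely avoids $T$. Your version merely makes explicit what the paper leaves implicit --- the nested winning regions $W_k$, the backward-induction recursion, and the requirement that the chosen transitions themselves lie outside $T$.
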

\begin{proof}
If suffices to show that
$\forall k \in \N.\, \valueof{\mdp,\neg\eventually^{\le k} T}{\state} =1$
implies $\valueof{\mdp,\neg\eventually T}{\state} = 1$.
Since $\mdp$ is finitely branching, the state $\state$ has only finitely many
successors $\{\state_1,\dots,\state_n\}$.

Consider the case where $\state$ is a controlled state.
If we had the property $\forall {1 \le i\le n}\,\exists k_i \in \N.\,
\valueof{\mdp,\neg\eventually^{\le k_i} T}{\state_{i}} < 1$
then we would have
$\valueof{\mdp,\neg\eventually^{\le k} T}{\state} < 1$
for $k=(\max_{1 \le i \le n} k_i)+1$
which contradicts our assumption.
Thus there must exist an $i \in \{1,\dots,n\}$ with
$\forall k \in \N.\, \valueof{\mdp,\neg\eventually^{\le k} T}{\state_i} =1$.
We define a strategy $\zstrat$ that chooses the successor state $s_i$ when in
state $\state$.

Similarly, if $\state$ is a random state, we must have
$\forall k \in \N.\, \valueof{\mdp,\neg\eventually^{\le k} T}{\state_i} =1$
for all its successors $s_i$.

By using our constructed strategy $\zstrat$, we obtain
$\probm_{\mdp,\state,\zstrat}(\neg\eventually T)=1$ and this implies
$\valueof{\mdp,\neg\eventually T}{\state}=1$ as required.
\end{proof}

\begin{thm}\label{finpointpayoff}
Consider a finitely branching MDP $\mdp =\mdptuple$ with initial state $s_{0}$ and a $\liminfppobj$ objective. Then there exist $\eps$-optimal MD strategies.
\end{thm}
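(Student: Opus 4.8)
The plan is to express $\liminfppobj$ as a countable intersection of co-B\"uchi conditions and then build a single memoryless deterministic strategy that meets all of them up to a controlled total error, with finite branching supplying the quantitative control. First I would rewrite
\[
\liminfppobj = \bigcap_{m \in \N} \eventually\always\neg T_m,
\qquad T_m \eqdef \{e \mid r(e) \le -\tfrac1m\},
\]
so that membership means precisely: for every precision level $m$, the run eventually stops taking transitions of reward $\le -1/m$. Each conjunct $\eventually\always\neg T_m$ is a co-B\"uchi objective. Since $\liminfppobj$ is shift invariant, \Cref{epsilontooptimal} lets me aim only at pointwise $\eps$-optimal MD strategies and then harvest uniformity (and a single optimal MD strategy) for free; the real content is constructing the MD witness.

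Next I would isolate the two ways a run can win. Either it eventually settles into a region it never leaves using only nonnegative transitions, so that $\always\{e \mid r(e)\ge 0\} \subseteq \liminfppobj$ reduces this case to a safety objective, or it escapes transiently while the negative rewards still seen shrink to $0$ (so that every level $T_m$ is eventually avoided even though negative transitions persist forever). The intended MD strategy, from each state, heads toward the deepest level of safety it can afford without dropping $\valueof{\mdp,\liminfppobj}{s}$ by more than its allotted share of the error, handing off the transient regime to the MD transience strategy of \Cref{epstransience}, which guarantees that play actually leaves each finite region and hence makes genuine progress through the levels instead of stalling. The MD strategies realizing the individual safety targets are drawn from \Cref{acyclicsafety}, applied to the acyclic portions of play singled out once transience is fixed.

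Finite branching enters through \Cref{lem:fbavoid}: whenever a state cannot almost surely avoid $T_m$, there is a uniform bound $k$ and a fixed positive probability of meeting $T_m$ within $k$ steps. This turns the soft requirement ``eventually avoid $T_m$'' into a hard quantitative one and lets me bound, level by level, the probability that the constructed strategy violates level $m$ after it has committed to that level. Choosing a summable budget $\eps_m$ with $\sum_m \eps_m \le \eps$ then keeps the total failure probability at most $\eps$, which is what makes a single memoryless choice at each state viable across all infinitely many levels at once.

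The hard part will be exactly this simultaneity: one positional choice per state must discharge all levels $m$ together, even though the thresholds $1/m \to 0$ force the underlying safe sub-MDPs to shrink, and the transient escape regime keeps producing negative (though vanishing) rewards. Reconciling a fixed memoryless choice with this shrinking family, while gluing the safety and transience components so that the summed error stays below $\eps$, is the main obstacle, and it is precisely where finite branching is indispensable: in the infinitely branching setting the analogous positional statement fails and a step counter is provably required, as witnessed by \Cref{infpointpayoff} and \Cref{infbranchsteplower}.
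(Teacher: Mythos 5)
Your plan assembles the right ingredients---the decomposition of $\liminfppobj$ into countably many safety levels, the dichotomy between settling into the nonnegative-reward safe region and escaping transiently, the appeal to \Cref{epstransience} and \Cref{lem:fbavoid}, and a summable error budget---but it is missing the one construction that makes the hand-off to the transience strategy sound, and this is precisely the crux. Transience by itself says nothing about rewards: in the original MDP a transient run can see reward $-1$ infinitely often, so ``handing off the transient regime to the MD transience strategy'' does not discharge a single level $T_m$. The paper closes exactly this hole by surgery on the MDP: after fixing a (history-dependent) $\eps$-optimal strategy $\sigma$, it extracts bounds $n_i$ so that $\sigma$ reaches $\text{Safety}_i$ within $n_i$ steps except with probability $\eps\cdot 2^{-i}$, and then redirects every transition of reward $<-2^{-i}$ that originates \emph{outside} the finite set $\text{Bubble}_{n_i}(s_0)$ into a fresh losing sink $\perp$, simultaneously for all $i$. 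This truncation $\mdp''$ costs at most $2\eps$ of value, and in $\mdp''$ one proves that under \emph{every} strategy $\probm(\liminfppobj)=\probm(\transience)$: transient runs automatically satisfy all levels, because beyond $\text{Bubble}_{n_i}(s_0)$ only rewards $\ge -2^{-i}$ survive, and conversely (via \Cref{lem:fbavoid} and finite branching) any run visiting some state infinitely often almost surely sees rewards bounded away from $0$ infinitely often, hence fails $\liminfppobj$. Only after this equivalence does \Cref{epstransience} yield the MD strategy. The simultaneity problem you correctly flag as ``the hard part'' is resolved by this reduction and by nothing in your proposal; naming the obstacle is not the same as overcoming it.

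Two secondary misalignments. First, your invocation of \Cref{acyclicsafety} is misplaced: the given MDP is not acyclic, and fixing a transience strategy does not make it so; that lemma is what the paper uses in the \emph{infinitely} branching case (\Cref{infpointpayoff}), where the step-counter encoding $S(\mdp)$ is acyclic---and there the output is a Markov strategy, not an MD one. In the finitely branching case the safety component needs no such lemma, since finitely branching safety admits optimal MD strategies outright \cite{Puterman:book,KMSW2017}; this is how the paper handles $S_{\text{safe}}$ via $\sigma_{\text{safe}}$. Second, \Cref{lem:fbavoid} is not used to budget the constructed strategy's failure probability level by level, as you propose; it is used in the opposite direction, to show that \emph{non-transient} runs in the truncated MDP almost surely falsify $\liminfppobj$, which is what upgrades the reduction to $\transience$ from a one-sided inclusion to the equality that the final error accounting (yielding a $3\eps$-optimal MD strategy) actually requires.
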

\begin{proof}
Let $\eps >0$.
We begin by partitioning the state space into two sets, $S_{\text{safe}}$ and
$S \setminus S_{\text{safe}}$.
The set $S_{\text{safe}}$ is the subset of states which is surely winning for
the safety objective of only using transitions with non-negative rewards
(i.e., never using transitions with negative rewards at all).
Since $\mdp$ is finitely branching, there exists a uniformly optimal MD
strategy $\sigma_{\text{safe}}$ for this safety objective~\cite{Puterman:book,KMSW2017}.

We construct a new MDP $\mdp'$ by modifying $\mdp$. We create a gadget
$G_{\text{safe}}$ composed of a sequence of new controlled states
$x_{0}, x_{1}, x_{2}, \dots$ where all  transitions $x_{i} \to x_{i+1}$
have reward $0$. Hence any run entering $G_{\text{safe}}$ is winning for $\liminfppobj$.
We insert $G_{\text{safe}}$ into $\mdp$ by replacing all incoming transitions
to $S_{\text{safe}}$ with transitions that lead to $x_{0}$. The idea behind
this construction is that when playing in $\mdp$, once you hit a state in
$S_{\text{safe}}$, you can win surely by playing an optimal MD strategy for
safety. So we replace $S_{\text{safe}}$ with the surely winning gadget
$G_{\text{safe}}$.
Thus
\begin{equation}\label{eq:valuesmmprime}
\valueof{\mdp,\liminfppobj}{s_0} = \valueof{\mdp',\liminfppobj}{s_0}
\end{equation}
and if an $\eps$-optimal MD strategy exists in $\mdp$,
then there exists a corresponding one in $\mdp'$, and vice-versa.

We now consider a general (not necessarily MD) $\eps$-optimal strategy $\sigma$
for $\liminfppobj$ from $s_{0}$ on $\mdp'$, i.e.,
\begin{equation}\label{eq:fbsigma-eps-opt}
\probm_{\mdp',s_0,\sigma}(\liminfppobj) \ge \valueof{\mdp',\liminfppobj}{s_0} - \eps.
\end{equation}
Define the safety objective $\text{Safety}_{i}$ which is the objective of
never seeing any point rewards $< -2^{-i}$.
This then allows us to characterize $\liminfppobj$ in terms of safety objectives.
\begin{equation}\label{eq:fbliminfppissafety}
\liminfppobj = \bigcap_{i \in \mathbb{N}} \eventually(\text{Safety}_{i}).
\end{equation}

Now we define the safety objective $\text{Safety}_{i}^{k} \eqdef \eventually^{\leq k}( \text{Safety}_{i} )$ to attain $\text{Safety}_{i}$ within at most $k$ steps. This allows us to write
\begin{equation}\label{eq:fbsafetyisunion}
\eventually(\text{Safety}_{i}) = \bigcup_{k \in \mathbb{N}} \text{Safety}_{i}^{k}.
\end{equation}
By continuity of measures from above we get
\[
0 = \mathcal{P}_{\mdp',s_0,\sigma} \left( \eventually(\text{Safety}_{i}) \cap \bigcap_{k \in \N} \overline{\text{Safety}^{k}_{i}} \right)
  = \lim_{k \to \infty} \mathcal{P}_{\mdp',s_0,\sigma} \left( \eventually(\text{Safety}_{i}) \cap \overline{\text{Safety}^{k}_{i}}
\right).
\]
Hence for every $i \in \mathbb{N}$ and
$\eps_{i} \defeq \eps \cdot 2^{-i}$
there exists $n_{i}$ such that
\begin{equation}\label{eq:fbni}
\mathcal{P}_{\mdp',s_0,\sigma} \left( \eventually(\text{Safety}_{i}) \cap
\overline{\text{Safety}^{n_{i}}_{i}} \right) \leq \eps_{i}.
\end{equation}

Now we can show the following claim.
\begin{restatable}{clm}{claimfblosetwoeps}\label{claim:fblose2eps}
\[
\probm_{\mdp',s_0,\sigma} \left( \bigcap_{i \in \N} \text{Safety}^{n_{i}}_{i}  \right)
\geq
\valueof{\mdp',\liminfppobj}{s_0} - 2\eps.
\]
\end{restatable}
\begin{proof}
\begin{align*}
& \mathcal{P}_{\mdp',\state_0,\sigma} \left( \bigcap_{i \in \N} \text{Safety}^{n_{i}}_{i} \right)
\\
& \geq
\mathcal{P}_{\mdp',\state_0,\sigma} \left(
\bigcap_{k \in \N}  \eventually(\text{Safety}_{k}) \cap \bigcap_{i \in \N} \text{Safety}^{n_{i}}_{i}
\right) \\
& =  \mathcal{P}_{\mdp',\state_0,\sigma} \left( \left( \bigcap_{k \in \N}  \eventually(\text{Safety}_{k}) \cap \bigcap_{i \in \N} \text{Safety}^{n_{i}}_{i} \right)
\cup \left(\overline{\bigcap_{k \in \N}  \eventually(\text{Safety}_{k})} \cap \bigcap_{k \in \N}  \eventually(\text{Safety}_{k})\right)
\right) \\
& = \mathcal{P}_{\mdp',\state_0,\sigma} \left(  \bigcap_{k \in \N}  \eventually(\text{Safety}_{k}) \cap
\left(\bigcap_{i \in \N} \text{Safety}^{n_{i}}_{i} \cup \overline{\bigcap_{k \in \N}  \eventually(\text{Safety}_{k})} \right)
\right) \\
& = 1 - \mathcal{P}_{\mdp',\state_0,\sigma} \left(  \overline{ \bigcap_{k \in \N}  \eventually(\text{Safety}_{k})} \cup
\left(\overline{\bigcap_{i \in \N} \text{Safety}^{n_{i}}_{i}} \cap \bigcap_{k \in \N}  \eventually(\text{Safety}_{k})\right)
\right) \\
& \geq 1 - \mathcal{P}_{\mdp',\state_0,\sigma} \left(  \overline{ \bigcap_{k \in \N}  \eventually(\text{Safety}_{k})} \right)
 -
\mathcal{P}_{\mdp',\state_0,\sigma} \left(\overline{\bigcap_{i \in \N} \text{Safety}^{n_{i}}_{i}} \cap \bigcap_{k \in \N}  \eventually(\text{Safety}_{k})\right) \\
& = \mathcal{P}_{\mdp',\state_0,\sigma} \left(\liminfppobj\right)
 -
\mathcal{P}_{\mdp',\state_0,\sigma} \left(\overline{\bigcap_{i \in \N} \text{Safety}^{n_{i}}_{i}} \cap \bigcap_{k \in \N}  \eventually(\text{Safety}_{k})\right)
 \hspace{2cm} \text{by~\eqref{eq:fbliminfppissafety}}\\
& \geq \valueof{\mdp',\liminfppobj}{s_0} - \eps -
\mathcal{P}_{\mdp',\state_0,\sigma} \left(\bigcup_{i \in \N} \overline{\text{Safety}^{n_{i}}_{i}} \cap \bigcap_{k \in \N}  \eventually(\text{Safety}_{k})\right)
 \hspace{1.57cm} \text{by~\eqref{eq:fbsigma-eps-opt}}\\
& \geq \valueof{\mdp',\liminfppobj}{s_0} - \eps
 - \sum_{i \in \N} \mathcal{P}_{\mdp',\state_0,\sigma} \left( \overline{\text{Safety}^{n_{i}}_{i}} \cap \bigcap_{k \in \N} \eventually(\text{Safety}_{k})
\right) \\
& \geq \valueof{\mdp',\liminfppobj}{s_0} - \eps - \sum_{i \in \N} \eps_{i}
 \hspace{7.45cm} \text{by~\eqref{eq:fbni}}\\
& = \valueof{\mdp',\liminfppobj}{s_0} - 2 \eps  \qedhere
\end{align*}
\end{proof}

Since $\mdp'$ does not have an implicit step counter, we use the following
construction to approximate one.
We define the distance $d(s)$ from $s_{0}$ to a state $s$ as the length of the shortest path from $s_{0}$ to $s$.
Let $\text{Bubble}_{n}(s_{0}) \defeq \{s \in S \mid d(s) \leq n\}$
be those states that can be reached within $n$ steps from $s_{0}$.
Since $\mdp'$ is finitely branching, $\text{Bubble}_{n}(s_{0})$ is finite for
every fixed $n$.
Let
\[
\text{Bad}_{i} \defeq \{  t \in \longrightarrow_{\mdp'} \mid t =
s \longrightarrow_{\mdp'} s',
s \notin \text{Bubble}_{n_{i}}(s_{0}) \text{ and } r(t) < -2^{-i}
\}
\]
be the set of transitions originating outside $\text{Bubble}_{n_{i}}(s_{0})$ whose reward is too negative.
Thus a run from $s_0$ that satisfies $\text{Safety}_{i}^{n_i}$ cannot
use any transition in $\text{Bad}_{i}$, since (by definition of
$\text{Bubble}_{n_{i}}(s_{0})$) they would come after the $n_i$-th step.

Now we create a new state $\perp$ whose only outgoing transition is a self
loop with reward $-1$.
We transform $\mdp'$ into $\mdp''$ by re-directing all transitions in
$\text{Bad}_{i}$ to the new target state $\perp$ for every $i$.
Notice that any run visiting $\perp$ must be losing for $\liminfppobj$ due to
the negative reward on the self loop, but it must also be losing for $\transience$ because of the self loop.

We now show that the change from $\mdp'$ to $\mdp''$
has decreased the value of $s_0$ for $\liminfppobj$ by at most $2\eps$, i.e.,
\begin{equation}\label{eq:fblose2eps}
\valueof{\mdp'',\liminfppobj}{s_0} \ge \valueof{\mdp',\liminfppobj}{s_0} - 2\eps.
\end{equation}
\Cref{eq:fblose2eps} follows from the following steps.
\begin{align*}
\valueof{\mdp'',\liminfppobj}{s_0}
& \ge \probm_{\mdp'',s_0,\sigma} \left( \bigcap_{i \in \N} \text{Safety}^{n_{i}}_{i}  \right) \\
&
= \probm_{\mdp',s_0,\sigma} \left( \bigcap_{i \in \N} \text{Safety}^{n_{i}}_{i}  \right)
& \ \text{by def.\ of $\mdp''$}\\
& \ge \valueof{\mdp',\liminfppobj}{s_0} - 2\eps & \ \text{by \Cref{claim:fblose2eps}}
\end{align*}

In the next step
we argue that under \emph{every} strategy
$\sigma''$ from $s_0$ in $\mdp''$ the attainment for $\liminfppobj$ and
$\transience$ coincide, i.e.,
\begin{restatable}{clm}{eqliminfpptransience}\label{eqliminfpptransience}
\[
\forall \sigma''.\, \probm_{\mdp'',s_0,\sigma''}(\liminfppobj) = \probm_{\mdp'',s_0,\sigma''}(\transience).
\]
\end{restatable}

\begin{proof}
First we show that
\begin{equation}\label{eq:transience-part-liminf}
\transience \subseteq \liminfppobj \quad \text{in $\mdp''$}.
\end{equation}
Let $\rho \in \transience$ be a transient run. Then $\rho$ can never visit the
state $\perp$. Moreover, $\rho$ must eventually leave every finite set
forever. In particular $\rho$ must satisfy $\eventually \always (\neg \text{Bubble}_{n_i}(s_0))$
for every $i$, since $\text{Bubble}_{n_i}(s_0)$ is finite, because $\mdp''$ is
finitely branching. Thus $\rho$ must either fall into $G_{\text{safe}}$,
in which case it satisfies $\liminfppobj$, or for every $i$,
$\rho$ must eventually leave $\text{Bubble}_{n_i}(s_0)$ forever.
By definition of $\text{Bubble}_{n_i}(s_0)$ and $\mdp''$, the run
$\rho$ must eventually stop seeing rewards $< -2^{-i}$ for every $i$.
In this case $\rho$ also satisfies $\liminfppobj$. Thus~\eqref{eq:transience-part-liminf}.

Secondly, we show that
\begin{equation}\label{eq:liminf-part-transience}
\forall \sigma''.\ \probm_{\mdp'',s_0,\sigma''}(\liminfppobj \cap \overline{\transience})=0.
\end{equation}
i.e., except for a null-set, $\liminfppobj$ implies $\transience$ in $\mdp''$.

Let $\sigma''$ be an arbitrary strategy from $s_0$ in $\mdp''$ and
$\playset$ be the set of all runs induced by it.
For every $s \in S$, let $\playset_{s} \defeq \{\rho \in \playset \mid \rho \text{ satisfies } \always \eventually (s) \}$
be the set of runs seeing state $s$ infinitely often.
In particular, any run $\rho \in \playset_{s}$ is not transient.
Indeed, $\overline{\transience} = \bigcup_{s \in S} \playset_{s}$.
We want to show that for every state $s \in S$ and strategy $\sigma''$
\begin{equation}\label{eq:liminf-part-transience-s}
\probm_{\mdp'',s_0,\sigma''}(\liminfppobj \cap \playset_{s}) = 0.
\end{equation}
Since all runs seeing a state in $G_{\text{safe}}$ are transient, every
$\playset_{s}$ with $s \in G_{\text{safe}}$ must be empty. Similarly, every run
seeing $\perp$ is losing for $\liminfppobj$ by construction.
Hence we have~\eqref{eq:liminf-part-transience-s}
for any state $s$ where $s = \perp$ or $s \in G_{\text{safe}}$.

Now consider $\playset_{s}$ where $s$ is neither in $G_{\text{safe}}$ nor
$\perp$.
Let $T_{\it neg} \eqdef \{t \in \longrightarrow\ \mid\ r(t) < 0\}$
be the subset of transitions with negative rewards in $\mdp''$.

We now show that $\valueof{\mdp'',\neg\eventually T_{\it neg}}{s} < 1$
by assuming the opposite and deriving a contradiction.
Assume that $\valueof{\mdp'',\neg\eventually T_{\it neg}}{s} = 1$.
The objective $\neg\eventually T_{\it neg}$ is a safety objective.
Thus, since $\mdp''$ is finitely branching, there exists a strategy
from $s$ that surely avoids $T_{\it neg}$ (always pick an optimal
move)~\cite{Puterman:book,KMSW2017}.
(This does not hold in infinitely branching MDPs where optimal moves might
not exist.)
However, by construction of $\mdp''$, this implies that
$s \in G_{\text{safe}}$. Contradiction.
Thus $\valueof{\mdp'',\neg\eventually T_{\it neg}}{s} < 1$.

Since $\mdp''$ is finitely branching, we can apply \Cref{lem:fbavoid}
and obtain that there exists a threshold $k_s$ such that
$\valueof{\mdp'',\neg\eventually^{\le k_s} T_{\it neg}}{s} < 1$.
Therefore $\delta_s \eqdef 1 - \valueof{\mdp'',\neg\eventually^{\le k_s}
T_{\it neg}}{s} >0$.
Thus, under every strategy, upon visiting $s$ there is a chance $\ge \delta_s$
of seeing a transition in $T_{\it neg}$ within the next $\le k_s$ steps.
Moreover, the subset $T^s_{\it neg} \subseteq T_{\it neg}$
of transitions that can be reached
in $\le k_s$ steps from $s$ is finite, since $\mdp''$ is finitely branching.
The finiteness of $T^s_{\it neg}$ implies that
the maximum of the rewards in $T^s_{\it neg}$ exists and is still negative, i.e.,
$\ell_s \eqdef \max\{r(t)\ \mid\ t \in T^s_{\it neg}\} < 0$.
(This would not be true for an infinite set, since the $\sup$ over an
infinite set of negative numbers could be zero.)
Let $T_{\le \ell} \eqdef \{t \in \longrightarrow\ \mid\ r(t) \le \ell_s\}$
be the subset of transitions with rewards $\le \ell_s$ in $\mdp''$.

Thus, under \emph{every} strategy, upon visiting $s$ there is a chance $\ge \delta_s$
of seeing a transition in $T_{\le \ell}$ within the next $\le k_s$ steps.

For every state $\state \in \states$,
let $\playset_{\state}^{i} \defeq \{ \rho \in \playset \mid \rho \text{ visits $\state$
at least $i$ times} \}$, so we get $\playset_{\state} = \bigcap_{i \in \N} \playset_{\state}^{i}$.
We obtain
\begin{align*}
& \sup_{\sigma''}\probm_{\mdp'',s_0,\sigma''}(\liminfppobj \cap \playset_{s}) \\
& \le \sup_{\sigma''}\probm_{\mdp'',s_0,\sigma''}(\eventually\always\neg
T_{\le \ell} \cap \playset_{s}) & \text{set inclusion}\\
&
= \sup_{\sigma''}\lim_{n \to \infty}\probm_{\mdp'',s_0,\sigma''}(\eventually^{\le
n}\always\neg T_{\le \ell} \cap \playset_{s}) &  \text{continuity of measures}\\
& \le \sup_{\sigma'''}\probm_{\mdp'',s,\sigma'''}(\always\neg
T_{\le \ell} \cap \playset_{s}) & \text{$s$ visited after $>n$ steps}\\
& = \sup_{\sigma'''} \probm_{\mdp'',s,\sigma'''}(\always\neg
T_{\le \ell} \cap \bigcap_{i \in \N} \playset_{s}^{i})
& \text{def.\ of $\playset_{s}^{i}$} \\
& = \sup_{\sigma'''} \lim_{i \to \infty}\probm_{\mdp'',s,\sigma'''}(\always\neg
T_{\le \ell} \cap \playset_{s}^{i})
&  \text{continuity of measures} \\
& \le \lim_{i \to \infty}(1-\delta_s)^i = 0 & \text{by def.\ of $\playset_{s}^{i}$ and $\delta_s$}
\end{align*}
and thus~\eqref{eq:liminf-part-transience-s}.

From this we obtain
$\probm_{\mdp'',s_0,\sigma''}(\liminfppobj \cap \overline{\transience})=
\probm_{\mdp'',s_0,\sigma''}(\liminfppobj \cap \bigcup_{s \in S} \playset_{s}) \le
\sum_{s \in S} \probm_{\mdp'',s_0,\sigma''}(\liminfppobj \cap \playset_{s})=0$
and thus~\eqref{eq:liminf-part-transience}.

From~\eqref{eq:transience-part-liminf}
and~\eqref{eq:liminf-part-transience}
we obtain that for every $\sigma''$ we have
\begin{align*}
& \probm_{\mdp'',s_0,\sigma''}(\liminfppobj) \\
& = \probm_{\mdp'',s_0,\sigma''}(\liminfppobj \cap \transience) + \probm_{\mdp'',s_0,\sigma''}(\liminfppobj \cap \overline{\transience}) \\
& = \probm_{\mdp'',s_0,\sigma''}(\transience) + 0 \\
& = \probm_{\mdp'',s_0,\sigma''}(\transience)
\end{align*}

and thus \Cref{eqliminfpptransience}.
\end{proof}

By \Cref{epstransience}, there exists a uniformly $\eps$-optimal MD
strategy $\widehat{\sigma}$ from $s_0$ for $\transience$ in $\mdp''$, i.e.,
\begin{equation}\label{eq:transience-eps}
\probm_{\mdp'',s_0,\hat{\sigma}}(\transience) \ge \valueof{\mdp'',\transience}{s_0}
- \eps.
\end{equation}
We construct an MD strategy $\sigma^{*}$ in $\mdp$ which plays
like $\sigma_{\text{safe}}$ in $S_{\text{safe}}$ and plays like $\widehat{\sigma}$ everywhere else.
\begin{align*}
\probm_{\mdp,\state_0,\zstrat^{*}}(\liminfppobj)
&=  \probm_{\mdp',\state_0,\hat{\zstrat}}(\liminfppobj) & \text{def.\ of $\zstrat^{*}$ and $\sigma_{\text{safe}}$}\\
& \ge \probm_{\mdp'',\state_0,\hat{\zstrat}}(\liminfppobj) & \text{new losing sink in $\mdp''$}\\
& = \probm_{\mdp'',\state_0,\hat{\zstrat}}(\transience)  & \text{by \Cref{eqliminfpptransience}}\\
& \ge \valueof{\mdp'',\transience}{s_0} - \eps  & \text{by~\eqref{eq:transience-eps}}\\
& = \valueof{\mdp'',\liminfppobj}{s_0} - \eps   & \text{by \Cref{eqliminfpptransience}}\\
& \ge \valueof{\mdp',\liminfppobj}{s_0} - 2\eps -\eps & \text{by~\eqref{eq:fblose2eps}}\\
& = \valueof{\mdp,\liminfppobj}{s_0} - 3\eps & \text{by~\eqref{eq:valuesmmprime}}
\end{align*}
Hence $\sigma^{*}$ is a $3\eps$-optimal MD strategy for $\liminfppobj$ from
$s_0$ in $\mdp$. Since $\eps$ can be chosen arbitrarily small, the result follows.
\end{proof}

\begin{cor}\label{finoptupperpp}
Given a finitely branching MDP $\mdp$ and initial state $s_{0}$, optimal strategies, where they exist, can be chosen MD for $\liminfppobj$.
\end{cor}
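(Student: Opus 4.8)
The plan is to deduce the corollary directly from the $\eps$-optimal result \Cref{finpointpayoff}, together with the general ``$\eps$-optimal to optimal'' transfer \Cref{epsilontooptimal}, using that $\liminfppobj$ is shift invariant. In other words, all the real work has already been done in \Cref{finpointpayoff}; the corollary is just a matter of feeding its conclusion into the cited transfer theorem.

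First I would note that although \Cref{finpointpayoff} is phrased for a fixed initial state $s_0$, the choice of initial state there is arbitrary: applying it with each $s \in S$ in turn yields that for \emph{every} state $s$ there exists an $\eps$-optimal MD strategy for $\liminfppobj$ from $s$. This is exactly the hypothesis ``for every $s \in S$ there exist $\eps$-optimal MD strategies for $\formula$'' required by \Cref{epsilontooptimal}. Second, I would invoke the observation recorded in the Preliminaries that $\liminfppobj$ is a shift invariant objective, hence in particular shift invariant in the given MDP $\mdp$, which supplies the other hypothesis of \Cref{epsilontooptimal}.

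With both hypotheses of \Cref{epsilontooptimal} in place for $\formula = \liminfppobj$, part~(2) of that theorem yields a single MD strategy that is optimal from every state that has an optimal strategy. In particular, wherever an optimal strategy for $\liminfppobj$ exists (e.g.\ from $s_0$), it can be chosen MD, which is precisely the claim. The only place any care is needed is the verification of these two hypotheses, and neither involves new work: shift invariance is already established for $\liminfppobj$, and the per-state existence of $\eps$-optimal MD strategies is immediate from \Cref{finpointpayoff} by varying the initial state. I therefore do not expect any genuine obstacle beyond correctly assembling these two ingredients.
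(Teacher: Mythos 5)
Your proposal is correct and matches the paper's own proof, which likewise derives the corollary by combining \Cref{finpointpayoff} with \Cref{epsilontooptimal} via the shift invariance of $\liminfppobj$. Your write-up just spells out the verification of the two hypotheses (per-state $\eps$-optimal MD strategies and shift invariance) that the paper leaves implicit.
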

\begin{proof}
Since $\liminfppobj$ is shift invariant, the result follows
from \Cref{finpointpayoff} and \Cref{epsilontooptimal}.
\end{proof}

\begin{rem}\label{rem:pptransience}
The proof of \Cref{finpointpayoff} (and thus also \Cref{finoptupperpp})
uses the result about the $\transience$ objective from \Cref{epstransience}.
This is unavoidable, since, at least for finitely branching MDPs,
\Cref{finpointpayoff} also conversely implies \Cref{epstransience} as follows.

Consider a finitely branching MDP $\mdp =\mdptuple$ with initial state $s_{0}$.
Then we can define a reward structure on $\mdp$ such that $\transience$
and $\liminfppobj$ coincide.
For each transition $t$ from state $s$ to state $s'$ let
$n(t) \eqdef \min \{n \mid s \in \text{Bubble}_{n}(s_{0})\}$
and $r(t) \eqdef -1/n(t)$.
Since $\mdp$ is finitely branching, $\text{Bubble}_{n}(s_{0})$
is finite for every $n$. Transient runs eventually leave every finite set forever.
Thus for all runs starting in $s_0$ we have
$\transience \subseteq \liminfppobj$, because $\lim_{n\to\infty} -1/n =0$.
For the reverse inclusion, consider a non-transient run from $s_0$.
This run visits some state $s$ infinitely often.
Since $\mdp$ is finitely branching, by the Pigeonhole principle, it also
visits some transition $t$ from $s$ infinitely often.
So it infinitely often sees some reward $r(t) < 0$ and thus does not satisfy
$\liminfppobj$. I.e., $\overline{\transience} \subseteq
\overline{\liminfppobj}$.
Now, since $\transience = \liminfppobj$,
\Cref{finpointpayoff} implies \Cref{epstransience} (for the finitely branching case).

However, the connection between $\transience$ and $\liminfppobj$
only holds for finitely branching MDPs.
In infinitely branching MDPs, the result about $\transience$ (\Cref{epstransience})
still holds, but the result for $\liminfppobj$ is different, as shown in
\Cref{infpointpayoff} in the
next section.
\end{rem}

\subsubsection{Infinitely Branching Case}\label{subsec:upper-ib}

In this section we consider infinitely branching MDPs. In this setting, $\eps$-optimal strategies
for $\liminfppobj$ require more memory than in the finitely branching case.
In the following theorem we show
how to obtain $\eps$-optimal deterministic Markov strategies
for $\liminfppobj$.
We do this by deriving $\eps$-optimal MD strategies in $S(\mdp)$ via
a reduction to a safety objective.

\begin{thm}\label{infpointpayoff}
Consider an MDP $\mdp$ with initial state $s_{0}$ and a $\liminfppobj$
objective. For every $\eps >0$ there exist
\begin{itemize}
\item $\eps$-optimal MD strategies in $S(\mdp)$.
\item $\eps$-optimal deterministic Markov strategies in $\mdp$.
\end{itemize}
\end{thm}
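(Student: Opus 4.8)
The plan is to prove the first item and obtain the second one for free: an MD strategy in $S(\mdp)$ is in particular memoryless and deterministic, so by \Cref{steptopoint} it induces a strategy in $\mdp$ that uses only a step counter and is deterministic, i.e.\ a deterministic Markov strategy, attaining the same value for $\liminfppobj$. Since encoding the step counter into the states alters neither the point-reward sequences along runs nor the achievable probabilities, one has $\valueof{S(\mdp),\liminfppobj}{(s_0,0)} \ge \valueof{\mdp,\liminfppobj}{s_0}=:v$ (simulate any $\mdp$-strategy on $S(\mdp)$ via the projection of histories), which is the direction needed for $\eps$-optimality to transfer back to $\mdp$. So everything reduces to producing $\eps$-optimal MD strategies for $\liminfppobj$ in $S(\mdp)$.

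The decisive feature is that $S(\mdp)$ is \emph{acyclic}, since every transition strictly increases the step component; this is what will let me invoke \Cref{acyclicsafety}. First I would fix a near-optimal (not necessarily MD) strategy $\sigma$ in $S(\mdp)$ with $\probm_{S(\mdp),(s_0,0),\sigma}(\liminfppobj)\ge v-\eps$ and reuse the characterisation \eqref{eq:fbliminfppissafety}, namely $\liminfppobj=\bigcap_{i}\eventually(\text{Safety}_i)$. The measure-theoretic truncation behind \eqref{eq:fbni} and \Cref{claim:fblose2eps} is insensitive to branching and applies verbatim to $\sigma$ on $S(\mdp)$: for each $i$ there is a step bound $n_i$ (WLOG non-decreasing in $i$, as enlarging $n_i$ only enlarges $\text{Safety}_i^{n_i}$ and thus preserves the estimate) with $\probm_\sigma\bigl(\eventually(\text{Safety}_i)\setminus\text{Safety}_i^{n_i}\bigr)\le\eps\,2^{-i}$, whence
\[
\probm_{S(\mdp),(s_0,0),\sigma}\Big(\bigcap_{i\in\N}\text{Safety}_i^{n_i}\Big)\ \ge\ v-2\eps .
\]

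The key new observation is that in $S(\mdp)$ the event $\Phi\defeq\bigcap_{i\in\N}\text{Safety}_i^{n_i}$ is itself a \emph{safety} objective. Indeed, for a single run $\text{Safety}_i^{n_i}=\eventually^{\le n_i}(\text{Safety}_i)$ is equivalent to ``every reward seen at a step $\ge n_i$ is $\ge -2^{-i}$'', so $\Phi$ forbids exactly the transitions of $S(\mdp)$ that leave a state $(s,n)$ with $n\ge n_i$ and carry reward $<-2^{-i}$ for some $i$. Because the step count is encoded in the state, this bad set of transitions is well defined, and avoiding it is a safety objective (if a state-based formulation is preferred, redirect the bad transitions into a fresh acyclic losing chain $\perp_0\to\perp_1\to\cdots$, which preserves acyclicity). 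Moreover $\Phi\subseteq\liminfppobj$: on a run in $\Phi$ the rewards are $\ge-2^{-i}$ from step $n_i$ on for every $i$, so the $\liminf$ is $\ge 0$. The displayed bound thus reads $\valueof{S(\mdp),\Phi}{(s_0,0)}\ge v-2\eps$.

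Finally, since $S(\mdp)$ is acyclic and $\Phi$ is a safety objective, \Cref{acyclicsafety} supplies a (uniformly) $\eps$-optimal MD strategy $\sigma^*$ for $\Phi$, and then $\probm_{S(\mdp),(s_0,0),\sigma^*}(\liminfppobj)\ge\probm_{S(\mdp),(s_0,0),\sigma^*}(\Phi)\ge v-3\eps$. As $\eps$ is arbitrary this yields $\eps$-optimal MD strategies for $\liminfppobj$ in $S(\mdp)$, and \Cref{steptopoint} turns $\sigma^*$ into a $3\eps$-optimal deterministic Markov strategy in $\mdp$. I expect the main obstacle to be the verification that $\Phi$ is genuinely a safety objective on $S(\mdp)$, i.e.\ collapsing the countable intersection of \emph{eventually}-safety conditions into a single safety condition; this works precisely because the encoded step counter pins down at which step each threshold $-2^{-i}$ becomes binding, and because the resulting acyclicity of $S(\mdp)$ is exactly the hypothesis of \Cref{acyclicsafety}.
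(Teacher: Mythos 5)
Your proposal is correct and follows essentially the same route as the paper's own proof: truncate $\liminfppobj$ to $\bigcap_{i}\text{Safety}_i^{n_i}$ losing at most $2\eps$, recognise this as a safety objective on the acyclic MDP $S(\mdp)$, apply \Cref{acyclicsafety} to get an $\eps$-optimal MD strategy, and transfer back to $\mdp$ via \Cref{steptopoint}. The only difference is cosmetic: you spell out explicitly (via the forbidden-transitions characterisation) why the intersection is a safety objective, a point the paper asserts without elaboration.
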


\begin{proof}
Let $\eps >0$.
We work in $S(\mdp)$ by encoding the step counter into the states of $\mdp$.
Thus $S(\mdp)$ is an acyclic MDP with implicit step counter and corresponding
initial state $s_0' = (s_0,0)$.

We consider a general (not necessarily MD) $\eps$-optimal strategy $\sigma$
for $\liminfppobj$
from $s_0'$ on $S(\mdp)$, i.e.,
\begin{equation}\label{eq:sigma-eps-opt}
\probm_{S(\mdp),s_0',\sigma}(\liminfppobj) \ge \valueof{S(\mdp),\liminfppobj}{s_0'} - \eps.
\end{equation}
Define the safety objective $\text{Safety}_{i}$ which is the objective of
never seeing any point reward $< -2^{-i}$.
This then allows us to characterize $\liminfppobj$ in terms of safety objectives.
\begin{equation}\label{eq:liminfppissafety}
\liminfppobj = \bigcap_{i \in \mathbb{N}} \eventually(\text{Safety}_{i})
\end{equation}

Now we define the safety objective $\text{Safety}_{i}^{k} \eqdef \eventually^{\leq k}( \text{Safety}_{i} )$ to attain $\text{Safety}_{i}$ within at most $k$ steps. This allows us to write
\begin{equation}\label{eq:safetyisunion}
\eventually(\text{Safety}_{i}) = \bigcup_{k \in \mathbb{N}} \text{Safety}_{i}^{k}.
\end{equation}
By continuity of measures from above we get
\begin{align*}
0 & = \mathcal{P}_{S(\mdp),s_0',\sigma} \left( \eventually(\text{Safety}_{i}) \cap \bigcap_{k \in \N} \overline{\text{Safety}^{k}_{i}}\right)\\
  & = \lim_{k \to \infty} \mathcal{P}_{S(\mdp),s_0',\sigma} \left( \eventually(\text{Safety}_{i}) \cap \overline{\text{Safety}^{k}_{i}}\right).
\end{align*}
Hence for every $i \in \mathbb{N}$ and
$\eps_{i} \defeq \eps \cdot 2^{-i}$
there exists $n_{i}$ such that
\begin{equation}\label{eq:ni}
\mathcal{P}_{S(\mdp),s_0',\sigma} \left( \eventually(\text{Safety}_{i}) \cap
\overline{\text{Safety}^{n_{i}}_{i}} \right) \leq \eps_{i}.
\end{equation}

\begin{restatable}{clm}{claimlosetwoeps}\label{claim:lose2eps}
\[
\probm_{S(\mdp),s_0',\sigma} \left( \bigcap_{i \in \N} \text{Safety}^{n_{i}}_{i}  \right)
\geq
\valueof{S(\mdp),\liminfppobj}{s_0'} - 2 \eps.
\]
\end{restatable}
\begin{proof}
The proof is almost identical to the proof of \Cref{claim:fblose2eps}.
Instead of $\mdp'$ with initial state $\state_0$
we have $S(\mdp)$ with initial state $\state_0'$,
and instead of equations~\eqref{eq:fbliminfppissafety},~\eqref{eq:fbsigma-eps-opt} and~\eqref{eq:fbni} we use the corresponding
equations~\eqref{eq:liminfppissafety},~\eqref{eq:sigma-eps-opt}
and~\eqref{eq:ni}, respectively.
\end{proof}

\ignore{
\begin{proof}
\begin{align*}
& \mathcal{P}_{S(\mdp),s_0',\sigma} \left( \bigcap_{i \in \N} \text{Safety}^{n_{i}}_{i} \right)
\\
& \geq
\mathcal{P}_{S(\mdp),s_0',\sigma} \left(
\bigcap_{k \in \N}  \eventually(\text{Safety}_{k}) \cap \bigcap_{i \in \N} \text{Safety}^{n_{i}}_{i}
\right) \\
& =  \mathcal{P}_{S(\mdp),s_0',\sigma} \left( \left( \bigcap_{k \in \N}  \eventually(\text{Safety}_{k}) \cap \bigcap_{i \in \N} \text{Safety}^{n_{i}}_{i} \right)
\cup \left(\overline{\bigcap_{k \in \N}  \eventually(\text{Safety}_{k})} \cap \bigcap_{k \in \N}  \eventually(\text{Safety}_{k})\right)
\right) \\
& = \mathcal{P}_{S(\mdp),s_0',\sigma} \left(  \bigcap_{k \in \N}  \eventually(\text{Safety}_{k}) \cap
\left(\bigcap_{i \in \N} \text{Safety}^{n_{i}}_{i} \cup \overline{\bigcap_{k \in \N}  \eventually(\text{Safety}_{k})} \right)
\right) \\
& = 1 - \mathcal{P}_{S(\mdp),s_0',\sigma} \left(  \overline{ \bigcap_{k \in \N}  \eventually(\text{Safety}_{k})} \cup
\left(\overline{\bigcap_{i \in \N} \text{Safety}^{n_{i}}_{i}} \cap \bigcap_{k \in \N}  \eventually(\text{Safety}_{k})\right)
\right) \\
& \geq 1 - \mathcal{P}_{S(\mdp),s_0',\sigma} \left(  \overline{ \bigcap_{k \in \N}  \eventually(\text{Safety}_{k})} \right)
 -
\mathcal{P}_{S(\mdp),s_0',\sigma} \left(\overline{\bigcap_{i \in \N} \text{Safety}^{n_{i}}_{i}} \cap \bigcap_{k \in \N}  \eventually(\text{Safety}_{k})\right) \\
& = \mathcal{P}_{S(\mdp),s_0',\sigma} \left(\liminfppobj\right)
 -
\mathcal{P}_{S(\mdp),s_0',\sigma} \left(\overline{\bigcap_{i \in \N} \text{Safety}^{n_{i}}_{i}} \cap \bigcap_{k \in \N}  \eventually(\text{Safety}_{k})\right)
 \hspace{1.5cm} \text{by~\eqref{eq:liminfppissafety}}\\
& \geq \valueof{S(\mdp),\liminfppobj}{s_0'} - \eps -
\mathcal{P}_{S(\mdp),s_0',\sigma} \left(\bigcup_{i \in \N} \overline{\text{Safety}^{n_{i}}_{i}} \cap \bigcap_{k \in \N}  \eventually(\text{Safety}_{k})\right)
 \hspace{1.07cm} \text{by~\eqref{eq:sigma-eps-opt}}\\
& \geq \valueof{S(\mdp),\liminfppobj}{s_0'} - \eps
 - \sum_{i \in \N} \mathcal{P}_{S(\mdp),s_0',\sigma} \left( \overline{\text{Safety}^{n_{i}}_{i}} \cap \bigcap_{k \in \N} \eventually(\text{Safety}_{k})
\right) \\
& \geq \valueof{S(\mdp),\liminfppobj}{s_0'} - \eps - \sum_{i \in \N} \eps_{i}
 \hspace{7.3cm} \text{by~\eqref{eq:ni}}\\
& = \valueof{S(\mdp),\liminfppobj}{s_0'} - 2 \eps
\end{align*}
\end{proof}
}

Let $\formula \eqdef \bigcap_{i \in \N} \text{Safety}^{n_{i}}_{i} \subseteq \liminfppobj$.
It follows from \Cref{claim:lose2eps} that
\begin{equation}\label{eq:lose2eps}
 \valueof{S(\mdp),\formula}{s_0'} \ge
 \valueof{S(\mdp),\liminfppobj}{s_0'} - 2\eps.
\end{equation}
The objective $\formula$
is a safety objective on $S(\mdp)$. Therefore, since $S(\mdp)$ is acyclic,
we can apply \Cref{acyclicsafety} to obtain a uniformly $\eps$-optimal MD
strategy $\sigma'$ for $\formula$. Thus
\begin{align*}
& \probm_{S(\mdp),s_0',\sigma'}(\liminfppobj) \\
& \ge \probm_{S(\mdp),s_0',\sigma'}(\formula)   & \ \text{set inclusion} \\
& \ge \valueof{S(\mdp),\formula}{s_0'} - \eps  & \ \text{$\sigma'$ is $\eps$-opt.}\\
& \ge \valueof{S(\mdp),\liminfppobj}{s_0'} - 3\eps. & \ \text{by~\eqref{eq:lose2eps}}
\end{align*}
Thus $\sigma'$ is a $3\eps$-optimal MD strategy for $\liminfppobj$ in $S(\mdp)$.

By \Cref{steptopoint} this then yields a $3\eps$-optimal Markov strategy for
$\liminfppobj$ from $s_{0}$ in $\mdp$, since runs in $\mdp$ and $S(\mdp)$
coincide wrt.\ $\liminfppobj$.
\end{proof}

The following corollary allows us to re-purpose \Cref{infpointpayoff} to obtain an upper bound for optimal strategies.

\begin{cor}\label{infoptupperpp}
Given an MDP $\mdp$ and initial state $s_{0}$, optimal strategies, where they exist,
can be chosen with just a step counter for $\liminfppobj$.
\end{cor}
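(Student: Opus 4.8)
The plan is to lift the problem to the step-counter-encoded MDP $S(\mdp)$, solve it there with a single memoryless deterministic strategy via \Cref{epsilontooptimal}, and then pull the result back to $\mdp$ using \Cref{steptopoint}. Since $\liminfppobj$ is shift invariant in every MDP (in particular in $S(\mdp)$), and MD strategies in $S(\mdp)$ correspond to step-counter strategies in $\mdp$, this should yield an optimal strategy in $\mdp$ that uses only a step counter whenever one exists at all.

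First I would record that \Cref{infpointpayoff} in fact provides $\eps$-optimal MD strategies from \emph{every} state of $S(\mdp)$, not merely from $(s_0,0)$: its proof only uses that the ambient MDP is acyclic (in order to invoke \Cref{acyclicsafety}), and $S(\mdp)$ is acyclic, so the same argument applies verbatim with any state of $S(\mdp)$ taken as the initial state. This verifies the hypothesis of \Cref{epsilontooptimal} for $S(\mdp)$ and the shift-invariant objective $\liminfppobj$. Its part (2) then yields a single MD strategy $\sigma'$ in $S(\mdp)$ that is optimal from every state of $S(\mdp)$ that admits an optimal strategy.

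Next I would transfer optimality between $\mdp$ and $S(\mdp)$. A general strategy $\sigma$ in $\mdp$ can be simulated in $S(\mdp)$ by forgetting the step-count component of the history (which is redundant, being determined by the length of the partial run); since transitions, probabilities and point rewards are preserved, this gives $\probm_{S(\mdp),(s_0,0),\tilde\sigma}(\liminfppobj)=\probm_{\mdp,s_0,\sigma}(\liminfppobj)$, and hence $\valueof{S(\mdp),\liminfppobj}{(s_0,0)} \ge \valueof{\mdp,\liminfppobj}{s_0}$. The reverse inequality follows from \Cref{infpointpayoff} together with \Cref{steptopoint}, which turn an $\eps$-optimal MD strategy in $S(\mdp)$ into a step-counter strategy in $\mdp$ of equal value; letting $\eps \to 0$ gives $\valueof{\mdp,\liminfppobj}{s_0}=\valueof{S(\mdp),\liminfppobj}{(s_0,0)}$. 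Consequently, if $\mdp$ has an optimal strategy from $s_0$, its simulation is optimal in $S(\mdp)$ from $(s_0,0)$, so $(s_0,0)$ admits an optimal strategy and the strategy $\sigma'$ above is optimal from $(s_0,0)$.

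Finally, applying \Cref{steptopoint} to the MD strategy $\sigma'$, which attains the value $c=\valueof{S(\mdp),\liminfppobj}{(s_0,0)}=\valueof{\mdp,\liminfppobj}{s_0}$ from $(s_0,0)$, produces a strategy $\sigma$ in $\mdp$ that attains $c$ from $s_0$ and uses only a step counter (the extra memory of $\sigma'$ is trivial, as $\sigma'$ is memoryless); thus $\sigma$ is optimal. I expect the main obstacle to be the bookkeeping of the third paragraph: carefully justifying the value equality and the fact that an optimal strategy in $\mdp$ induces one in $S(\mdp)$, i.e., that the step-counter encoding neither creates nor destroys optimal strategies from $(s_0,0)$. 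Once that correspondence is established, the conclusion is a direct concatenation of \Cref{infpointpayoff}, \Cref{epsilontooptimal}, and \Cref{steptopoint}.
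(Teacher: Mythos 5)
Your proposal is correct and follows essentially the same route as the paper: apply \Cref{infpointpayoff} from every state of the acyclic MDP $S(\mdp)$, use shift invariance of $\liminfppobj$ with \Cref{epsilontooptimal} to get a single MD strategy optimal from every state of $S(\mdp)$ admitting an optimal strategy, and pull it back to a step-counter strategy in $\mdp$ via \Cref{steptopoint}. Your third paragraph merely spells out the value-preserving correspondence between strategies in $\mdp$ and $S(\mdp)$ that the paper leaves implicit in its parenthetical "provided that $s_0$ admits any optimal strategy at all."
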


\begin{proof}
We work in $S(\mdp)$ and we
apply \Cref{infpointpayoff} to obtain $\eps$-optimal MD strategies from every
state of $S(\mdp)$.
Since $\liminfppobj$ is a shift invariant objective, \Cref{epsilontooptimal} yields an MD
strategy that is optimal from every state of $S(\mdp)$ that has an optimal
strategy. By \Cref{steptopoint} we can translate
this MD strategy on $S(\mdp)$ back to a Markov strategy in $\mdp$,
which is optimal for $\liminfppobj$ from $s_{0}$ (provided that $s_0$ admits
any optimal strategy at all).
\end{proof}

\subsection{Lower Bounds}

We have just showed that finitely branching $\eps$-optimal $\liminfppobj$ can be achieved MD, as a result no lower bound exists within the scope of our memory considerations.
For the infinitely branching case, we provide a counterexample in which $\liminfppobj$ and co-B\"{u}chi coincide.

\begin{figure}[htbp]
\begin{center}
\scalebox{1.2}{
\begin{tikzpicture}[>=latex',shorten >=1pt,node distance=1.9cm,on grid,auto,
roundnode/.style={circle, draw,minimum size=1.5mm},
squarenode/.style={rectangle, draw,minimum size=2mm},
diamonddnode/.style={diamond, draw,minimum size=2mm}]

\node [squarenode,initial,initial text={}] (s) at(0,0) [draw]{$s$};

\node[roundnode] (r1)  [below right=1.4cm and 1.5cm of s] {$r_1$};
\node[roundnode] (r3)  [draw=none,right=1.3 of r1] {$\cdots$};
\node[roundnode] (r4)  [right=1.3cm of r3] {$r_i$};
\node[roundnode] (r5)  [draw=none,right=1.3cm of r4] {$\cdots$};

\node [squarenode,double,inner sep = 4pt] (t)  [below=1.6cm of r3] {$ t $};

\draw [->] (s) -- ++(1.5,0) -- (r1);
\node[roundnode] (d)  [draw=none,right=2.8 of s] {$\cdots$};
\draw[-] (s) -- (d);
\draw [->] (d) -- ++(1.3,0) -- (r4);
\node[roundnode] (dd)  [draw=none,right=2.8 of d] {$\cdots$};
\draw[-] (d) -- (dd);

\path[->] (r1) edge node [midway,left] {$\frac{1}{2}$} node[near start, right] {$-1$} (t);
\path[->] (r4) edge node [midway,right=.2cm] {$\frac{1}{2^{i}}$} node[near start, left] {$-1$} (t);
\path[->] (r1) edge  node[pos=0.3,below] {$\frac{1}{2}$} (s);
\path[->] (r4) edge [bend left=1] node [pos=0.2,above] {$1-\frac{1}{2^{i}}$} (s);

\draw [->] (t) -- node[midway, above] {$+1$} ++(-2.8,0) -- (s);
\end{tikzpicture}
}
\caption{
This infinitely branching MDP is adapted from~\cite[Figure 3]{KMSW2017} and
augmented with a reward structure.
(A very similar example has been described in~\cite[Example 2]{Sudderth:2020}.)
All of the edges carry reward $0$ except the edges entering $t$ that carry reward $-1$ and the edge from $t$ to $s$ carries reward $+1$.
As a result, entering $t$ necessarily brings the total reward down to $-1$ before resetting it to $0$.
We use a reduction to the co-B\"{u}chi objective
(i.e., visiting $t$ only finitely often) to show that infinite memory is required for almost-sure as well as $\eps$-optimal strategies for $\liminftpobj$ as well as $\liminfppobj$.
}%
\label{infinitebranchtp}
\end{center}
\end{figure}
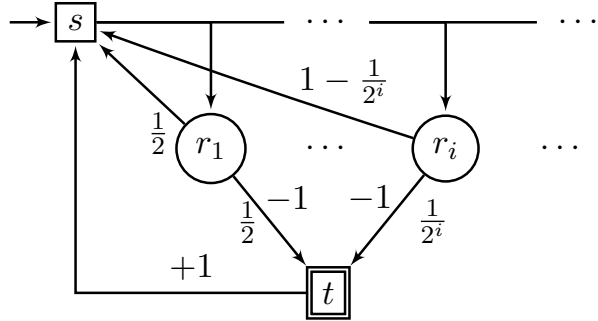

\begin{thm}\label{infbranchsteplower}
There exists an infinitely branching MDP $\mdp$ as in \Cref{infinitebranchtp} with reward implicit in the state and initial state $s$ such that
\begin{itemize}
\item every FR strategy $\sigma$ is such that $\probm_{\mdp, s, \sigma} (\liminfppobj) = 0$
\item there exists an HD strategy $\sigma$ such that $\probm_{\mdp, s, \sigma} (\liminfppobj) = 1$.
\end{itemize}
Hence, optimal (and even almost-surely winning) strategies and $\eps$-optimal
strategies for $\liminfppobj$ require infinite memory
beyond a reward counter.
\end{thm}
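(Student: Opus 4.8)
The plan is to exploit that, with this reward structure, the objective $\liminfppobj$ coincides with the co-B\"uchi objective ``visit $t$ only finitely often''. Indeed, every transition carries reward $0$ except those entering $t$ (reward $-1$) and the single transition $t \to s$ (reward $+1$); hence the $\liminf$ of the point rewards of a run is $\ge 0$ iff the run eventually stops seeing the reward $-1$, i.e.\ visits $t$ only finitely often. Two structural facts drive both directions. First, every run visits $s$ infinitely often: from $s$ the play moves to some $r_i$, from there back to $s$ (possibly via the single detour through $t$), and $t$ has $s$ as its only successor, so every cycle passes through $s$. Second, $s$ is the only controlled state (the $r_i$ are random and $t$ is forced), so a strategy ever makes a genuine choice only at $s$, namely a distribution over $\{r_1,r_2,\dots\}$.

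For the HD strategy witnessing value $1$, I would let $\sigma$ count the visits to $s$ and, on the $k$-th visit, deterministically move to $r_k$. Conditioned on reaching the $k$-th visit to $s$, the play enters $t$ in that round with probability exactly $2^{-k}$, and these per-round events are independent (they are governed by the independent random transitions at the $r_k$). Since $\sum_k 2^{-k} < \infty$, the first Borel--Cantelli lemma gives that almost surely only finitely many rounds visit $t$, so $\probm_{\mdp,s,\sigma}(\liminfppobj)=1$ by the reduction above. This $\sigma$ is deterministic and uses only an unbounded counter of visits to $s$, hence is HD.

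For the lower bound, let $\sigma$ be any FR strategy, implemented with a finite memory set $\memory$. At $s$ the choice depends only on the current memory mode, so $\sigma$ uses at most $\card{\memory}$ distinct distributions $\mu_m$ over $\{r_i\}$. For each such $\mu_m$ the probability of entering $t$ in a round started in mode $m$ is $p_m = \sum_i \mu_m(r_i)\,2^{-i}$, and since $\mu_m$ is a probability distribution on $\{r_i\}$ it places positive mass on some $r_i$, whence $p_m > 0$. As $\memory$ is finite, $p \eqdef \min_m p_m > 0$. Now let $\tau_0 < \tau_1 < \cdots$ be the (almost surely infinite) times at which $s$ is visited and let $A_k$ be the event that $t$ is visited between $\tau_k$ and $\tau_{k+1}$; conditioning on the history $\mathcal{F}_{\tau_k}$ (which fixes the mode $m_k$) gives $\probm(A_k \mid \mathcal{F}_{\tau_k}) = p_{m_k} \ge p$. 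A routine induction with the tower rule then yields $\probm(\bigcap_{k=N}^{M}\overline{A_k}) \le (1-p)^{M-N+1}$, so letting $M \to \infty$ the event ``$t$ visited finitely often'' $= \bigcup_N \bigcap_{k \ge N}\overline{A_k}$ is a countable union of null sets, hence null. Therefore $\probm_{\mdp,s,\sigma}(\liminfppobj) = 0$.

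Finally, to obtain the stronger conclusion that memory ``beyond a reward counter'' is required, I would observe that the running total reward is $0$ at every visit to $s$: each detour through $t$ contributes $-1$ (entering) and $+1$ (on $t \to s$), which cancel. Thus a reward counter carries no information at the only decision state, and a strategy using a reward counter together with finite memory still realizes only finitely many distributions at $s$; the previous paragraph applies verbatim with $p>0$ the minimum over these finitely many distributions. The main obstacle is precisely this lower-bound step: one must convert the finiteness of the memory (equivalently, the uselessness of the reward counter at $s$) into the \emph{uniform} positive escape probability $p$, after which the conditional Borel--Cantelli estimate forces infinitely many visits to $t$ almost surely.
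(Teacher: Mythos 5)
Your proof is correct, and it begins with exactly the same reduction as the paper: in the MDP of \Cref{infinitebranchtp}, a run sees the reward $-1$ precisely when it enters $t$, so $\liminfppobj$ coincides with the co-B\"uchi objective of visiting $t$ only finitely often. Where you diverge is in how that co-B\"uchi result is established. The paper's entire proof at this point is a citation of \cite[Theorem 4]{KMSW2017}, which asserts both bullet points for this very MDP; you instead re-prove it from scratch: the upper bound via the ``play $r_k$ on the $k$-th visit to $s$'' strategy together with the first Borel--Cantelli lemma, and the lower bound by noting that a finite-memory strategy realizes only finitely many distributions $\mu_m$ at the single decision state $s$, hence a uniform per-round escape probability $p=\min_m \sum_i \mu_m(r_i)2^{-i}>0$, which a conditional (tower-rule) Borel--Cantelli estimate converts into almost surely infinitely many visits to $t$. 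Both steps are sound, and your conditioning on $\mathcal{F}_{\tau_k}$ correctly handles the fact that the memory mode at successive visits to $s$ is itself random, including randomized memory updates. What the citation buys the paper is brevity; what your self-contained argument buys is transparency: it isolates exactly where finiteness of memory is used (a minimum over finitely many modes is positive, while your own HD strategy shows the infimum over countably many choices can vanish), and it makes explicit the point the paper leaves implicit in the phrase ``reward implicit in the state,'' namely that the accumulated reward equals $0$ at every visit to $s$, so a reward counter supplies no information at the only decision state and the FR lower bound extends verbatim to finite-memory strategies augmented with a reward counter.
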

\begin{proof}
This follows directly from~\cite[Theorem 4]{KMSW2017} and the observation that in \Cref{infinitebranchtp}, $\liminfppobj$ and co-B\"{u}chi objectives coincide.
\end{proof}

Consequently, when the MDP $\mdp$ is infinitely branching and has the reward
counter implicit in the state, $\liminfppobj$ requires at least a step
counter.

Note that $\liminftpobj$ also coincides with co-B\"{u}chi in the MDP $\mdp$
of \Cref{infinitebranchtp}, hence we restate this theorem in terms of $\liminftpobj$ later in \Cref{infbranchsteplowertp}.

\section{Mean Payoff}\label{sec:meanpayoff}
\begin{table*}[hbtp]
\centering
\begin{tabular}{|ll||c|c|}
\hline
\multicolumn{2}{|l||}{Mean Payoff}                                        & $\eps$-optimal & Optimal \\ \hline
\multicolumn{1}{|l|}{\multirow{2}{*}{Finitely branching}}   & Upper Bound & Det(SC+RC)~\ref{mpepsupper} &  Det(SC+RC)~\ref{infoptuppermp}  \\ \cline{2-4}
\multicolumn{1}{|l|}{}                                      & Lower Bound & \begin{tabular}[c]{@{}l@{}} $\neg$Rand(F+SC)~\ref{infinitesummary}  and\\ $\neg$Rand(F+RC)~\ref{mpstepepslower} \end{tabular} &    \begin{tabular}[c]{@{}l@{}} $\neg$Rand(F+SC)~\ref{almostsummary}  and\\ $\neg$Rand(F+RC)~\ref{mpstepoptlower} \end{tabular} \\ \hline
\multicolumn{1}{|l|}{\multirow{2}{*}{Infinitely branching}} & Upper Bound & Det(SC+RC)~\ref{mpepsupper}  &  Det(SC+RC)~\ref{infoptuppermp} \\ \cline{2-4}
\multicolumn{1}{|l|}{}                                      & Lower Bond  &  \begin{tabular}[c]{@{}l@{}} $\neg$Rand(F+SC)~\ref{infinitesummary} and\\ $\neg$Rand(F+RC)~\ref{mpstepepslower} \end{tabular}  &    \begin{tabular}[c]{@{}l@{}} $\neg$Rand(F+SC)~\ref{almostsummary}  and\\ $\neg$Rand(F+RC)~\ref{mpstepoptlower} \end{tabular} \\ \hline
\end{tabular}
\caption{Strategy complexity of $\eps$-optimal/optimal
strategies for the mean payoff objective in
infinitely/finitely branching MDPs.
}\label{table:meanpayoff}
\end{table*}

\subsection{Upper Bounds}

In order to tackle the upper bounds for the mean payoff objective $\liminfmpobj$, we
work with the acyclic MDP $A(\mdp)$ which encodes both the step counter and the
average reward into the state. Once the average reward is encoded into the state,
the point payoff coincides with the mean payoff. We use this observation to reduce
$\liminfmpobj$ to $\liminfppobj$ and obtain our upper bounds from the corresponding point payoff results.

\begin{cor}\label{mpepsupper}
Given an MDP $\mdp$ and initial state $s_{0}$, there exist $\eps$-optimal strategies $\sigma$ for $\liminfmpobj$ which use just a step counter and a reward counter.
\end{cor}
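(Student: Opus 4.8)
The plan is to transport the point-payoff upper bound of \Cref{infpointpayoff} through the encoding $A(\mdp)$. Recall that in $A(\mdp)$ the reward of each transition is exactly the mean payoff of the matching partial run of $\mdp$, so the sequence of \emph{point} payoffs along a run of $A(\mdp)$ is literally the sequence of \emph{mean} payoffs along the corresponding run of $\mdp$; hence $\liminfppobj$ on $A(\mdp)$ and $\liminfmpobj$ on $\mdp$ describe matching sets of runs. Since the extra components $n$ and $r$ of a state $(s,n,r)$ are deterministic bookkeeping of the step count and the accumulated reward, runs of $A(\mdp)$ from $(s_0,0,0)$ are in measure-preserving bijection with runs of $\mdp$ from $s_0$ under corresponding strategies. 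In particular every strategy of $\mdp$ lifts to $A(\mdp)$ with the same value, so $\valueof{A(\mdp),\liminfppobj}{(s_0,0,0)} \ge \valueof{\mdp,\liminfmpobj}{s_0}$, and it suffices to build a good MD strategy for $\liminfppobj$ in $A(\mdp)$.

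First I would obtain, for every $\eps>0$, an $\eps$-optimal MD strategy for $\liminfppobj$ in $A(\mdp)$. The key point is that $A(\mdp)$ is acyclic (the step-count component strictly increases along every transition) and already carries a step counter inside its states. Thus the proof of \Cref{infpointpayoff} applies with only cosmetic changes to $A(\mdp)$, with $A(\mdp)$ playing the role that $S(\mdp)$ plays there: fix an arbitrary $\eps$-optimal strategy, write $\liminfppobj = \bigcap_{i} \eventually(\text{Safety}_i)$ with the bounded-horizon objectives $\text{Safety}_i^{k} = \eventually^{\le k}(\text{Safety}_i)$, use continuity of measures to choose bounds $n_i$ so that the total loss is at most $\eps$, and thereby underapproximate $\liminfppobj$ by the single safety objective $\bigcap_i \text{Safety}_i^{n_i}$ up to $2\eps$ (as in \Cref{claim:lose2eps}). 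Applying \Cref{acyclicsafety} to this safety objective on the acyclic $A(\mdp)$ returns a uniformly $\eps$-optimal MD strategy for it, which is $3\eps$-optimal for $\liminfppobj$. (Equivalently, one may simply invoke \Cref{infpointpayoff} on the MDP $A(\mdp)$, which yields an $\eps$-optimal deterministic Markov strategy in $A(\mdp)$; since the step count is already part of the state, the Markov step counter is redundant and the strategy collapses to an MD strategy on $A(\mdp)$.)

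Finally I would feed this MD strategy $\sigma'$ into \Cref{meantopoint}, which converts an MD strategy in $A(\mdp)$ attaining a value $c$ for $\liminfppobj$ from $(s_0,0,0)$ into a strategy $\sigma$ in $\mdp$ attaining the \emph{same} $c$ for $\liminfmpobj$ from $s_0$, using only a step counter and a reward counter. Choosing $\sigma'$ with $c \ge \valueof{A(\mdp),\liminfppobj}{(s_0,0,0)} - \eps$ and invoking the value inequality above gives $\probm_{\mdp,s_0,\sigma}(\liminfmpobj) \ge \valueof{\mdp,\liminfmpobj}{s_0} - \eps$, so $\sigma$ is $\eps$-optimal and uses memory SC+RC. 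Since the constant multiple of $\eps$ accumulated along the way is harmless (as $\eps$ is arbitrary), the corollary follows.

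The step I expect to be the main obstacle is the interface rather than any single computation: making precise that working in the already-acyclic, step-counter-carrying $A(\mdp)$ genuinely produces an \emph{MD} strategy (so that \Cref{meantopoint} is applicable and the resulting $\mdp$-strategy needs only SC and RC, not additional finite memory), and that the run correspondence gives the value inequality $\valueof{A(\mdp),\liminfppobj}{(s_0,0,0)} \ge \valueof{\mdp,\liminfmpobj}{s_0}$ in the direction needed to turn attainment of a fixed value into genuine $\eps$-optimality.
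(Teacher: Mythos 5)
Your proposal is correct and follows essentially the same route as the paper: work in $A(\mdp)$, where mean payoffs of $\mdp$ become point payoffs, obtain $\eps$-optimal MD strategies there via \Cref{infpointpayoff} (justified exactly as you note, since $A(\mdp)$ is acyclic with the step counter already in the state), and translate back to $\mdp$ with \Cref{meantopoint} at a memory cost of just a step counter and a reward counter. Your added detail on the value inequality $\valueof{A(\mdp),\liminfppobj}{(s_0,0,0)} \ge \valueof{\mdp,\liminfmpobj}{s_0}$ correctly fills in the interface step the paper leaves implicit.
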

\begin{proof}
We consider the encoded system $A(\mdp)$ in which both step counter and reward counter are implicit in the state. Recall that the partial mean payoffs in $\mdp$ correspond exactly to point rewards in $A(\mdp)$. Since $A(\mdp)$ has an encoded step counter, \Cref{infpointpayoff} gives us $\eps$-optimal MD strategies for $\liminfppobj$ in $A(\mdp)$. \Cref{meantopoint} allows us to translate these strategies back to $\mdp$ with a memory overhead of just a reward counter and a step counter as required.
\end{proof}

\begin{cor}\label{infoptuppermp}
Given an MDP $\mdp$ and initial state $s_{0}$, optimal strategies, where they exist,
can be chosen with just a reward counter and a step counter for $\liminfmpobj$
\end{cor}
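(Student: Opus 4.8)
The plan is to mirror the argument used for the optimal point-payoff case (\Cref{infoptupperpp}), replacing the step-counter encoding $S(\mdp)$ by the mean-payoff encoding $A(\mdp)$ and replacing \Cref{steptopoint} by \Cref{meantopoint}. First I would work entirely in the acyclic MDP $A(\mdp)$, where by construction the transition rewards equal the mean payoffs of the corresponding partial runs of $\mdp$, so that $\liminfmpobj$ in $\mdp$ corresponds exactly to $\liminfppobj$ in $A(\mdp)$. Since $A(\mdp)$ already carries a step counter in its state, \Cref{infpointpayoff} applies and yields, for every $\eps>0$, $\eps$-optimal MD strategies for $\liminfppobj$ from every state of $A(\mdp)$ (this is exactly the intermediate step already used in the proof of \Cref{mpepsupper}).

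Next, because $\liminfppobj$ is a shift invariant objective, I would invoke part~(2) of \Cref{epsilontooptimal}: from the existence of $\eps$-optimal MD strategies at every state, there is a \emph{single} MD strategy $\sigma'$ on $A(\mdp)$ that is optimal from every state of $A(\mdp)$ admitting an optimal strategy; in particular $\sigma'$ is optimal from $(s_0,0,0)$ whenever that state admits an optimal strategy for $\liminfppobj$. Finally, \Cref{meantopoint} translates the MD strategy $\sigma'$ back into a strategy $\sigma$ on $\mdp$ that uses only a step counter and a reward counter and attains the same value for $\liminfmpobj$ from $s_0$.

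The one point requiring care is the transfer of optimality between the two MDPs, since \Cref{meantopoint} only lifts an attained value from $A(\mdp)$ to $\mdp$ in a single direction. To close this gap I would note that the two encoded components of a state $(s,n,r)$ of $A(\mdp)$, namely the step number $n$ and the accumulated reward $r$, are deterministic functions of the run prefix in $\mdp$; this gives a measure-preserving bijection between the runs of $\mdp$ under any strategy and the runs of $A(\mdp)$ under the corresponding strategy, matching $\liminfmpobj$ with $\liminfppobj$. Consequently $\valueof{A(\mdp),\liminfppobj}{(s_0,0,0)} = \valueof{\mdp,\liminfmpobj}{s_0}$, and $s_0$ admits an optimal strategy for $\liminfmpobj$ exactly when $(s_0,0,0)$ admits one for $\liminfppobj$. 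Hence, whenever an optimal strategy for $\liminfmpobj$ exists from $s_0$, the translated $\sigma$ is optimal and uses only a step counter plus a reward counter, as required. I expect this bookkeeping of the value correspondence, rather than any new combinatorial idea, to be the main thing to get right, since the substantive work is already carried by \Cref{infpointpayoff} and \Cref{epsilontooptimal}.
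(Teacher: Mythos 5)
Your proof is correct and follows essentially the same route as the paper's: work in $A(\mdp)$, apply \Cref{infpointpayoff} to get $\eps$-optimal MD strategies from every state, invoke \Cref{epsilontooptimal} for the shift invariant objective to get a single MD strategy optimal wherever optimal strategies exist, and translate back via \Cref{meantopoint}. Your closing paragraph making explicit the value correspondence between $\mdp$ and $A(\mdp)$ (and hence the equivalence of existence of optimal strategies) spells out a step the paper leaves implicit, but it is the same argument.
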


\begin{proof}
We place ourselves in $A(\mdp)$
and apply \Cref{infpointpayoff} to obtain  $\eps$-optimal MD strategies from
every state of $A(\mdp)$. Since $\liminfmpobj$ is a shift invariant objective, \Cref{epsilontooptimal} yields a single MD
strategy that is optimal from every state of $A(\mdp)$ that has an optimal
strategy.
By \Cref{meantopoint} we can translate this MD strategy on $A(\mdp)$ back to
a strategy on $\mdp$ with a step counter and a reward counter.
Provided that $s_{0}$ admits any optimal strategy at all, we obtain
an optimal strategy for $\liminfmpobj$ from $s_{0}$ that uses only a step counter and a reward counter.
\end{proof}

\subsection{Lower Bounds}\label{sec:mplowerbounds}

In this section we will show that $\liminfmpobj$ always requires at least a step counter and a reward counter,
whether it be for $\eps$-optimal or optimal strategies.
We introduce in \Cref{infinitegadget} a building block for an MDP which will form the foundation for all of our lower bound results for $\liminfmpobj$.
Many of these results also hold for $\liminftpobj$, so we will restate them in \Cref{sec:totalpayoff} in due course.

\subsubsection{$\eps$-optimal Strategies}

We construct an acyclic MDP $\mdp$ in which the step counter is implicit in the state as follows.



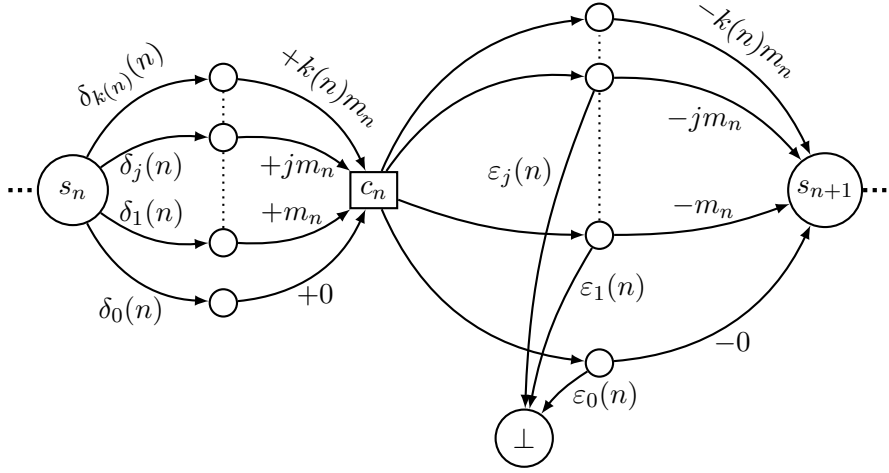
\begin{figure*}
\begin{center}
    \begin{tikzpicture}

    \node[draw,circle, inner sep=5pt] (R) at (0,0) {$s_{n}$};
    \node[draw, minimum height=0.4cm, minimum width=0.4cm] (N) at (4,0) {$c_{n}$};
    \node[draw,circle,inner sep=2pt] (S) at (10,0) {$s_{n+1}$};

    \node[] (Left) at (-1,0) {};
    \node[] (Right) at (11,0) {};
    \draw [dotted, ultra thick] (Left) -- (R);
    \draw [dotted, ultra thick] (S) -- (Right);

    \node[draw,circle] (M) at (7,-0.6){};
    \node[draw,circle] (U) at (7,1.5){};

    \node[draw, circle] at (7,-2.3) (B) {};
    \node[draw, circle] at (7,2.3) (E) {};

    \node[draw, circle] at (6,-3.3) (Deadbottom) {$\perp$};

    %
    %
    %
    %
    %

    \node[draw,circle] (MidBotL) at (2,-0.7){};
    \node[draw,circle] (MidTopL) at (2,0.7){};

    \node[draw, circle] (BotL) at (2,-1.5) {};
    \node[draw, circle] (TopL) at (2,1.5) {};

    \draw [dotted, thick] (MidBotL) -- (MidTopL);
    \draw [dotted, thick] (MidTopL) -- (TopL);

    %
    %
    %
    %
    \coordinate[shift={(0mm,2.5mm)}] (Mshift) at (M);


    \coordinate[shift={(0mm,2.5mm)}] (Ushift) at (U);
    \coordinate[shift={(0mm,-2.4mm)}] (Eshift) at (E);

    \draw [dotted, thick] (Ushift) -- (Eshift);


    \draw[->,>=latex] (R) edge[bend left] node[sloped, above, midway]{$\delta_{k(n)}(n)$} (TopL)
                    (TopL) edge[bend left] node[above, sloped, midway]{$+k(n)m_{n}$} (N)
                    (R) edge[bend left=20] node[below, midway]{$\delta_{j}(n)$} (MidTopL)
                    (MidTopL) edge[bend left=20] node[below, midway]{$+jm_{n}$} (N)
                    (R) edge[bend right=20] node[above, midway]{$\delta_{1}(n) $} (MidBotL)
                    (MidBotL) edge[bend right=20] node[above, midway, xshift=-2pt, yshift=1pt]{$+m_{n}$} (N)
                    (R) edge[bend right] node[below, midway, xshift=-2pt, yshift=-2pt]{$\delta_{0}(n)$} (BotL)
                    (BotL) edge [bend right] node[below, midway]{$+0$} (N);


    \draw[->,>=latex] (N) edge[bend right=10] (M)
                      (N) edge[bend left]  (E)
                      (E) edge[bend left] node[sloped, above, midway]{$-k(n)m_{n}$} (S)
                      (N) edge[bend right] (B)
                      (B) edge[bend right] node[below, midway]{$-0$} (S)
                      (B) edge[bend right=10] node[right, midway, yshift=-2pt]{$\eps_{0}(n)$} (Deadbottom)
                      (M) edge[bend right=10] node[right, near start]{$\varepsilon_{1}(n)$} (Deadbottom)
                      (M) edge[bend right=10] node[above, midway]{$-m_{n}$} (S)
                      (N) edge[bend left] (U)
                      (U) edge[bend left=25] node[below, midway, xshift=-5pt]{$-jm_{n}$} (S)
                      (U) edge[bend right=10] node[left, near start]{$\varepsilon_{j}(n)$} (Deadbottom);

    \draw [dotted, thick] (M) -- (U);

    \end{tikzpicture}
    \caption{A typical building block with $k(n)+1$ choices, first random then controlled. The number of choices $k(n) + 1$ grows unboundedly with $n$. This is the $n$-th building block of the MDP in \Cref{chain}.
    The $\delta_{i}(n)$ and $\eps_{i}(n)$ are probabilities depending on $n$ and the $\pm i m_{n}$ are transition rewards.
    We index the successor states of $s_{n}$ and $c_{n}$ from $0$ to $k(n)$ to
    match the indexing of the $\delta$'s and $\eps$'s such that the bottom
    state is indexed with $0$ and the top state with $k(n)$.
    }%
    \label{infinitegadget}
\end{center}
    \end{figure*}



\begin{figure}[t]
\begin{center}
    \begin{tikzpicture}

    \node[draw, minimum height=0.5cm, minimum width=0.5cm] (I1) at (-0.5,0) {$s_{0}$};
    \node (I2) at (7.5,0) {};
    \node (I3) at (7.5,-1.5) {};

    \node[draw, minimum height=0.7cm, minimum width=0.7cm, fill=black] (B1) at (0.5,0) {B1};
    \node[draw, minimum height=0.7cm, minimum width=0.7cm, fill=black] (B2) at (2,0) {B2};
    \node[draw, minimum height=0.7cm, minimum width=0.7cm, fill=black] (B3) at (3.5,0) {B3};
    \node[draw, minimum height=0.7cm, minimum width=0.7cm, fill=black] (B4) at (5,0) {B4};
    \node[draw, minimum height=0.7cm, minimum width=0.7cm, fill=black] (B5) at (6.5,0) {B5};

    \node[draw, minimum height=0.5cm, minimum width=0.5cm] (W1) at (0.5,-1.5) {};
    \node[draw, minimum height=0.5cm, minimum width=0.5cm] (W2) at (2,-1.5) {};
    \node[draw, minimum height=0.5cm, minimum width=0.5cm] (W3) at (3.5,-1.5) {};
    \node[draw, minimum height=0.5cm, minimum width=0.5cm] (W4) at (5,-1.5) {};
    \node[draw, minimum height=0.5cm, minimum width=0.5cm] (W5) at (6.5,-1.5) {};

    \node[draw, minimum height=0.5cm, minimum width=0.5cm] (W6) at (0.5,-3) {};
    \node[draw, minimum height=0.5cm, minimum width=0.5cm] (W7) at (2,-3) {};
    \node[draw, minimum height=0.5cm, minimum width=0.5cm] (W8) at (3.5,-3) {};
    \node[draw, minimum height=0.5cm, minimum width=0.5cm] (W9) at (5,-3) {};
    \node[draw, minimum height=0.5cm, minimum width=0.5cm] (W10) at (6.5,-3) {};




    \draw[->, >=latex] (I1) -- (B1);
    \draw[dotted, thick] (I2) -- (B5);
    \draw[dotted, thick] (I3) -- (W5);
    \draw[->, >=latex] (I1) to[bend right=20] (W6);

    \draw[->,>=latex] (W1) -- (B2) node[sloped, above, midway]{$+0$};

    \draw[->,>=latex] (W2) -- (B3) node[sloped, above, midway]{$+1$};

    \draw[->,>=latex] (W3) -- (B4) node[sloped, above, midway]{$+2$};

    \draw[->,>=latex] (W4) -- (B5) node[sloped, above, midway]{$+3$};

    \draw[->,>=latex] (B1) -- (B2);
    \draw[->,>=latex] (B2) -- (B3);
    \draw[->,>=latex] (B3) -- (B4);
    \draw[->,>=latex] (B4) -- (B5);

    \draw[->, >=latex, dotted, thick] (W6) -- (W1);
    \draw[->, >=latex, dotted, thick] (W7) -- (W2);
    \draw[->, >=latex, dotted, thick] (W8) -- (W3);
    \draw[->, >=latex, dotted, thick] (W9) -- (W4);
    \draw[->, >=latex, dotted, thick] (W10) -- (W5);



    \draw[->,>=latex] (W1) -- (W7) node[above=0.25cm, midway]{$-1$};
    \draw[->,>=latex] (W2) -- (W8) node[above=0.25cm, midway]{$-1$};
    \draw[->,>=latex] (W3) -- (W9) node[above=0.25cm, midway]{$-1$};
    \draw[->,>=latex] (W4) -- (W10) node[above=0.25cm, midway]{$-1$};

\draw [decorate, decoration={brace,amplitude=5pt},xshift=6pt,yshift=0pt]
(6.4,-1.8) -- (6.4,-2.7) node [black,midway,right, xshift=5pt] {$4$ steps};

    \end{tikzpicture}
    \caption{The buildings blocks from \Cref{infinitegadget} represented by
      black boxes are chained together ($n$ increases as you go to the
      right). The chain of white boxes allows to skip arbitrarily long
      prefixes while preserving path length. The positive rewards from the
      white states to the black boxes reimburse the lost reward accumulated
      until then. The $-1$ rewards between white states ensure that skipping
      gadgets forever is losing.
      \vspace{-5mm}
    }%
    \label{chain}
\end{center}
    \end{figure}
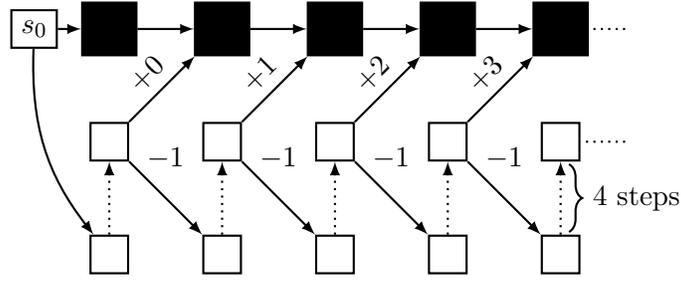


The system consists of a sequence of gadgets.  \Cref{infinitegadget} depicts a typical building block in this system. The system consists of these gadgets chained together as illustrated in  \Cref{chain}, starting with $n$ sufficiently high at $n=N^{*}$. In the controlled choice, there is a small chance in all but the top choice of falling into a $\bot$ state. These $\bot$ states are abbreviations for an infinite chain of states with $-1$ reward on the transitions and are thus losing. The intuition behind the construction is that there is a random transition with branching degree $k(n)+1$. Then, the only way to win, in the controlled states, is to play the $i$-th choice if one arrived from the $i$-th choice. Thus intuitively, to remember what this choice was, one requires at least $k(n)+1$ memory modes. That is to say, the one and only way to win is to mimic, and mimicry requires memory. We will present two similar versions of this MDP, initially we present a version in which the step counter is implicit in the state, then later we will present a version in which the reward counter is  implicit in the state instead.

\begin{rem}
$\mdp$ is acyclic, finitely branching and for every state $s \in S, \exists n_{s} \in \N$ such that every path from $s_{0}$ to $s$ has length $n_{s}$. That is to say the step counter is implicit in the state.
\end{rem}
Additionally, the number of transitions in each gadget grows unboundedly with $n$ according to the function $k(n)$. Consequently, we will show that the number of memory modes required to play correctly grows above every finite bound. This will imply that no finite amount of memory suffices for $\varepsilon$-optimal strategies.

\noindent
\textbf{Notation}:
All logarithms are assumed to be in base $e$.
\begin{align*}
& \log_{1}n \eqdef \log n, \quad \log_{i+1}n \eqdef \log ( \log_{i}n) \\
& \delta_{0}(n)\eqdef\dfrac{1}{\log n}, \quad \delta_{i}(n) \eqdef \dfrac{1}{\log_{i+1}n},
\quad \delta_{k(n)}(n) \eqdef 1 - \sum_{j=0}^{k(n)-1}\delta_{j}(n)\\
& \eps_{0}(n) \eqdef \dfrac{1}{n \log n}, \ \eps_{i+1}(n) \eqdef \dfrac{\eps_{i}(n)}{\log_{i+2}n},
\ \text{i.e. } \eps_{i}(n) = \dfrac{1}{n \cdot \log n \cdot \log_{2}n \cdots \log_{i+1}n},
\eps_{k(n)}(n) \eqdef 0\\
& \text{Tower}(0) \eqdef e^{0} = 1, \quad \text{Tower}(i+1) \eqdef e^{\text{Tower}(i)},
\quad N_{i} \eqdef \text{Tower}(i)
\end{align*}

\begin{lem}\label{convdiv}
The family of series
$
\sum_{n > N_{j}} \delta_{j}(n)  \cdot \eps_{i}(n)
$
is divergent for all $i,j \in \N$, $i < j$.

\noindent
Additionally, the related family of series
$\sum_{n > N_{i}} \delta_{i}(n) \cdot \eps_{i}(n)
$
is convergent for all $i \in \N$.

\end{lem}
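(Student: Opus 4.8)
The plan is to put both families into closed form and then decide each by the integral test, comparing against the classical Bertrand series $\sum 1/(n\log_1 n\cdots\log_m n)$. Using the closed form already stated, $\eps_i(n)=\frac{1}{n\prod_{l=1}^{i+1}\log_l n}$, together with $\delta_j(n)=\frac{1}{\log_{j+1}n}$ (which holds uniformly for $j\ge 0$, since the base case $\delta_0(n)=1/\log n=1/\log_1 n$ agrees with the general formula), the convergent family has general term
\[
\delta_i(n)\,\eps_i(n)=\frac{1}{n\left(\prod_{l=1}^{i}\log_l n\right)(\log_{i+1}n)^2},
\]
while the family to be shown divergent has general term
\[
\delta_j(n)\,\eps_i(n)=\frac{1}{n\left(\prod_{l=1}^{i+1}\log_l n\right)\log_{j+1}n}.
\]
The one fact driving everything is the derivative identity $\frac{d}{dx}\log_{k+1}x=\frac{1}{x\prod_{l=1}^{k}\log_l x}$, valid for $x>N_k$, which I would prove by induction on $k$ using the chain rule from the base case $\frac{d}{dx}\log x=1/x$. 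Note also that $N_k=\mathrm{Tower}(k)$ is exactly the threshold past which $\log_{k}n$ is positive, and that at $n=N_{k+1}$ one has $\log_{k+1}N_{k+1}=\mathrm{Tower}(0)=1$; these observations let me keep every iterated logarithm positive and bounded below throughout.

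For the convergent family, I would recognize the term as $g'(n)/g(n)^2$ with $g=\log_{i+1}$, so that the integrand $f(x)=\frac{1}{x(\prod_{l=1}^{i}\log_l x)(\log_{i+1}x)^2}$ has antiderivative $-1/\log_{i+1}x$. Since $-1/\log_{i+1}x\to 0$ as $x\to\infty$, the tail integral $\int_{N_{i+1}}^{\infty}f=\frac{1}{\log_{i+1}N_{i+1}}=1$ is finite. Starting the comparison at $N_{i+1}$ rather than $N_i$ is deliberate: it places us where $\log_{i+1}x\ge 1$ and $f$ is positive and decreasing, so the integral test applies to the tail; the finitely many terms with $N_i<n\le N_{i+1}$ are each finite and do not affect convergence. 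Hence $\sum_{n>N_i}\delta_i(n)\eps_i(n)<\infty$.

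For the divergent family, I would exploit that $i<j$ forces $j+1\ge i+2$, together with the monotonicity of iterated logs in the index: since $\log y<y$ for all $y>0$, we get $\log_{k+1}n<\log_k n$ whenever $\log_k n>0$, and therefore $\log_{j+1}n\le\log_{i+2}n$ for $n>N_j$. Taking reciprocals gives the termwise lower bound
\[
\delta_j(n)\,\eps_i(n)\ \ge\ \frac{1}{n\prod_{l=1}^{i+2}\log_l n}.
\]
The right-hand side is the classical Bertrand term with all exponents equal to $1$, whose integrand has antiderivative $\log_{i+3}x\to\infty$; hence $\sum_{n>N_j}\frac{1}{n\prod_{l=1}^{i+2}\log_l n}$ diverges (it is a tail of a divergent series, and $N_j\ge N_{i+1}$ keeps all factors positive). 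By the comparison test the original series diverges, proving the claim for all $i<j$.

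The substance of the argument is elementary once the closed forms and the derivative identity are in place; the only real care is bookkeeping about domains — ensuring each iterated logarithm is positive and handling the singularity of $1/(\log_{i+1}n)^2$ as $n\downarrow N_i$ — which I resolve by anchoring the integral comparisons at $N_{i+1}$ and absorbing the finitely many earlier terms. The conceptual point I would emphasize is simply that $\delta_i\eps_i$ carries a squared last logarithm (convergent Bertrand exponent $2>1$), whereas mismatched indices $i<j$ degrade the tail to the critical exponent~$1$ and thus diverge.
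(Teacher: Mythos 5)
Your proof is correct, and it takes a genuinely different route from the paper. The paper disposes of both claims in one line by invoking Cauchy's Condensation Test (applied iteratively, condensation strips one logarithmic factor per application, reducing the Bertrand-type terms to a harmonic-like series in the divergent case and a $p$-series with $p=2$ in the convergent case); no details are given there. You instead work with the closed forms $\delta_j(n)\eps_i(n)=\bigl(n\prod_{l=1}^{i+1}\log_l n\cdot\log_{j+1}n\bigr)^{-1}$ and $\delta_i(n)\eps_i(n)=\bigl(n\prod_{l=1}^{i}\log_l n\cdot(\log_{i+1}n)^2\bigr)^{-1}$, use the derivative identity $\frac{d}{dx}\log_{k+1}x=\bigl(x\prod_{l=1}^{k}\log_l x\bigr)^{-1}$ to get exact antiderivatives ($\log_{i+3}x$ in the divergent case, $-1/\log_{i+1}x$ in the convergent case), and handle the index mismatch $i<j$ by the monotonicity $\log_{j+1}n\le\log_{i+2}n$, which lower-bounds the divergent family by the critical Bertrand series. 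Your domain bookkeeping is also sound: $\log_k x>0$ exactly when $x>N_{k-1}=\mathrm{Tower}(k-1)$, so anchoring the integral test at $N_{i+1}$ (where $\log_{i+1}N_{i+1}=1$) keeps the integrand positive and decreasing, and the finitely many terms before that point are harmless. What your approach buys is a self-contained, quantitative argument (e.g.\ the convergent tail integral from $N_{i+1}$ equals exactly $1$) at the cost of the calculus machinery; what the paper's citation buys is brevity, at the cost of leaving the iteration of condensation and the treatment of the mismatched indices $i<j$ implicit.
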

\begin{proof}
These are direct consequences of Cauchy's Condensation Test.
\end{proof}

\begin{defi}\label{def:kn}
We define $k(n)$, the rate at which the number of transitions grows. We define $k(n)$ in terms of fast growing functions $g, \text{Tower}$ and $h$ defined for $i \ge 1$ as follows:

\[
g(i) \eqdef \text{min} \left\{ N : \left( \sum_{n>N} \delta_{i-1}(n) \eps_{i-1}(n) \right)  \le 2^{-i}  \right\}, \quad h(1) \eqdef 2
\]

\[
h(i+1) \eqdef \left\lceil \max \left\{ g(i+1), \text{Tower}(i+2),  \min \left\{ m+1 \in \N :\sum^{m}_{n=h(i)} \eps_{i-1}(n) \geq 1 \right\} \right\} \right\rceil.
\]

Note that function $g$ is well defined by \Cref{convdiv}, and
$h(i+1)$ is well defined since for all $i$, $\sum^{\infty}_{n=h(i)} \eps_{i-1}(n)$ diverges to infinity.
$k(n)$ is a slow growing unbounded step function defined in terms of $h$ as
$k(n) \eqdef h^{-1}(n)$.
The Tower function features in the definition to ensure that the transition probabilities are always well defined. $g$ and $h$ are used to smooth the proofs of \Cref{infwin} and \Cref{claim:divergence} respectively.
\emph{Notation:} $N^{*} \eqdef \min \{ n \in \N : k(n) = 1 \}$. This is intuitively the first natural number for which the construction is well defined.

The reward $m_{n}$ which appears in the $n$-th gadget is defined such that it
outweighs any possible reward accumulated up to that
point in previous gadgets. As such we define
$m_{n} \eqdef 2k(n) \sum_{i=N^{*}}^{n-1} m_{i}$,
with $m_{N^{*}} \eqdef 1$ and where $k(n)$ is the
branching degree.
\end{defi}

To simplify the notation,
the state $s_{0}$ in our theorem statements refers to
$s_{N^{*}}$.

\begin{restatable}{lem}{lemwelldefined}\label{lem:welldefined}
For $k(n) \geq 1$, the transition probabilities in the gadgets are well defined.
\end{restatable}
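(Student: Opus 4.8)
The statement asks us to check that, for every $n$ with $k(n)\ge 1$, the numbers $\delta_0(n),\dots,\delta_{k(n)}(n)$ form a probability distribution and each of $\eps_0(n),\dots,\eps_{k(n)}(n)$ lies in $[0,1]$. Unwinding the definitions, three things must hold: every iterated logarithm appearing in these quantities has a positive argument (so the expressions are even defined over $\R$); each $\delta_j(n)$ and $\eps_j(n)$ is nonnegative and at most $1$; and the residual mass $\delta_{k(n)}(n)=1-\sum_{j=0}^{k(n)-1}\delta_j(n)$ is nonnegative. Since the $\delta_j(n)$ sum to $1$ by construction, the only genuinely nontrivial obligation is the last one, i.e.\ that $\sum_{j=0}^{k(n)-1}\delta_j(n)\le 1$; everything else follows once we know the relevant iterated logarithms are bounded below by $1$.

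The plan is to exploit the $\text{Tower}$ clause built into the definition of $h$. Recall that $k(n)=i$ exactly when $h(i)\le n<h(i+1)$, and that the recursion for $h$ contains the term $\text{Tower}(i+2)$, so that $h(i)\ge \text{Tower}(i+1)$ for all $i\ge 2$ (the base case $i=1$, where $h(1)=2$, is checked by hand and only needs $\log n\ge 1$, hence the choice of $N^{*}$). Thus whenever $k(n)=i$ we have $n\ge \text{Tower}(i+1)$, and since $\log_\ell(\text{Tower}(i+1))=\text{Tower}(i+1-\ell)$ and $\log_\ell$ is increasing, we obtain $\log_\ell n\ge \text{Tower}(i+1-\ell)\ge \text{Tower}(1)=e>1$ for every $\ell\in\{1,\dots,i\}$. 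This single chain of inequalities simultaneously shows that all iterated logarithms used in the $i$-th gadget are well defined with positive argument, and that each is at least $e$. Consequently each $\delta_j(n)=1/\log_{j+1}n$ lies in $(0,1/e]\subseteq(0,1]$, and each $\eps_j(n)=1/(n\,\log n\cdots\log_{j+1}n)$ is nonnegative with all denominator factors $\ge 1$ and a leading factor $n\ge 2$, so $\eps_j(n)\le 1/n<1$ (with $\eps_{k(n)}(n)=0$).

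The heart of the argument is the uniform bound on $\sum_{j=0}^{k(n)-1}\delta_j(n)$. Reindexing by $\ell=j+1$ and using the estimate above,
\[
\sum_{j=0}^{i-1}\delta_j(n)=\sum_{\ell=1}^{i}\frac{1}{\log_\ell n}\le\sum_{\ell=1}^{i}\frac{1}{\text{Tower}(i+1-\ell)}=\sum_{m=1}^{i}\frac{1}{\text{Tower}(m)}\le\sum_{m=1}^{\infty}\frac{1}{\text{Tower}(m)}.
\]
Because $\text{Tower}$ grows super-exponentially, the tower series on the right converges to a constant strictly below $1$ (its first two terms already give $1/e+1/e^{e}\approx 0.43$, and the remaining tail is negligible). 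Therefore $\delta_{k(n)}(n)=1-\sum_{j=0}^{i-1}\delta_j(n)\ge 1-\sum_{m=1}^{\infty}1/\text{Tower}(m)>0$, and trivially $\delta_{k(n)}(n)\le 1$. This shows the $\delta$'s are a genuine distribution and, together with the previous paragraph, that all transition probabilities in the gadget are well defined.

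The subtle point — and the reason the $\text{Tower}$ function is present at all — is the uniformity in $i$: the sum defining $\delta_{k(n)}(n)$ has $k(n)=i$ terms, and $i$ is unbounded as $n$ grows, so one cannot simply bound each term by a fixed constant. The Tower clause is exactly what converts the ``number of terms grows'' problem into a ``terms decay along a fixed convergent tower series'' bound: forcing $n\ge\text{Tower}(i+1)$ makes the $\ell$-th summand at most $1/\text{Tower}(i+1-\ell)$, collapsing the whole sum into $\sum_{m\ge 1}1/\text{Tower}(m)<1$ independently of $i$. I would therefore concentrate the care in the write-up on justifying $h(i)\ge\text{Tower}(i+1)$ and the low-index boundary case, after which the two displayed estimates finish the proof.
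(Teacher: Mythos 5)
Your proof is correct and takes essentially the same route as the paper's: both hinge on the lower bound $n \ge \text{Tower}(k(n)+1)$ supplied by the Tower clause in the definition of $h$, which collapses $\sum_{j=0}^{k(n)-1}\delta_j(n)$ into a piece of the tower series $\sum_{m\ge 1} 1/\text{Tower}(m) < 1$ (the paper merely finishes by comparing with the geometric series $\sum_i 2^{-i}$ rather than estimating the tower series numerically). Your extra checks — well-definedness of the iterated logarithms, the bounds $\eps_j(n) \in [0,1]$, and the explicit flagging of the $k(n)=1$ boundary case where $h(1)=2 < \text{Tower}(2)$ — are points the paper's proof passes over silently, but they refine rather than change the argument.
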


\begin{proof}
Recall that Tower$(i)$ is $i$ repeated exponentials. Thus, $\log($Tower$(i))=$Tower$(i-1)$.

When checking whether probabilities in a given gadget are well defined,
first we choose a gadget. The choice of gadget gives us a branching degree
$k(n)+1$ which in turn lower bounds the value of $n$ in that gadget.
So for a branching degree of $k(n)+1$, we have $n$ lower bounded
by Tower$(k(n)+1)$ by definition of $k(n)$.

We need to show that $\sum_{i=0}^{k(n)-1} \delta_{i}(n) \leq 1$.
Indeed, we have that:

\[
\sum_{i=0}^{k(n)-1} \delta_{i}(n)
\leq
\sum_{i=0}^{k(n)-1} \dfrac{1}{\text{log}_{i+1}(\text{Tower}(k(n)+1))}\\
=
\sum_{i=1}^{k(n)} \dfrac{1}{\text{Tower}(i)}
<
\sum_{i=1}^{k(n)} \dfrac{1}{e^{i}}
<
\sum_{i=1}^{k(n)} \dfrac{1}{2^{i}}
<
1.
\]
Hence, for $k(n) \geq 1$, the transition probabilities are well defined, i.e.\ $\delta_{0}(n), \delta_{1}(n), \ldots ,\delta_{k(n)}(n)$ do indeed sum to 1.
\end{proof}




\begin{restatable}{lem}{lemmainfwin}\label{infwin}
For every $\eps > 0$, there exists a strategy $\zstrat_{\eps}$ with  $\probm_{\mdp, s_{0}, \zstrat_{\eps}}(\liminfmpobj) \geq 1 - \eps$ that cannot fail unless it hits a $\perp$ state. Formally, $\probm_{\mdp, s_{0}, \zstrat_{\eps}}(\liminfmpobj \wedge \always(\neg \perp)) = \probm_{\mdp, s_{0}, \zstrat_{\eps}}(\always( \neg \perp)) \geq 1- \eps$. So in particular, $\valueof{\mdp,\liminfmpobj}{s_{0}} = 1$.
\end{restatable}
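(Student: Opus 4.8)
The plan is to construct, for each $\eps>0$, a strategy $\zstrat_\eps$ that traverses a finite prefix of the chain with \emph{zero} risk of reaching $\perp$ and then switches to a \emph{mimicking} policy: in each controlled state $c_n$ it replays exactly the index $i$ of the branch through which the preceding random state $s_n$ was left. The role of mimicking is that arriving at $c_n$ via branch $i$ collects reward $+i\,m_n$ on the edge into $c_n$, and then playing the $i$-th controlled choice collects $-i\,m_n$ on the (successful) edge to $s_{n+1}$, so the two rewards cancel and the partial sum returns to its value at the entry of the $n$-th gadget. Since the positive reward is seen \emph{before} the negative one, the partial sum never dips below its entry value inside a mimicked gadget. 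Consequently, as long as a run avoids $\perp$, its partial reward sum stays bounded below by the constant $-C$ reached at the end of the risk-free prefix, so the mean payoff at step $t$ is at least $-C/t\to 0$ and hence $\liminf\ge 0$. This shows $\always(\neg\perp)\subseteq\liminfmpobj$ on runs consistent with $\zstrat_\eps$; conversely any run reaching $\perp$ sees reward $-1$ forever and loses. Therefore $\probm_{\mdp,s_0,\zstrat_\eps}(\liminfmpobj\wedge\always(\neg\perp))=\probm_{\mdp,s_0,\zstrat_\eps}(\always(\neg\perp))$, the first of the two claimed identities.

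It remains to choose the prefix length $M$ so that $\probm_{\mdp,s_0,\zstrat_\eps}(\always(\neg\perp))\ge 1-\eps$. During the mimicking phase the only way to fall into $\perp$ at gadget $n$ is to arrive through some branch $i$ (probability $\delta_i(n)$) and then fail the controlled choice (probability $\eps_i(n)$); since the top choice carries $\eps_{k(n)}(n)=0$, the per-gadget failure probability is $\sum_{i=0}^{k(n)-1}\delta_i(n)\,\eps_i(n)$. Thus the survival probability equals
\[
\prod_{n\ge M}\Bigl(1-\textstyle\sum_{i=0}^{k(n)-1}\delta_i(n)\eps_i(n)\Bigr),
\]
and by \Cref{prop:product-sum} it suffices to show that the tail of the double series $\sum_{n\ge M}\sum_{i=0}^{k(n)-1}\delta_i(n)\eps_i(n)$ can be made $\le\eps$. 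On the prefix $[N^*,M)$ the strategy plays only zero-risk moves (e.g.\ the top choice, whose failure probability is $0$, or equivalently by routing through the skip-chain of \Cref{chain}), so it never reaches $\perp$ there and accumulates only the bounded reward quantifying $-C$; hence the entire failure probability is captured by the product above.

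The crux — and the step I expect to be the main obstacle — is the convergence of this double series. The idea is to re-sum along the index $i$: since $i\le k(n)-1$ is equivalent to $k(n)\ge i+1$, i.e.\ $n\ge h(i+1)$ (because $k=h^{-1}$ is increasing), Fubini gives
\[
\sum_{n\ge M}\sum_{i=0}^{k(n)-1}\delta_i(n)\eps_i(n)=\sum_{i\ge 0}\sum_{n\ge\max(M,\,h(i+1))}\delta_i(n)\eps_i(n).
\]
By \Cref{def:kn} we have $h(i+1)\ge g(i+1)$, and $g(i+1)$ is chosen precisely so that $\sum_{n>g(i+1)}\delta_i(n)\eps_i(n)\le 2^{-(i+1)}$ (this $g$ is well defined by the convergence of the diagonal series in \Cref{convdiv}). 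Hence each inner sum is at most $2^{-(i+1)}$, and the whole series is bounded by $\sum_{i\ge 0}2^{-(i+1)}=1<\infty$. Convergence of the full series (from $N^*$) then lets us pick $M$ large enough that its tail is $\le\eps$, so the product is $\ge 1-\eps$ by \Cref{prop:product-sum}. Combining the two claims yields $\probm_{\mdp,s_0,\zstrat_\eps}(\liminfmpobj)\ge\probm_{\mdp,s_0,\zstrat_\eps}(\always(\neg\perp))\ge 1-\eps$ for every $\eps>0$, and therefore $\valueof{\mdp,\liminfmpobj}{s_0}=1$.
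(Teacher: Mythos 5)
Your proposal is correct and takes essentially the same route as the paper's proof: the mimicking strategy, the same per-gadget failure analysis and survival product $\prod_{n\ge M}\bigl(1-\sum_{i=0}^{k(n)-1}\delta_i(n)\eps_i(n)\bigr)$, the same Fubini-style re-summation of the double series via $g$ and $h=k^{-1}$, and a risk-free prefix to push the failure probability below $\eps$. Two minor remarks: your last step (tail sum $\le\eps$ implies tail product $\ge 1-\eps$) is justified by the elementary bound $\prod_n(1-a_n)\ge 1-\sum_n a_n$ (equivalently a union bound over the failure events) or by \Cref{prop:tail-product} as the paper does, not by \Cref{prop:product-sum} as you cite; and your ``always play the top choice'' prefix variant is fine for $\liminfmpobj$ (the constant deficit $-C$ is washed out by the division by $t$) but, unlike the skip-chain variant the paper uses, it would not survive the paper's later reuse of this lemma for $\liminftpobj$, where a permanent deficit $-C$ is fatal.
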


\ignore{
\begin{proof}[Outline of the proof.]
We define a strategy $\zstrat$ which in $c_{n}$ always mimics the choice in $s_{n}$. Playing according to $\zstrat$, the only way to lose is by dropping into the $\perp$ state. This is because by mimicking, the player finishes each gadget with a reward of $0$. From $s_{0}$, the probability of surviving while playing in all the gadgets is
\[
\prod_{n \geq N^{*}} \left( 1 - \sum_{j=0}^{k(n)-1} \delta_{j}(n) \cdot \eps_{j}(n) \right) > 0.
\]
Hence the player has a non zero chance of winning when playing $\zstrat$.

When playing with the ability to skip gadgets, as illustrated in \Cref{chain}, all runs not visiting a $\perp$ state are winning since the total reward never dips below $0$.
We then consider the strategy $\zstrat_{\eps}$ which plays like $\zstrat$
after skipping forwards by sufficiently many gadgets
(starting at $n \gg N^{*}$). Its
probability of satisfying $\liminfmpobj$
corresponds to a tail of the
above product, which can be made arbitrarily close to $1$ (and thus $\geq 1-\eps$)
by \Cref{prop:tail-product}.
Thus the strategies $\zstrat_{\eps}$ for arbitrarily small $\eps >0$
witness that $\valueof{\mdp,\liminfmpobj}{s_{0}} = 1$.
\end{proof}
}

\begin{proof}
We define a strategy $\zstrat$ which in $c_{n}$ always mimics the choice in $s_{n}$. We first prove that playing this way gives us a positive chance of winning.
Then we show that there are strategies $\zstrat_\eps$ that attain $1- \eps$ from $s_{0}$ without hitting a $\perp$ state.
This implies in particular that $\valueof{\mdp,\liminfmpobj}{s_{0}} = 1$.

Playing according to $\zstrat$, the only way to lose is by dropping into the $\perp$ state. This is because by mimicking, the player finishes each gadget with a reward of $0$.
In the $n$-th gadget, the chance of reaching the $\perp$ state is
$\sum_{j=0}^{k(n)-1}\delta_{j}(n) \cdot \varepsilon_{j}(n)$.
Thus, the probability of surviving while playing in all the gadgets is
\[
\prod_{n \geq N^{*}} \left( 1 - \sum_{j=0}^{k(n)-1} \delta_{j}(n) \cdot \eps_{j}(n) \right).
\]
However, by \Cref{prop:product-sum}, this product is strictly greater than $0$ if and only if the sum
\[
\sum_{n \geq N^{*}} \left( \sum_{i=0}^{k(n)-1} \delta_{i}(n) \eps_{i}(n) \right)
\]
is finite. With some rearranging exploiting the definition of $k(n)$ we see that this is indeed the case:

\begin{align*}
& \sum_{n \geq N^{*}} \left( \sum_{i=0}^{k(n)-1} \delta_{i}(n) \eps_{i}(n) \right) \\
\leq & \sum_{i \geq 1} \left( \sum_{n=g(i)}^{\infty} \delta_{i-1}(n) \eps_{i-1}(n) \right) &\text{by definition of $k(n)$}\\
\leq & \sum_{i \geq 1} 2^{-i}  &\text{by definition of $g(n)$}\\
\leq & 1
\end{align*}
Hence the player has a non zero chance of winning.

When playing with the ability to skip gadgets, as illustrated in \Cref{chain},
all runs not visiting a $\perp$ state are winning since the total reward never
dips below $0$.
Hence $\probm_{\mdp, s_{0}, \zstrat_{\eps}}(\liminfmpobj \wedge \neg \perp) =
\probm_{\mdp, s_{0}, \zstrat_{\eps}}( \neg \perp )$.
Thus the idea is to skip an arbitrarily long prefix of gadgets to push
the chance of winning $\eps$ close to $1$ by pushing the
chance of visiting a $\perp$ state $\eps$ close to $0$.
From the $N$-th state, for $N \geq N^{*}$, the chance of winning is
\[
\prod_{n \geq N} \left( 1 - \sum_{j=0}^{k(n)-1} \delta_{j}(n) \cdot \eps_{j}(n) \right) > 0
\]
By \Cref{prop:tail-product} this can be made arbitrarily close to $1$ by choosing $N$
sufficiently large.

Let
$N_\eps \eqdef \min
\left\{
    N \in \N \mid
        \prod_{n \geq N} \left(
            1 - \sum_{j=0}^{k(n)-1} \delta_{j}(n) \cdot \eps_{j}(n)
            \right)
        \geq 1-\eps
\right\}$.
Now define the strategy $\zstrat_{\eps}$ to be the strategy that plays like $\zstrat$ after skipping forwards by $N_\eps$ gadgets. Thus, by definition $\zstrat_\eps$ attains $1-\eps$ for all $\eps > 0$.

Thus, by playing $\zstrat_\eps$ for an arbitrarily small $\eps$ the chance of winning must be arbitrarily close to 1. Hence, $\valueof{\mdp,\liminfmpobj}{s_{0}} = 1$.
\end{proof}


\begin{lem}\label{claim:divergence}
$\sum_{n= k^{-1}(2)}^{\infty} \dfrac{1}{2} \delta_{j(n)}(n) \varepsilon_{i(n)}(n)$
diverges for all $i(n),j(n) \in \{0,1, \ldots , k(n)-1\}$ with $i(n) < j(n)$.
\end{lem}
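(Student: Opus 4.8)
The plan is to reduce the whole statement to a single \emph{index-independent} pointwise lower bound and then exploit the block structure of $k(n)$. The difficulty compared with \Cref{convdiv} is that here the indices $i(n)$ and $j(n)$ are arbitrary (adversarial) functions of $n$ rather than fixed constants, so one cannot simply invoke the divergence of $\sum_n \delta_j(n)\eps_i(n)$ for a fixed pair $i<j$. Instead I would look for a lower bound on each term $\delta_{j(n)}(n)\eps_{i(n)}(n)$ that is valid uniformly over all admissible choices $i(n)<j(n)$, so that the varying indices become irrelevant.

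First I would rewrite the two families in explicit product form, $\delta_j(n)=1/\log_{j+1}n$ and $\eps_i(n)=1/\bigl(n\prod_{\ell=1}^{i+1}\log_\ell n\bigr)$. For any $i<j$ and any $n$ large enough that $\log_1 n,\dots,\log_{j+1}n\ge 1$, computing the ratio gives $\delta_j(n)\eps_i(n)/\eps_j(n)=\prod_{\ell=i+2}^{j}\log_\ell n\ge 1$, hence the clean pointwise inequality $\delta_j(n)\eps_i(n)\ge \eps_j(n)$. Since $\eps_{\ell+1}(n)=\eps_\ell(n)/\log_{\ell+2}n\le \eps_\ell(n)$, the quantity $\eps_\ell(n)$ is nonincreasing in $\ell$, and because $j(n)\le k(n)-1$ this yields the key uniform estimate $\delta_{j(n)}(n)\eps_{i(n)}(n)\ge \eps_{j(n)}(n)\ge \eps_{k(n)-1}(n)$, which no longer refers to the actual values of $i(n)$ and $j(n)$.

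Next I would partition the summation range according to the value of $k(n)$: by \Cref{def:kn} we have $k(n)=i$ exactly on the block $n\in[h(i),h(i+1))$, so the sum starting at $k^{-1}(2)=h(2)$ is precisely the union of the blocks with $k(n)\ge 2$ (the smallest value for which an admissible pair $i(n)<j(n)$ exists, which also explains the starting index). On the block $k(n)=i$ the estimate above reads $\delta_{j(n)}(n)\eps_{i(n)}(n)\ge \eps_{i-1}(n)$, and the defining clause of $h(i+1)$ in \Cref{def:kn} guarantees $\sum_{n=h(i)}^{h(i+1)-1}\eps_{i-1}(n)\ge 1$. Therefore each block contributes at least $\tfrac12$ to the series $\sum_n \tfrac12\,\delta_{j(n)}(n)\eps_{i(n)}(n)$, and since there are infinitely many blocks the whole series (of nonnegative terms) diverges to $+\infty$.

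The only real obstacle is the adversarial dependence of the indices on $n$; once the index-independent bound $\delta_{j(n)}(n)\eps_{i(n)}(n)\ge \eps_{k(n)-1}(n)$ is in place, the rest is the block bookkeeping that is built into the definition of $h$. The remaining care is routine: one must check that all iterated logarithms appearing in the ratio computation and in the monotonicity of $\eps_\ell(n)$ are $\ge 1$ on each block, which holds because $n\ge h(i)\ge \text{Tower}(i+1)$ forces $\log_i n\ge e$, and this is exactly what the $\text{Tower}(i+2)$ term in the definition of $h(i+1)$ secures.
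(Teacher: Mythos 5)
Your proposal is correct and takes essentially the same route as the paper's proof: you establish the same index-independent bound $\delta_{j(n)}(n)\,\eps_{i(n)}(n)\ge \eps_{k(n)-1}(n)$ (the paper via the monotonicity chain $\delta_{j(n)}(n)\eps_{i(n)}(n)\ge\delta_{j(n)}(n)\eps_{j(n)-1}(n)=\eps_{j(n)}(n)\ge\eps_{k(n)-1}(n)$, you via the equivalent explicit ratio computation), and then perform the same block decomposition of the sum over the intervals $[h(a),h(a+1))$ where $k(n)=a$, using the defining property of $h$ that each block of $\eps_{a-1}$ sums to at least $1$. The only difference is that you spell out the check that the relevant iterated logarithms are $\ge 1$ on each block, which the paper leaves implicit.
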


\begin{proof}
This result is not immediate, because the indexing functions $i(n)$ and $j(n)$
may grow with $k(n)$ as $n$ increases.

Under the assumption that $i(n) < j(n)$ we have that
\[\delta_{j(n)}(n) \eps_{i(n)}(n) \geq \delta_{j(n)}(n) \eps_{j(n)-1}(n) \geq \delta_{k(n)-1}(n) \eps_{k(n)-2}(n) = \eps_{k(n)-1}(n).\]
Thus it suffices to show that $\sum_{n=k^{-1}(2)}^{\infty} \eps_{k(n)-1}(n)$ diverges:

\begin{align*}
\sum_{n=k^{-1}(2)}^{\infty} \eps_{k(n)-1}(n) & = \sum^{\infty}_{a=2} \quad \sum_{n= k^{-1}(a)}^{k^{-1}(a+1)-1} \eps_{a-1}(n) & \text{splitting the sum up}\\
& = \sum^{\infty}_{a=2} \quad \sum_{n= h(a)}^{h(a+1)-1} \eps_{a-1}(n) & \text{$k(n) = h^{-1}(n)$}\\
& \geq \sum^{\infty}_{a=2} 1 & \text{definition of $h(n)$}
\end{align*}

Note that the definition of $h(i)$ says exactly that a block of the form $\sum_{n= h(a)}^{h(a+1)-1} \eps_{a-1}(n)$ is at least $1$.
Hence $\sum_{n= k^{-1}(2)}^{\infty} \dfrac{1}{2} \delta_{j(n)}(n) \varepsilon_{i(n)}(n)$ diverges as required.
\end{proof}

\begin{lem}\label{alglose}
For any sequence $\{\alpha_{n}\}$, where $\alpha_{n} \in [0,1]$ for all $n$, and any functions $i(n), j(n) : \mathbb{N} \to \N $ with $i(n), j(n) \in \{0, 1, \ldots , k(n)-1\}, i(n) < j(n)$ for all $n$, the following sum diverges:

\begin{equation}\label{ijlosingsum}
\sum_{n=k^{-1}(2)}^{\infty} \Big( \delta_{j(n)}(n)(\alpha_{n} \varepsilon_{j(n)}(n) + (1- \alpha_{n}) \varepsilon_{i(n)}(n)) + \delta_{i(n)}(n)(\alpha_{n} + (1- \alpha_{n}) \varepsilon_{i(n)}(n)) \Big).
\end{equation}
\end{lem}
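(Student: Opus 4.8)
The plan is to reduce the claim to \Cref{claim:divergence} by proving a pointwise lower bound on the $n$-th summand that is uniform in $\alpha_n$. Abbreviating $\delta_j,\delta_i,\varepsilon_j,\varepsilon_i,\alpha$ for $\delta_{j(n)}(n),\delta_{i(n)}(n),\varepsilon_{j(n)}(n),\varepsilon_{i(n)}(n),\alpha_n$, the summand of \eqref{ijlosingsum} expands to
\[
\delta_j\alpha\varepsilon_j + \delta_j(1-\alpha)\varepsilon_i + \delta_i\alpha + \delta_i(1-\alpha)\varepsilon_i .
\]
Dropping the two nonnegative terms $\delta_j\alpha\varepsilon_j$ and $\delta_i(1-\alpha)\varepsilon_i$, it is at least $\delta_j(1-\alpha)\varepsilon_i + \delta_i\alpha$. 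I claim this already dominates $\delta_j\varepsilon_i$, whose series diverges by \Cref{claim:divergence}.

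The crux is the elementary inequality $\delta_i \ge \delta_j\varepsilon_i$, which I would verify for every $n$ in the summation range as follows: the $\delta$-family increases in its index while the $\varepsilon$-family decreases, so $\delta_{i(n)}(n) \ge \delta_0(n) = 1/\log n$ and $\varepsilon_{i(n)}(n) \le \varepsilon_0(n) = 1/(n\log n)$; together with $\delta_{j(n)}(n) \le 1$ this gives $\delta_j\varepsilon_i \le 1/(n\log n) \le 1/\log n \le \delta_i$. (The monotonicity of the two families in the relevant range follows from $n \ge \text{Tower}(k(n)+1)$, as in \Cref{lem:welldefined}, which keeps all iterated logarithms involved $\ge 1$.) Granting the inequality, $\delta_i\alpha \ge \delta_j\varepsilon_i\alpha$, and hence
\[
\delta_j(1-\alpha)\varepsilon_i + \delta_i\alpha \ \ge\ \delta_j(1-\alpha)\varepsilon_i + \delta_j\varepsilon_i\alpha \ =\ \delta_j\varepsilon_i .
\]
So each summand of \eqref{ijlosingsum} is at least $\delta_{j(n)}(n)\varepsilon_{i(n)}(n)$, regardless of $\alpha_n$; intuitively, small $\alpha$ puts the mass on $\delta_j(1-\alpha)\varepsilon_i$ and large $\alpha$ on $\delta_i\alpha$, and the convex interpolation never falls below $\delta_j\varepsilon_i$.

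Finally, \Cref{claim:divergence} states that $\sum_{n=k^{-1}(2)}^{\infty}\tfrac12\delta_{j(n)}(n)\varepsilon_{i(n)}(n)$ diverges for all admissible index functions with $i(n) < j(n)$, so $\sum_n \delta_{j(n)}(n)\varepsilon_{i(n)}(n)$ diverges as well, and the pointwise bound forces \eqref{ijlosingsum} to diverge. The only real obstacle is recognizing the summand as a convex combination bounded below by $\delta_j\varepsilon_i$; once that is seen, the heavy lifting (handling index functions that grow with $n$) has already been done in \Cref{claim:divergence}, and nothing further about the precise growth of $k(n)$, $g$, or $h$ is needed here.
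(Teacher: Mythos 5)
Your proof is correct, and while it shares the paper's overall skeleton --- expand the summand, discard the nonnegative terms $\delta_{j(n)}(n)\alpha_n\varepsilon_{j(n)}(n)$ and $\delta_{i(n)}(n)(1-\alpha_n)\varepsilon_{i(n)}(n)$, and reduce divergence of the surviving cross terms to \Cref{claim:divergence} --- the key pointwise estimate is genuinely different. The paper handles the cross terms $(1-\alpha_n)\delta_{j(n)}(n)\varepsilon_{i(n)}(n)+\alpha_n\delta_{i(n)}(n)$ by a case split ($\alpha_n\ge\tfrac{1}{2}$ versus $\alpha_n<\tfrac{1}{2}$), defining a selector $f(n)$ equal to $\tfrac{1}{2}\delta_{i(n)}(n)$ or $\tfrac{1}{2}\delta_{j(n)}(n)\varepsilon_{i(n)}(n)$ accordingly, and then concludes that $\sum_n f(n)$ diverges on the grounds that both constituent series diverge. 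You instead prove the single inequality $\delta_{i(n)}(n)\ge\delta_{j(n)}(n)\varepsilon_{i(n)}(n)$ (correctly justified in the summation range via $\delta_{i(n)}(n)\ge\delta_0(n)=1/\log n\ge 1/(n\log n)=\varepsilon_0(n)\ge\delta_{j(n)}(n)\varepsilon_{i(n)}(n)$, the monotonicity of the two families being legitimate because $n\ge\mathrm{Tower}(k(n)+1)$ keeps all relevant iterated logarithms $\ge 1$, as in \Cref{lem:welldefined}), so that the cross terms, being a convex combination of $\delta_{j(n)}(n)\varepsilon_{i(n)}(n)$ and $\delta_{i(n)}(n)$, are bounded below by $\delta_{j(n)}(n)\varepsilon_{i(n)}(n)$ uniformly in $\alpha_n$. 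This buys two things. First, a tighter, case-free bound (no factor $\tfrac{1}{2}$ needed). Second, and more substantively, it patches a soft spot in the paper's final inference: in general, a nonnegative series whose $n$-th term is selected pointwise from the $n$-th terms of two divergent series need not diverge (the divergent mass of the two series could sit on complementary sets of indices), so the paper's step tacitly requires exactly your domination $\delta_{i(n)}(n)\ge\delta_{j(n)}(n)\varepsilon_{i(n)}(n)$ to guarantee $f(n)\ge\tfrac{1}{2}\delta_{j(n)}(n)\varepsilon_{i(n)}(n)$ pointwise. Your write-up makes this explicit, so of the two arguments yours is the more airtight.
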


\begin{proof}
We can narrow our focus by noticing that
\begin{align*}
& \sum_{n=k^{-1}(2)}^{\infty} \Big( \delta_{j(n)}(n)(\alpha_{n} \varepsilon_{j(n)}(n) + (1- \alpha_{n}) \varepsilon_{i(n)}(n)) + \delta_{i(n)}(n)(\alpha_{n} + (1- \alpha_{n}) \varepsilon_{i(n)}(n)) \Big) \\
& = \sum_{n=k^{-1}(2)}^{\infty} \alpha_{n} \delta_{j(n)}(n) \varepsilon_{j(n)}(n) + (1- \alpha_{n}) \delta_{i(n)}\varepsilon_{i(n)}(n) \qquad \text{Convergent by def.\ of $\delta_{i}(n), \varepsilon_{i}(n)$} \\
& + \sum_{n=k^{-1}(2)}^{\infty}(1- \alpha_{n}) \delta_{j(n)} \varepsilon_{i(n)}(n) + \alpha_{n} \delta_{i(n)}(n)
\end{align*}

Hence the divergence of~\eqref{ijlosingsum} depends only on the divergence of
\[\sum_{n=k^{-1}(2)}^{\infty}(1- \alpha_{n}) \delta_{j(n)} \varepsilon_{i(n)}(n) + \alpha_{n} \delta_{i(n)}(n).\]
No matter how the sequence $\{ \alpha_{n} \}$ behaves, for every $n$ we have that either $\alpha_{n} \geq 1/2$ or $1- \alpha_{n} \geq 1/2$. Hence for every $n$ it is the case that
\begin{align*}
(1-\alpha_{n}) \delta_{j(n)}(n) \varepsilon_{i(n)}(n) +
  \alpha_{n}\delta_{i(n)}(n) \,\geq\, & \dfrac{1}{2} \delta_{j(n)}(n) \varepsilon_{i(n)}(n) \\
\text{or } & \\
\,\geq\, & \dfrac{1}{2} \delta_{i(n)}(n)
\end{align*}

Define the function $f$ as follows:
 \[
    f(n)=\left\{
                \begin{array}{ll}
                \dfrac{1}{2} \delta_{i(n)}(n) \text{ if $\alpha_{n} \geq 1/2$} \\
                  \\
                  \dfrac{1}{2} \delta_{j(n)}(n) \varepsilon_{i(n)}(n) \text{ otherwise}
                \end{array}
              \right.
  \]

Hence no matter how $\{ \alpha_{n} \}$ behaves, we have that
\begin{align*}
\sum_{n= k^{-1}(2)}^{\infty} \Big( \delta_{j(n)}(n)(\alpha_{n} \eps_{j(n)}(n) + (1- \alpha_{n}) \eps_{i(n)}(n)) + \delta_{i(n)}(n)(\alpha_{n} & + (1- \alpha_{n}) \eps_{i(n)}(n)) \Big) \\
& \geq \sum_{n= k^{-1}(2)}^{\infty} f(n).
\end{align*}
We know that both
$\sum_{n= k^{-1}(2)}^{\infty} \dfrac{1}{2} \delta_{j(n)}(n) \varepsilon_{i(n)}(n)$
and
$\sum_{n= k^{-1}(2)}^{\infty} \dfrac{1}{2} \delta_{i(n)}(n)$
diverge for all $i(n),j(n) \in \{0,1, \ldots, k(n)-1\}$,
$i(n) < j(n)$,
as shown in \Cref{claim:divergence}.

Thus
$\sum_{n= k^{-1}(2)}^{\infty} \dfrac{1}{2} \delta_{j(n)}(n) \varepsilon_{i(n)}(n)$
and
$\sum_{n=k^{-1}(2)}^{\infty} \dfrac{1}{2} \delta_{i(n)}(n)$ must also diverge no matter how $i(n)$ and $j(n)$ behave. As a result  it must be the case that
$\sum_{n=k^{-1}(2)}^{\infty} f(n)$ diverges.
Hence~\eqref{ijlosingsum} must be divergent as desired as $i(n)$ and $j(n)$ vary for $n \geq k^{-1}(2)$.
\end{proof}

\begin{restatable}{lem}{lemmainflose}\label{inflose}
For any FR strategy $\zstrat$, almost surely either the mean payoff dips below $-1$ infinitely often, or the run hits a $\perp$ state, i.e.\ $\probm_{\mdp, \zstrat, s_{0}}(\liminfmpobj)=0$.
\end{restatable}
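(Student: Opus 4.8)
The plan is to show that under any FR strategy almost every run suffers infinitely many ``local disasters'' in the gadgets of \Cref{infinitegadget}, where a disaster means either falling into a $\perp$ sink or overshooting at the controlled state $c_n$ so that the total reward drops by a full $m_n$. Fix an FR strategy $\zstrat$ with a memory of size $N$, and let $\mathcal F_n$ be the $\sigma$-algebra generated by the run up to the moment it enters the random state $s_n$ of the $n$-th gadget, so that the carried-in memory mode is $\mathcal F_n$-measurable. Runs that use the white ``skip'' states of \Cref{chain} forever see reward $-1$ at every step, so their mean payoff tends to $-1$ and they are losing; every other run threads all gadgets from some point on. For such runs I define the bad event $A_n \eqdef \{\text{the exit index chosen at } c_n \text{ strictly exceeds the entry index chosen at } s_n\} \cup \{\text{the run enters } \perp \text{ inside gadget } n\}$.

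The core is a pigeonhole lower bound on $\probm(A_n \mid \mathcal F_n)$. Condition on the mode $m$ carried into $s_n$. Since $k(n)\to\infty$, for all large $n$ we have $k(n)>N$, so among the entry indices $0,\dots,k(n)-1$ two of them, $i(n)<j(n)$, drive the memory into the same mode at $c_n$, hence force the \emph{same} exit distribution. Writing $\alpha_n$ for the probability that this distribution assigns to an exit index $>i(n)$, a short case analysis using that $\eps_\cdot(n)$ is decreasing in its index gives
\[
\probm(A_n \mid \mathcal F_n) \;\geq\; \delta_{j(n)}(n)\bigl(\alpha_n\eps_{j(n)}(n)+(1-\alpha_n)\eps_{i(n)}(n)\bigr)+\delta_{i(n)}(n)\bigl(\alpha_n+(1-\alpha_n)\eps_{i(n)}(n)\bigr),
\]
since for the small entry $i(n)$ any high exit forces a reward drop while any low exit keeps the $\perp$-risk $\geq\eps_{i(n)}(n)$, and for the large entry $j(n)$ a high exit keeps it $\geq\eps_{j(n)}(n)$ and a low exit keeps it $\geq\eps_{i(n)}(n)$. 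The realized sequence of carried-in modes turns the pigeonholed quantities into concrete index functions $i(n)<j(n)$ and coefficients $\alpha_n\in[0,1]$, so \Cref{alglose} applies pathwise and yields $\sum_n \probm(A_n\mid\mathcal F_n)=\infty$ almost surely on the event that the run enters every gadget.

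By the conditional (L\'evy) second Borel--Cantelli lemma, almost surely infinitely many $A_n$ occur. To convert this into failure of $\liminfmpobj$: if $\perp$ is ever reached the run is losing outright, and otherwise infinitely many reward drops of size $\geq m_n$ occur. Using $m_n = 2k(n)\sum_{i=N^*}^{n-1}m_i$ from \Cref{def:kn}, the total reward accumulated before gadget $n$ has absolute value at most $\sum_{i<n}k(i)m_i\leq m_n/2$, so right after such a drop the total reward is $\leq -m_n/2$; as each gadget is traversed in a bounded number of steps, the step count at gadget $n$ is $\Theta(n)$, and $m_n$ grows faster than any polynomial, so the mean payoff there is below $-1$. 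Hence the mean payoff dips below $-1$ infinitely often, $\liminf$ mean payoff $\leq -1<0$, and $\probm_{\mdp,\zstrat,s_0}(\liminfmpobj)=0$.

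The step I expect to be the main obstacle is that for a \emph{randomized} memory update two entries need not map to literally the same mode, so the clean single-$\alpha$ pigeonhole above fails. The fix I would use is that the exit distribution at $c_n$ is always one of the $\leq N$ fixed distributions $\pi_{m'}$ attached to the modes, so $\sum_i \bar\pi_{m,i}(\text{exit}=i)\leq N$ and therefore at most $2N$ entries $i$ can satisfy $\bar\pi_{m,i}(\text{exit}=i)>\tfrac12$; for each of the remaining $\geq k(n)-2N$ entries ($i\geq 1$) the bad probability given that entry is at least $(1-\bar\pi_{m,i}(\text{exit}=i))\eps_{i-1}(n)\geq\tfrac12\eps_{i-1}(n)$. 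Since $\delta_i(n)\eps_{i-1}(n)=\eps_i(n)$, discarding the at most $2N$ excluded (largest) terms leaves $\probm(A_n\mid\mathcal F_n)\geq\tfrac12\sum_{i=2N}^{k(n)-1}\eps_i(n)\geq\tfrac12\eps_{k(n)-1}(n)$, whose sum over $n$ diverges by \Cref{claim:divergence}. This recovers $\sum_n\probm(A_n\mid\mathcal F_n)=\infty$ directly and feeds back into the Borel--Cantelli argument above.
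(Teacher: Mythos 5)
Your proposal is correct, and its skeleton matches the paper's proof: the same gadget-local bad event (overshooting the entry index at $c_n$, or hitting $\perp$), the same pigeonhole lower bound
$\delta_{j(n)}(n)\bigl(\alpha_n\eps_{j(n)}(n)+(1-\alpha_n)\eps_{i(n)}(n)\bigr)+\delta_{i(n)}(n)\bigl(\alpha_n+(1-\alpha_n)\eps_{i(n)}(n)\bigr)$,
divergence via \Cref{alglose}, and the observation that the growth of $m_n$ (\Cref{def:kn}) makes rewards from past gadgets irrelevant. Where you genuinely diverge is in two sub-steps. First, randomized memory updates: the paper shows they can be assumed deterministic (for each observed reward level there is a locally optimal mode $\memconf(x)$, so switching to it deterministically minimizes the local bad probability), and only then pigeonholes on a confused pair; your fallback instead keeps updates randomized and bounds the total ``mimicking mass'' $\sum_i \bar\pi_{m,i}(\text{exit}=i)\le N$ (this inequality survives the fact that $\bar\pi_{m,i}$ is a \emph{mixture} of the mode distributions, not literally one of them, since a convex combination cannot exceed the columnwise maximum), so all but $2N$ entries mismatch with probability $\ge \tfrac12$, and $\delta_i(n)\eps_{i-1}(n)=\eps_i(n)$ gives the direct bound $\tfrac12\eps_{k(n)-1}(n)$, whose sum diverges by \Cref{claim:divergence}. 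This route is arguably cleaner: it bypasses both the de-randomization argument and \Cref{alglose} entirely. Second, the passage to an almost-sure conclusion: the paper uses $\liminfmpobj\subseteq\eventually\always\neg{\it Bad}$, continuity of measures, and the product bound $\prod_n(1-e_n)=0$ via \Cref{prop:product-sum}; you use L\'evy's conditional second Borel--Cantelli lemma, which achieves the same effect. You also explicitly dispose of runs that skip forever in the white chain of \Cref{chain}, a case the paper leaves implicit. One point of care: define $\mathcal F_n$ on the product space $\memory\times S$ of the strategy's implementation (so the realized carried-in mode is measurable) and adopt a convention for runs that have already fallen into $\perp$ (e.g.\ declare $A_n$ to hold there); with that, your pathwise application of the pigeonhole and the conditional Borel--Cantelli argument are watertight.
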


\begin{proof}[Outline of the proof.]
Let $\zstrat$ be some FR strategy with $k$ memory modes.
We prove a \emph{lower bound} $e_n$ on the probability of a local error
(reaching a $\perp$ state, or seeing a mean payoff $\leq -1$)
in the current $n$-th gadget. This lower bound $e_n$ holds regardless
of events in past gadgets, regardless of the memory mode of $\zstrat$
upon entering the $n$-th gadget, and cannot be improved by
$\zstrat$ randomizing its memory updates.

The main idea is that,
once $k(n) > k+1$
(which holds for $n \geq N'$ sufficiently large)
by the Pigeonhole Principle there will always be
a memory mode confusing at least two different branches $i(n),j(n) \neq k(n)$
of the previous random choice at state $s_n$.
This confusion yields a probability $\geq e_n$
of reaching a $\perp$ state or seeing a mean payoff $\leq -1$,
regardless of events in past gadgets and regardless
of the memory upon entering the $n$-th gadget.
We show that $\sum_{n \geq N'} e_n$ is a \emph{divergent} series.
Thus, by \Cref{prop:product-sum}, $\prod_{n \geq N'} (1 - e_n)=0$.
Hence, $\probm_{\mdp, \zstrat, s_{0}}(\liminfmpobj) \leq \prod_{n \geq
N'} (1 - e_n) = 0$.
\end{proof}

\begin{proof}[Full proof]
Let $\sigma$ be some FR strategy with $k$ memory modes.
Our MDP consists of a linear sequence of gadgets (\Cref{infinitegadget}) and
is in particular acyclic.
The $n$-th gadget is entered at state $s_n$ and takes 4 steps.
Locally in the $n$-th gadget there are 3 possible scenarios:
\begin{enumerate}
\item
The random transition picks some branch $i$ at $s_n$ and the strategy then
picks a branch $j > i$ at $c_n$.

By the definition of the payoffs (multiples of $m_n$; cf.~\Cref{def:kn}),
this means that we see a mean payoff $\le -1$, regardless of events in past
gadgets.
This is because the numbers $m_n$ grow so quickly
with $n$ that even the combined maximal possible rewards of all past gadgets
are so small in comparison that they do not matter for the outcome
in the $n$-th gadget, i.e.,
rewards from past gadgets cannot help to avoid seeing a mean payoff $\le -1$
in the above scenario.
\item
We reach the losing sink $\bot$ (and thus will keep seeing a mean payoff $\le -1$
forever). This happens with probability $\eps_j(n)$ if the strategy picks
some branch $j$ at $c_n$, regardless of past events.
\item
All other cases.
\end{enumerate}
As explained above, due to the definition of the rewards (\Cref{def:kn}),
events in past gadgets do not make the difference between (1),(2),(3) in the
current gadget.
It just depends on the choices of the strategy $\sigma$ in the current gadget.

Let ${\it Bad}_n$ be the event of seeing either of the two unfavorable outcomes (1) or (2) in the $n$-th
gadget. Let $p_n$ be the probability of ${\it Bad}_n$ under strategy $\sigma$.
Since $\sigma$ has memory, the probabilities $p_n$ are not necessarily
independent.
However, we show \emph{lower bounds} $e_n \le p_n$ that hold universally for every
FR strategy $\sigma$ with $\le k$ memory modes and every $n$ such that
$k(n) > k+1$.
The lower bound $e_n$ will hold regardless of the memory mode of $\sigma$ upon
entering the $n$-th gadget.

{\bf\noindent Memory updates.}
First we show that $\sigma$ randomizing its memory update after observing
the random transition from state $s_n$ does \emph{not} help to reduce the
probability of event ${\it Bad}_n$.
I.e., we show that without restriction $\sigma$ can update its memory
deterministically after observing the transition from state $s_n$.

Once in the controlled state $c_n$, the strategy $\sigma$ can base its choice only on the
current state (always $c_n$ in the $n$-th gadget)
and on the current memory mode.
Thus, in state $c_n$, in each memory mode $\memconf$, the strategy has
to pick a distribution $\mathcal{D}^{c_n}_{\memconf}$
over the available transitions from $c_n$.
By the finiteness
of the number of memory modes of $\sigma$
(just $\le k$ by our assumption above),
for each possible
reward level $x$ (obtained in the step from the preceding random transition
from $s_n$)
there is a best memory mode $\memconf(x)$ such that $\mathcal{D}^{c_n}_{\memconf(x)}$ is optimal
(in the sense of minimizing the probability of event ${\it Bad}_n$)
for that particular reward level $x$.
(In case of a tie, just use an arbitrary tie break, e.g.,
some pre-defined linear order on the memory modes.)

Therefore, upon witnessing a reward level $x$ in the random transition from
state $s_n$, the strategy $\sigma$ can minimize the probability of event
${\it Bad}_n$ by \emph{deterministically} setting its memory to $\memconf(x)$.
Thus randomizing its memory update does not help to reduce the probability
of ${\it Bad}_n$, and we may assume without restriction that $\sigma$ updates
its memory deterministically.

(Note that the above argument only works because it is local to the current
gadget where we have a finite number of decisions (here just one),
we have a finite number of memory modes, and a one-dimensional criterion for
local optimality (minimizing the probability of event ${\it Bad}_n$).
We do \emph{not} claim that randomized memory updates are useless for every strategy
in every MDP and every objective.)

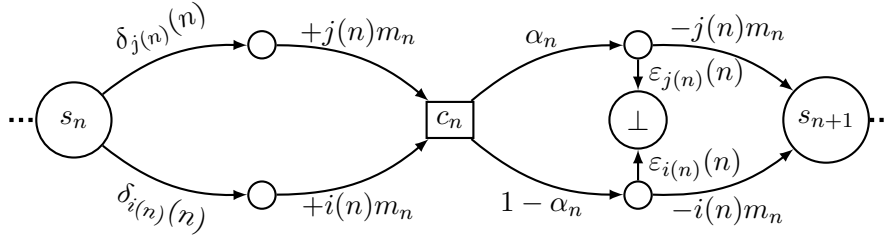
\begin{figure}
\begin{center}
    \begin{tikzpicture}

    \node[draw,circle, minimum height=1cm] (R) at (0,0) {$s_{n}$};
    \node[draw, minimum height=0.4cm, minimum width=0.4cm] (N) at (5,0) {$c_{n}$};
    \node[draw,circle] (S) at (10,0) {$s_{n+1}$};
    \node (M) at (6,0){};

    \node[] (Left) at (-1,0) {};
    \node[] (Right) at (11,0) {};
    \draw [dotted, ultra thick] (Left) -- (R);
    \draw [dotted, ultra thick] (S) -- (Right);

    \node[draw, circle] (Top) at (2.5,1) {};
    \node[draw, circle] (Bot) at (2.5,-1) {};

    \node[draw, circle] at (7.5,-1) (B) {};
    \node[draw, circle] (E) at (7.5,1) {};
    \node[draw, circle] at (7.5,0) (Deadup) {$\perp$};

    \draw[->,>=latex] (R) edge[bend left=20] node[above, midway, sloped]{$\delta_{j(n)}(n)$} (Top)
                      (Top) edge[bend left=20] node[above, midway]{$+j(n)m_{n}$} (N)
                      (R) edge[bend right=20] node[below, midway, sloped]{$\delta_{i(n)}(n)$} (Bot)
                      (Bot) edge[bend right=20] node[below, midway]{$+i(n)m_{n}$} (N);

    \draw[->,>=latex] (N) edge[bend left=20] node[above, midway]{$ \alpha_{n}$} (E)
                      (E) edge[bend left=20] node[above, midway]{$-j(n)m_{n}$} (S)
                      (N) edge[bend right=20] node[below, midway]{$1- \alpha_{n}$} (B)
                      (B) edge[bend right=20] node[below, midway]{$-i(n)m_{n}$} (S)
                      (E) edge node[right, midway]{$\varepsilon_{j(n)}(n)$} (Deadup)
                      (B) edge node[right, midway]{$\varepsilon_{i(n)}(n)$} (Deadup);

    \end{tikzpicture}
    \caption{When transitions $i(n)$ and $j(n)$ are confused in the player's memory, the player's choice is at least as bad as the reduced play in this simplified gadget.}%
    \label{ijcase}
    \end{center}
    \end{figure}

\begin{clm}\label{claim:confusion-simple}
Assume that the transitions $i(n)$ and $j(n)$
(with $i(n) < j(n)$) leading to state $c_n$ are confused in the memory of the
strategy. Then we can assume without restriction that the strategy
only plays transitions $i(n)$ and $j(n)$ with nonzero probability
from state $c_n$, since every other behavior yields a higher probability of
the event ${\it Bad}_n$ (cf.~\Cref{ijcase}).
\end{clm}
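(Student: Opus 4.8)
The plan is to use that ``confusion'' pins down a single action distribution at $c_n$, and then to show that every branch other than $i(n)$ and $j(n)$ is pointwise dominated, so can be discarded. Since the transitions $i(n)$ and $j(n)$ are confused, upon arriving at $c_n$ the strategy is in the same memory mode whether the random step from $s_n$ took branch $i(n)$ or branch $j(n)$; by the preceding discussion (we may assume deterministic memory updates) it therefore plays one and the same distribution $\mathcal{D}$ over the outgoing transitions of $c_n$ in both cases. I will argue that, among all such $\mathcal{D}$, the contribution of these two branches to $\probm({\it Bad}_n)$ is minimised by a $\mathcal{D}$ supported on $\{i(n),j(n)\}$.

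First I would record, for each branch $\ell$ playable at $c_n$, the conditional probability $b_i(\ell)$ (resp.\ $b_j(\ell)$) of ${\it Bad}_n$ given arrival via $i(n)$ (resp.\ $j(n)$) and play of $\ell$. Having arrived via $a\in\{i(n),j(n)\}$ and playing $\ell$, the reward accumulated inside the gadget is $(a-\ell)m_n$: when $\ell>a$ this is $\le -m_n$, which by the choice of the $m_n$ in \Cref{def:kn} forces a mean payoff $\le -1$, so ${\it Bad}_n$ is certain; when $\ell\le a$ the net reward is $\ge 0$ and the only route into ${\it Bad}_n$ is the sink $\perp$, entered with probability $\eps_\ell(n)$. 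Using that $\eps_\bullet(n)$ is strictly decreasing in its index, this gives $b_i(\ell)=b_j(\ell)=\eps_\ell(n)$ for $\ell<i(n)$; $b_i(i(n))=b_j(i(n))=\eps_{i(n)}(n)$; $b_i(\ell)=1$ and $b_j(\ell)=\eps_\ell(n)$ for $i(n)<\ell<j(n)$; $b_i(j(n))=1$ and $b_j(j(n))=\eps_{j(n)}(n)$; and $b_i(\ell)=b_j(\ell)=1$ for $\ell>j(n)$.

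The key step is then a pointwise domination check: each $\ell\notin\{i(n),j(n)\}$ is dominated, in both scenarios simultaneously, by one of $i(n),j(n)$. For $\ell<i(n)$ branch $i(n)$ wins both scenarios, since $\eps_{i(n)}(n)<\eps_\ell(n)$; for $i(n)<\ell<j(n)$ branch $j(n)$ ties the $i(n)$-scenario ($1=1$) and strictly wins the $j(n)$-scenario ($\eps_{j(n)}(n)<\eps_\ell(n)$); and for $\ell>j(n)$ branch $j(n)$ again ties the $i(n)$-scenario and wins the $j(n)$-scenario ($\eps_{j(n)}(n)<1$). Because the contribution of the two confused branches to $\probm({\it Bad}_n)$ is $\delta_{i(n)}(n)\sum_\ell \mathcal{D}(\ell)\,b_i(\ell)+\delta_{j(n)}(n)\sum_\ell \mathcal{D}(\ell)\,b_j(\ell)$, moving the mass $\mathcal{D}(\ell)$ onto the dominating branch cannot increase either inner sum, hence cannot increase this contribution. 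Iterating over all such $\ell$ produces a distribution supported on $\{i(n),j(n)\}$ whose ${\it Bad}_n$-probability is no larger, which puts us exactly in the reduced gadget of \Cref{ijcase} with $\alpha_n\eqdef\mathcal{D}(j(n))$.

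I expect the only real difficulty to be the bookkeeping of this case split, and it rests entirely on two facts imported from earlier: the strict monotonicity of $\eps_\bullet(n)$ in its index, and that a within-gadget reward $\le -m_n$ forces a mean payoff $\le -1$ (which holds because the $m_n$ of \Cref{def:kn} grow fast enough that the combined rewards of all previous gadgets are negligible). Given these, the domination is immediate and the claim follows.
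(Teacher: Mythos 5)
Your proof is correct and follows essentially the same route as the paper's: the paper splits the played branch $x(n)$ into the five cases $x(n)=i(n)$, $x(n)=j(n)$, $x(n)>j(n)$, $x(n)<i(n)$, and $i(n)<x(n)<j(n)$, and shows the last three are dominated by the first two using exactly the two facts you cite (monotonicity of $\eps_\bullet(n)$ in its index, and that a within-gadget reward $\le -m_n$ forces mean payoff $\le -1$). Your reformulation as pointwise domination of the conditional probabilities $b_i(\ell),b_j(\ell)$ plus linearity over the distribution $\mathcal{D}$ is just a slightly more explicit handling of the randomized choice, which the paper treats implicitly.
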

\begin{proof}
When confusing transitions $i(n)$ and $j(n)$ with $i(n) < j(n)$,
the player's choice of transition from $c_n$ can be broken down into 5 distinct cases.
The player can choose transition $x(n)$ as follows.
\begin{enumerate}
\item $x(n) = i(n)$
\item $x(n) = j(n)$
\item $x(n) > j(n)$
\item $x(n) < i(n)$
\item $i(n) < x(n) < j(n)$
\end{enumerate}

\noindent Case 1 leads to a probability of ${\it Bad}_n$ of $\delta_{j(n)}(n) \eps_{i(n)}(n) + \delta_{i(n)}(n) \eps_{i(n)}(n)$.

\noindent Case 2 leads to a probability of ${\it Bad}_n$ of $\delta_{j(n)}(n) \eps_{j(n)}(n) + \delta_{i(n)}(n)$.

\noindent Case 3 leads to a mean payoff $\le -1$ (and thus ${\it Bad}_n$) with probability $1$.
This is the worst possible case.

\noindent Case 4 leads to a probability of ${\it Bad}_n$ of
$\delta_{j(n)}(n) \eps_{x(n)}(n) + \delta_{i(n)}(n) \eps_{x(n)}(n) >
\delta_{j(n)}(n) \eps_{i(n)}(n) + \delta_{i(n)}(n) \eps_{i(n)}(n)$, i.e.,
this is worse than Case 1.

\noindent Case 5 leads to a probability of ${\it Bad}_n$ of
$\delta_{j(n)}(n) \eps_{x(n)}(n) + \delta_{i(n)}(n) > \delta_{j(n)}(n) \eps_{j(n)}(n) + \delta_{i(n)}(n)$,
i.e., this is worse than Case 2.

Hence, without restriction we can assume that
only cases 1 and 2 will get played with positive probability,
that is to say that in state $c_n$ the strategy will only randomize over
the outgoing transitions $i(n)$ and $j(n)$.
\end{proof}
{\bf\noindent The lower bounds $e_n$.}
Now we consider an FR strategy $\sigma$ that without restriction updates its memory
\emph{deterministically} after each random choice (from state $s_n$) in the $n$-th
gadget. It can still randomize its actions, however.

Let $N'$ be the minimal number such that for all $n \ge N'$ we
have $k(n) > k+1$. In particular, this implies $N' \ge k^{-1}(2)$,
and thus we can apply \Cref{alglose} later.

Once $n \ge N'$, then by the Pigeonhole Principle there will always be a
memory mode confusing at least two different transitions $i(n),j(n) \neq k(n)$
from state $s_n$ to $c_n$.
Note that this holds regardless of the memory mode of $\sigma$ upon entering
the $n$-th gadget.
(The strategy might confuse many other scenarios, but just one confused pair
$i(n),j(n) \neq k(n)$ is enough for our lower bound.)
Without loss of generality, let $j(n)$ be larger of the two confused transitions, i.e., $i(n) < j(n)$.
Let $i(n)$ and $j(n)$ be two functions taking values in $\{0, 1, \ldots, k(n)-1\}$ where $i(n) < j(n)$ for all $n$.

Confusing two transitions $i(n)$ and $j(n)$ from $s_n$ to $c_n$
(where without restriction $i(n) < j(n)$), the strategy is in the same
memory mode afterwards. However, it can still randomize its choices in state
$c_n$.
To prove our lower bound on the probability of ${\it Bad}_n$,
it suffices to consider the case where the strategy
only randomizes over the outgoing transitions $i(n)$ and $j(n)$ from state $c_n$.
This is because, by \Cref{claim:confusion-simple}, every other
behavior would perform even worse, in the sense of yielding a higher
probability of ${\it Bad}_n$.

That is to say that the strategy picks the higher $j(n)$-th branch with some probability $\alpha_n$
and the lower $i(n)$-th branch with probability $1-\alpha_n$.
(We leave the probabilities $\alpha_n$ unspecified here. Using \Cref{alglose},
we'll show that our result holds regardless of their values.)

The local chance of
the event ${\it Bad}_n$  is then lower bounded by
\[e_n \eqdef \delta_{j(n)}(n)(\alpha_{n} \varepsilon_{j(n)}(n) + (1- \alpha_{n}) \varepsilon_{i(n)}(n)) + \delta_{i(n)}(n)(\alpha_{n} + (1- \alpha_{n}) \varepsilon_{i(n)}(n)).
\]
The term above just expresses a case distinction.
In the first scenario, the random transition chooses the $j(n)$-th branch
(with probability $\delta_{j(n)}(n)$) and then the strategy
chooses the $j(n)$-th branch with probability $\alpha_n$
and the lower $i(n)$-th branch with probability $1-\alpha_n$, and you obtain the
respective chances of reaching the sink $\bot$.
In the second scenario, the random transition chooses the $i(n)$-th branch
(with probability $\delta_{i(n)}(n)$). If the strategy then
chooses the higher $j(n)$-th branch (with probability $\alpha_n$) then we have
outcome (1), yielding a mean payoff $\le -1$.
If the strategy chooses the $i(n)$-th branch (with probability $1-\alpha_n$)
then we still have a chance of $\varepsilon_{i(n)}(n)$
of reaching the sink.

Since, as shown above, randomized memory updates do
not help to reduce the probability of ${\it Bad}_n$, the lower bound
$e_n$ for deterministic updates carries over to the general case.
Thus, even for general randomized FR strategies $\sigma$ with $k$ memory
modes, the probability of event ${\it Bad}_n$ in the $n$-th gadget
(for $n \ge N'$) is lower bounded by $e_n$,
regardless of the memory mode $\memconf$ upon entering the
gadget and regardless of events in past gadgets.
We write $\sigma[\memconf]$ for the strategy $\sigma$ in memory mode
$\memconf$ and obtain
\begin{equation}\label{eq:local-bad}
\forall n \ge N'.\ \forall \memconf.\ \probm_{\mathcal{M}, \sigma[m], s_{n}}({\it Bad}_n) \ge e_n
\end{equation}

{\bf\noindent The final step.}
Let ${\it Bad} \eqdef \cup_n {\it Bad}_n$.

Since $i(n), j(n) \neq k(n)$ and $N' \ge k^{-1}(2)$,
we apply \Cref{alglose} to conclude that the
series $\sum_{n=N'}^{\infty} e_n =
\sum_{n=N'}^{\infty} \delta_{j(n)}(n)(\alpha_{n} \varepsilon_{j(n)}(n) +
(1- \alpha_{n}) \varepsilon_{i(n)}(n)) + \delta_{i(n)}(n)(\alpha_{n} +
(1- \alpha_{n}) \varepsilon_{i(n)}(n))$
is divergent, regardless of the behavior of $i(n), j(n)$ or the sequence
$\{\alpha_{n}\}$.

Finally, we obtain
\begin{align*}
& \probm_{\mathcal{M}, \sigma, s_{0}}(\liminfmpobj) \\
& \le \probm_{\mathcal{M}, \sigma, s_{0}}(\eventually\always \neg{\it Bad})
& \text{set inclusion}\\
& = \probm_{\mathcal{M}, \sigma, s_{0}}\left(\bigcup_l\eventually^{\le
l}\always \neg{\it Bad}\right) & \text{def.\ of $\eventually$}\\
& = \lim_{l \to \infty} \probm_{\mathcal{M}, \sigma, s_{0}}(\eventually^{\le l}\always \neg{\it Bad}) & \text{continuity of measures}\\
& \le \lim_{l \to \infty} \probm_{\mathcal{M}, \sigma, s_{0}}\left(\bigcap_{n \ge l/4} \neg{\it Bad}_n\right) & \text{4 steps per gadget}\\
& \le \lim_{4N' \le l \to \infty} \prod_{n \ge l/4 \ge
N'} (\max_\memconf\,\probm_{\mdp, \sigma[\memconf], s_{n}}(\neg{\it Bad}_n))
&\begin{tabular}{l}
linear sequence of gadgets,        \\
finite memory and past events      \\
do not help to avoid ${\it Bad}_n$
\end{tabular} \\
& \le \lim_{4N' \le l \to \infty} \prod_{n \ge l/4 \ge N'} (1-e_n) & \text{by~\eqref{eq:local-bad}}\\
& = \lim_{4N' \le l \to \infty} 0 & \text{divergence of $\sum_{n = N'}^\infty e_n$ and \Cref{prop:product-sum}}\\
& = 0    &  \qedhere
\end{align*}
\end{proof}

\Cref{infwin} and \Cref{inflose} yield the following theorem.

\begin{thm}\label{infinitesummary}
There exists a countable, finitely branching and acyclic MDP $\mdp$ whose step counter is implicit in the state for which
$\valueof{\mdp,\liminfmpobj}{s_{0}} = 1$ and any FR strategy $\zstrat$ is such that
$\probm_{\mdp, s_{0}, \zstrat}(\liminfmpobj)=0$.
In particular, there are no $\eps$-optimal $k$-bit Markov strategies
for any $k \in \N$ and any $\eps < 1$ for
$\liminfmpobj$ in
countable MDPs.
\end{thm}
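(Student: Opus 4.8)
The plan is to obtain the theorem by directly combining the two preceding lemmas with the structural observations already recorded for the constructed MDP. First I would recall, from the remark following the chain construction in \Cref{chain}, that the MDP $\mdp$ is countable, finitely branching and acyclic, and that its step counter is implicit in the state, since for every state $s$ there is a fixed $n_s$ such that every path from $s_0$ to $s$ has length $n_s$. This immediately discharges the structural requirements in the statement, so no further work is needed there.

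Next I would invoke \Cref{infwin} to conclude $\valueof{\mdp,\liminfmpobj}{s_0}=1$, and \Cref{inflose} to conclude that every FR strategy $\zstrat$ satisfies $\probm_{\mdp,s_0,\zstrat}(\liminfmpobj)=0$. These are exactly the two displayed assertions in the theorem, so the first half follows by citation alone, with no intermediate computation.

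For the ``in particular'' clause about $k$-bit Markov strategies, the key point is that, because the step counter is already implicit in the state of $\mdp$, a step counter carried by the strategy supplies no information beyond what the current state already provides. Hence any $k$-bit Markov strategy on this $\mdp$ — which by definition uses a step counter plus $k$ bits of general-purpose memory — can be simulated by an FR strategy with at most $2^k$ memory modes, the step-counter component being redundant. Applying \Cref{inflose} to this FR strategy gives that it attains $0$ for $\liminfmpobj$; since the value is $1$, no such strategy can be $\eps$-optimal for any $\eps<1$, as required.

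The only step that is not a routine citation is this last reduction: one must make precise that in an MDP whose states already encode the step count, a step counter adds no power over plain finite memory, so that a $k$-bit Markov strategy collapses to an FR strategy with finitely many modes. I expect this to be the main (though mild) obstacle; once the redundancy is spelled out, the whole statement reduces cleanly to \Cref{infwin} and \Cref{inflose}.
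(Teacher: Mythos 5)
Your proposal is correct and takes essentially the same route as the paper: the paper's proof of this theorem is literally just the citation of \Cref{infwin} and \Cref{inflose}, with the structural properties (countable, finitely branching, acyclic, step counter implicit in the state) already recorded in the remark following the chain construction. Your explicit reduction of a $k$-bit Markov strategy to an FR strategy with at most $2^k$ memory modes — justified by the fact that the step count is a function of the current state, so the strategy's step counter is redundant — is exactly the (implicit) reasoning behind the paper's ``in particular'' clause, and it is sound.
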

\begin{proof}
Proved by \Cref{infwin} and \Cref{inflose}.
\end{proof}

\bigskip

\Cref{infinitesummary} shows that even if the step counter is implicit in the
state, infinite memory is still required.
We now adapt our construction from \Cref{infinitegadget} such that
instead the current total reward is implicit in the state, in order to show that
a reward counter plus arbitrary finite memory
does not suffice for ($\eps$-)optimal strategies for $\liminfmpobj$ either.


\begin{figure*}
\begin{center}
\begin{tikzpicture}

\node[draw, circle, minimum width=1cm] (S1) at (0,0) {$s_{n}$};
\node[draw, circle, minimum width=1cm] (S2) at (12,0) {$s_{n+1}$};
\node[draw, minimum width=0.5cm, minimum height=0.5cm] (C) at (6,0) {$c_{n}$};

\node[draw, circle] (P1) at (2,2) {};
\node[draw, circle] (P2) at (2,0.67) {};
\node[draw, circle] (P3) at (2,-0.67) {};
\node[draw, circle] (P4) at (2,-2) {};

\node[draw, circle] (P5) at (4,2) {};
\node[draw, circle] (P6) at (4,0.67) {};
\node[draw, circle] (P7) at (4,-0.67) {};
\node[draw, circle] (P8) at (4,-2) {};

\node[draw, circle] (Q1) at (9,2) {};
\node[draw, circle] (Q2) at (9,0.67) {};
\node[draw, circle] (Q3) at (9,-0.67) {};
\node[draw, circle] (Q4) at (9,-2) {};

\node[draw, circle, minimum width=0.5cm] (Dead) at (7,-3) {$\perp$};

\node[] (I1) at (2,1.8) {};
\node[] (I2) at (2,0.87) {};
\node[] (I3) at (2,0.47) {};
\node[] (I4) at (2,-0.47) {};

\node[] (I5) at (4,1.8) {};
\node[] (I6) at (4,0.87) {};
\node[] (I7) at (4,0.47) {};
\node[] (I8) at (4,-0.47) {};

\node[] (I9) at (9,1.8) {};
\node[] (I10) at (9,0.87) {};
\node[] (I11) at (9,0.47) {};
\node[] (I12) at (9,-0.47) {};

\draw[dotted, thick] (P1) -- (P5);
\draw[dotted, thick] (P2) -- (P6);
\draw[dotted, thick] (P3) -- (P7);
\draw[dotted, thick] (P4) -- (P8);

\draw[dotted, thick] (I1) -- (I2);
\draw[dotted, thick] (I3) -- (I4);
\draw[dotted, thick] (I5) -- (I6);
\draw[dotted, thick] (I7) -- (I8);

\draw[->,>=latex]
(S1) edge node[above, midway, sloped]{\scriptsize $\delta_{k(n)}(n)$} (P1)
(S1) edge node[above, midway, sloped]{\scriptsize $\delta_{i}(n)$} (P2)
(S1) edge node[above, midway, sloped]{\scriptsize $\delta_{1}(n)$} (P3)
(S1) edge node[below, midway, sloped]{\scriptsize $\delta_{0}(n)$} (P4);

\draw[->,>=latex]
(P5) edge (C)
(P6) edge (C)
(P7) edge (C)
(P8) edge (C);

\draw [decorate, decoration={brace,amplitude=8pt},xshift=0pt,yshift=0pt]
(2.2,2) -- (3.8,2) node [black,midway,above, yshift=5pt] {\scriptsize $n m_{n}^{k(n)}$ steps};
\draw [decorate, decoration={brace,amplitude=8pt},xshift=0pt,yshift=0pt]
(2.2,0.67) -- (3.8,0.67) node [black,midway,above, yshift=5pt] {\scriptsize $nm_{n}^{i}$ steps};
\draw [decorate, decoration={brace,amplitude=8pt},xshift=0pt,yshift=0pt]
(2.2,-0.67) -- (3.8,-0.67) node [black,midway,above, yshift=5pt] {\scriptsize $nm_{n}$ steps};
\draw [decorate, decoration={brace,amplitude=8pt},xshift=0pt,yshift=0pt]
(2.2,-2) -- (3.8,-2) node [black,midway,above, yshift=5pt] {\scriptsize $n$ steps};

\draw[->,>=latex]
(C) edge node [midway, above, sloped]{\scriptsize $-m_{n}^{k(n)}$} (Q1)
(C) edge node [midway, above, sloped]{\scriptsize $-m_{n}^{i}$} (Q2)
(C) edge node [midway, above, sloped]{\scriptsize $-m_{n}$} (Q3)
(C) edge node [midway, above, sloped, pos=0.4]{\scriptsize $-1$} (Q4);

\draw[->,>=latex]
(Q1) edge node [midway, above, sloped]{\scriptsize $+m_{n}^{k(n)}$} (S2)
(Q2) edge node [midway, above, sloped]{\scriptsize $+m_{n}^{i}$} (S2)
(Q3) edge node [midway, above, sloped]{\scriptsize $+m_{n}$} (S2)
(Q4) edge node [midway, above, sloped, pos=0.6]{\scriptsize $+1$} (S2);

\draw[->,>=latex]
(Q2) edge node [near end, left]{\scriptsize $\eps_{i}(n)$} (Dead)
(Q3) edge node [pos=0.65, right]{\scriptsize $\eps_{1}(n)$} (Dead)
(Q4) edge node [midway, below]{\scriptsize $\eps_{0}(n)$} (Dead);

\draw[dotted, thick] (I9) -- (I10);
\draw[dotted, thick] (I11) -- (I12);

\end{tikzpicture}
\caption{All transition rewards are $0$ unless specified. Recall that $\sum \delta_{i}(n) \cdot \eps_{i}(n)$ is convergent and $\sum \delta_{j}(n) \cdot \eps_{i}(n)$ is divergent for all $i,j$ with $j > i$. The negative reward incurred before falling into the $\perp$ state is reimbursed. We do not show it in the figure for readability.
In the state before $s_{n+1}$, if the correct transition was chosen, the mean payoff is $-1/n$. If the incorrect transition was chosen, then either the mean payoff is $<-m_{n}/n$, or the risk of falling into $\perp$ is too high.
}%
\label{stepcounter}
\end{center}
\end{figure*}
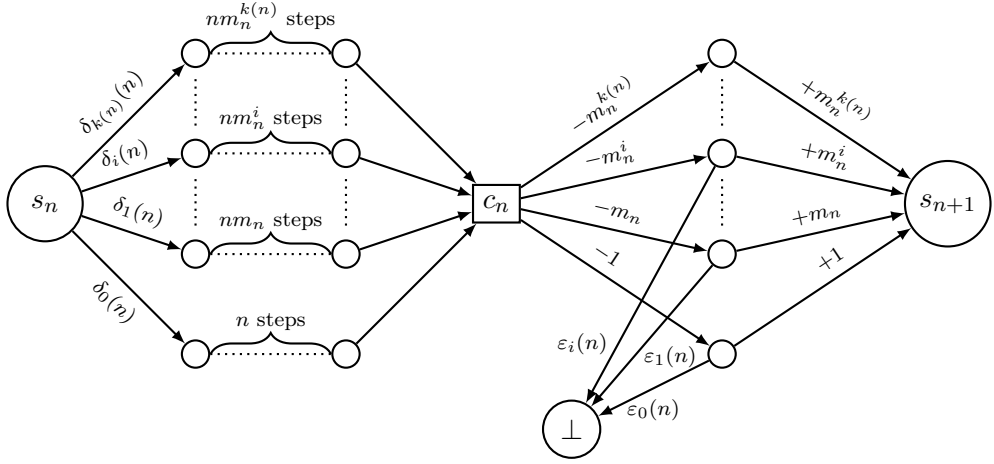

We use the example from \Cref{stepcounter}. It is very similar to \Cref{infinitegadget}, but differs in the following ways.
\begin{itemize}
\item The current total reward level is implicit in each state.
\item The step counter is no longer implicit in the state.
\item In the random choice, instead of changing the reward levels in each choice, it is the path length that differs.
\item The definition of $m_{n}$ is different, it is now $m_{n} \defeq \sum_{i = N^{*}}^{n-1} m_{i}^{k(n)}$ with $m_{N^{*}} \defeq 1$.
\end{itemize}

\noindent
We construct a finitely branching acyclic MDP $\mdp_{\text{RI}}$ (Reward Implicit) which has the total reward implicit in the state. We do so by chaining together the gadgets from \Cref{stepcounter} as is shown in \Cref{chain}.

In order to convince ourselves that the history of the play in past gadgets does not affect the outcome of the current gadget, we do a brief analysis of the path length and total reward involved in a run going through the $n$th gadget.
Consider the scenario where the play took the $i$-th random choice. In this case, the path length is upper bounded by $\left( \sum_{i=N^{*}}^{n-1} 4 + m_{i}^{k(i)} \right)
+ 4 + m_{n}^{i} \leq 2m_{n}^{i}$.
In the case where the player chooses the $j$th controlled choice with $j \geq i$, this gives us an average reward of \[-\frac{m_{n}^{j}}{2nm_{n}^{i}}.\] This is $<-1$ when $j>i$ and converges to $0$ with $-\frac{1}{n}$ when $j=i$. The choices $j<i$ are losing due to the risk of falling into the losing sink as described previously.

Hence, the analysis within each gadget still reduces to mimicking the random choice in the controlled choice. This allows us to simply reuse the results from the step counter encoded case in order to obtain symmetrical results for the reward counter encoded case.

\begin{lem}\label{liminfmpstepval1}
 $\valueof{\mdp_{\text{\emph{RI}}}, \liminfmpobj}{(s_{0},0)} = 1$.
\end{lem}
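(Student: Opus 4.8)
The plan is to mirror the proof of \Cref{infwin} almost verbatim, exploiting the observation (made just before the lemma) that within each gadget of \Cref{stepcounter} the only way to play well is again to mimic the random branch. First I would define the \emph{mimicking} strategy $\zstrat$ that, upon entering $c_n$ after the random transition selected branch $i$ at $s_n$, deterministically picks the controlled branch indexed $i$. Under $\zstrat$ the cost $-m_n^i$ is exactly cancelled by the matching reimbursement $+m_n^i$, so every gadget is completed at the same total reward level it was entered with, and the only way to lose is to fall into a $\perp$ state. The probability of this happening in the $n$-th gadget is $\sum_{i=0}^{k(n)-1}\delta_i(n)\eps_i(n)$ (the top branch carries $\eps_{k(n)}(n)=0$), which is precisely the quantity that governed survival in the step-counter construction.

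Next I would verify that surviving all gadgets already guarantees $\liminfmpobj$. Using the path-length/reward bookkeeping sketched before the lemma, after the $-m_n^i$ transition the accumulated reward is depressed by $m_n^i$ while the path length is of order $n\,m_n^i$, so the mean payoff dips only to $O(1/n)$ before the reimbursement restores it; since $-1/n \to 0$, the $\liminf$ of the mean-payoff sequence along any run avoiding $\perp$ is $\ge 0$. Hence $\probm_{\mdp_{\text{RI}},(s_0,0),\zstrat}(\liminfmpobj \wedge \always\neg\perp) = \probm_{\mdp_{\text{RI}},(s_0,0),\zstrat}(\always\neg\perp)$.

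Then I would reuse the series estimates unchanged. By \Cref{prop:product-sum}, the survival probability $\prod_{n\ge N^*}\bigl(1-\sum_{i}\delta_i(n)\eps_i(n)\bigr)$ is strictly positive iff $\sum_n\sum_i \delta_i(n)\eps_i(n)<\infty$, and the bound $\sum_{n\ge N^*}\sum_{i=0}^{k(n)-1}\delta_i(n)\eps_i(n) \le \sum_{i\ge 1}2^{-i}\le 1$ follows exactly as in \Cref{infwin} from the definition of $k(n)$ through $g$ together with \Cref{convdiv}. To push the value to $1$ I would invoke the skip gadgets of \Cref{chain}: define $\zstrat_\eps$ to follow $\zstrat$ after skipping forward by $N_\eps$ gadgets, where $N_\eps$ is chosen via \Cref{prop:tail-product} so that the tail product $\prod_{n\ge N_\eps}\bigl(1-\sum_i\delta_i(n)\eps_i(n)\bigr)\ge 1-\eps$. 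Since every surviving run of $\zstrat_\eps$ satisfies $\liminfmpobj$, this gives $\probm_{\mdp_{\text{RI}},(s_0,0),\zstrat_\eps}(\liminfmpobj)\ge 1-\eps$, and letting $\eps\to 0$ yields $\valueof{\mdp_{\text{RI}},\liminfmpobj}{(s_0,0)}=1$.

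The only genuinely new point to confirm — and the main obstacle — is the mean-payoff bookkeeping of the second paragraph: I must check that the within-gadget dip really is $O(1/n)$ and, crucially, that skipping an arbitrarily long prefix does not leave the mean payoff permanently below $0$. For the latter, the $-1$ rewards between the white skip-states are reimbursed on entry to the chosen gadget, so the skip contributes only a finite prefix of bad mean payoffs, which is irrelevant to the $\liminf$. Once this bookkeeping is verified, the rest is an exact transcription of the combinatorial series arguments already established for the step-counter version, which is exactly why the reward-counter case can reuse those results.
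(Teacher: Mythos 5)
Your proposal is correct and takes essentially the same approach as the paper: the paper's proof likewise defines the mimicking strategy, observes that mimicking bounds the within-gadget mean-payoff dip by $-1/n$ so that falling into $\perp$ is the only way to lose, and then declares the rest (the survival-product estimate and the gadget-skipping argument via \Cref{prop:tail-product}) identical to \Cref{infwin}. The path-length/reward bookkeeping you flag as the "genuinely new point" is precisely the verification the paper compresses into its one-line remark about the $-1/n$ bound.
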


\begin{proof}
We define a strategy $\sigma$ which, in $c_{n}$ always mimics the random choice in $s_{n}$.
Playing according to $\sigma$, the only way to lose is by dropping into the bottom state. This is because by mimicking, the mean payoff in each gadget is lower bounded by $-1/n$.
The rest of the proof is identical to \Cref{infwin}.
\end{proof}

\begin{lem}\label{liminfmpstepval0}
Any FR strategy $\sigma$ in $\mdp_{\text{\emph{RI}}}$ is such that $\probm_{\mdp_{\text{\emph{RI}}}, s_{0}, \sigma}(\liminfmpobj)=0$.
\end{lem}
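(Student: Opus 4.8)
The plan is to mirror the proof of \Cref{inflose} almost verbatim, exploiting the fact that the gadget of \Cref{stepcounter} was designed precisely so that its local analysis collapses to the same mimicking problem as the gadget of \Cref{infinitegadget}. I would fix an arbitrary FR strategy $\sigma$ with $k$ memory modes and show that it loses almost surely. Since $\mdp_{\text{RI}}$ is again a linear, acyclic chain of gadgets --- though now the $n$-th gadget spans a finite but growing number of steps rather than exactly four --- the only genuinely new point to check is the purely local claim: in the $n$-th gadget the event ${\it Bad}_n$, namely seeing a mean payoff $\le -1$ or entering the sink $\perp$, occurs with a probability governed solely by the strategy's choices inside that gadget, independently of the history.

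The first key step is to verify this locality from the path-length and reward bookkeeping sketched just before the lemma. If the run enters through random branch $i$ and the controller then plays controlled branch $j$, the mean payoff seen inside the gadget is, up to the negligible contribution of earlier gadgets, $-m_n^{j}/(2 n m_n^{i})$. Because the $m_n$ grow fast enough under the reward-implicit redefinition $m_n \defeq \sum_{i=N^{*}}^{n-1} m_i^{k(n)}$ (cf.\ \Cref{def:kn}), this is $< -1$ whenever $j > i$ and tends to $0$, like $-1/n$, when $j = i$, uniformly over all histories; branches $j < i$ are dominated by their larger $\perp$-risk. This reproduces exactly the trichotomy of outcomes (1), (2), (3) used in \Cref{inflose}, with the same probabilities $\delta_i(n)$ and $\eps_i(n)$, so the history-independence of ${\it Bad}_n$ carries over.

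Granting this, the remainder is a transcription. I would reuse the memory-update argument (randomizing the memory update right after the random transition cannot reduce the probability of ${\it Bad}_n$, so the update may be taken deterministic), and then invoke the Pigeonhole Principle: once $k(n) > k+1$, which holds for all $n \ge N'$ with $N' \ge k^{-1}(2)$, some memory mode must confuse two distinct branches $i(n) < j(n)$ with $i(n), j(n) \neq k(n)$. \Cref{claim:confusion-simple} applies unchanged, since its proof uses only the values $\delta_i(n), \eps_i(n)$, which are identical here; hence I may assume the controller randomizes only over branches $i(n)$ and $j(n)$, with weights $\alpha_n$ and $1-\alpha_n$. This yields the identical lower bound
\[
e_n \defeq \delta_{j(n)}(n)\bigl(\alpha_n \eps_{j(n)}(n) + (1-\alpha_n)\eps_{i(n)}(n)\bigr) + \delta_{i(n)}(n)\bigl(\alpha_n + (1-\alpha_n)\eps_{i(n)}(n)\bigr)
\]
on $\probm_{\mdp_{\text{RI}}, \sigma[\memconf], s_n}({\it Bad}_n)$, valid for every memory mode $\memconf$ and independently of the past.

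Finally, \Cref{alglose} shows that $\sum_{n \ge N'} e_n$ diverges regardless of the behaviour of $i(n), j(n)$ or the sequence $\{\alpha_n\}$, so by \Cref{prop:product-sum} we obtain $\prod_{n \ge N'}(1-e_n) = 0$. The concluding chain of inequalities from \Cref{inflose} then goes through essentially unchanged: the inclusion $\liminfmpobj \subseteq \eventually\always\neg{\it Bad}$ (with ${\it Bad} \eqdef \bigcup_n {\it Bad}_n$), continuity of measures, and the product bound over a tail of gadgets using the local bounds $e_n$. The one cosmetic change is that, gadgets no longer having a uniform length, I replace the ``four steps per gadget'' estimate by the observation that each gadget occupies a finite contiguous block of steps, so any bounded horizon $l$ is eventually overtaken and $\eventually^{\le l}\always\neg{\it Bad}$ still forces $\neg{\it Bad}_n$ for an unboundedly growing tail of gadgets. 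This gives $\probm_{\mdp_{\text{RI}}, s_0, \sigma}(\liminfmpobj) = 0$. The only real obstacle is the first step --- confirming that the redefined rewards and the path-length-based (rather than reward-based) random branching keep ${\it Bad}_n$ a local, history-independent event with the very same $e_n$; once that is pinned down, the combinatorial core is literally the proof of \Cref{inflose}.
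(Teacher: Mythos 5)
Your proposal is correct and follows essentially the same route as the paper: the paper's own proof of this lemma simply observes that a run in $\mdp_{\text{RI}}$ loses either by hitting a sink or by the mean payoff dipping below $-1$ infinitely often, and then states that the argument of \Cref{inflose} applies verbatim, relying on the path-length/reward bookkeeping done just before the lemma (the $-m_n^{j}/(2nm_n^{i})$ computation) to guarantee the locality of ${\it Bad}_n$ --- exactly the one point you flagged as needing verification. Your additional care about the non-uniform gadget lengths in the final continuity-of-measures step is a legitimate refinement of a detail the paper glosses over, but it does not constitute a different approach.
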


\begin{proof}
When playing with finitely many memory modes, there are two ways for a run in $\mdp_{\text{RI}}$ to lose. Either it falls into a losing sink, or it never falls into a sink but its mean payoff is $<-1$. The proof that either of these occurs with probability $1$ is the same as in
\Cref{inflose}.
\end{proof}

\begin{thm}\label{mpstepepslower}
There exists a countable, finitely branching, acyclic MDP $\mdp_{\text{\emph{RI}}}$ with initial state $(s_{0},0)$ with the total reward implicit in the state such that
\begin{itemize}
\item $\valueof{\mdp_{\text{\emph{RI}}}, \liminfmpobj}{(s_{0},0)} = 1$,
\item for all FR strategies $\sigma$, we have $\probm_{\mdp_{\text{\emph{RI}}}, (s_{0},0), \sigma}(\liminfmpobj)=0$.
\end{itemize}
\end{thm}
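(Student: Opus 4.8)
The plan is to recognize that the two displayed assertions of the theorem are precisely the contents of \Cref{liminfmpstepval1} and \Cref{liminfmpstepval0}, so that the proof reduces to combining those two lemmas on the already-constructed MDP. The witnessing MDP $\mdp_{\text{RI}}$ is obtained by chaining the reward-implicit gadgets of \Cref{stepcounter} as in \Cref{chain}, starting at $n=N^{*}$; by construction it is countable, finitely branching and acyclic, and the running total reward (but not the step counter) is encoded in the state. It therefore remains only to cite the two lemmas, and I would recall their mechanisms to make the combination self-contained.

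For $\valueof{\mdp_{\text{RI}}, \liminfmpobj}{(s_{0},0)} = 1$ I would invoke \Cref{liminfmpstepval1}, whose proof mirrors \Cref{infwin}: the mimicking strategy that repeats at $c_{n}$ the index chosen by the random transition at $s_{n}$ keeps the mean payoff $\ge -1/n$ in the $n$-th gadget, so its only failure mode is dropping into a $\perp$-sink, an event of probability $\sum_{j}\delta_{j}(n)\eps_{j}(n)$ in gadget $n$. Since this series converges in $n$, \Cref{prop:product-sum} yields positive survival probability, and skipping a long enough prefix of gadgets pushes the survival probability above $1-\eps$ by \Cref{prop:tail-product}; hence the value is $1$.

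For the lower bound, that every FR strategy $\sigma$ satisfies $\probm_{\mdp_{\text{RI}}, (s_{0},0), \sigma}(\liminfmpobj)=0$, I would invoke \Cref{liminfmpstepval0}, which reuses the structure of \Cref{inflose}. Fixing $k$ memory modes, for every $n$ with $k(n)>k+1$ the Pigeonhole Principle forces two distinct non-top branches $i(n)<j(n)$ into a common memory mode; confusing them yields a uniform lower bound $e_{n}$ on the local bad event (mean payoff $\le -1$ or reaching $\perp$) that is independent of the incoming memory mode and of the history of earlier gadgets. By \Cref{alglose} the series $\sum_{n} e_{n}$ diverges, so $\prod_{n}(1-e_{n})=0$ by \Cref{prop:product-sum}, and the telescoping estimate of \Cref{inflose} gives attainment $0$.

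The one genuinely delicate point — and the reason the preparatory path-length and total-reward bookkeeping precedes \Cref{liminfmpstepval1} — is that events in earlier gadgets must not rescue a wrong choice in the current gadget. In \Cref{stepcounter} this is enforced by giving the $i$-th random branch of gadget $n$ a path length of order $n m_{n}^{i}$ while the controlled side dips the running total by $-m_{n}^{j}$ before restoring it, so that the state just before $s_{n+1}$ exhibits a mean payoff of order $-m_{n}^{j}/(n m_{n}^{i})$. This tends to $0$ like $-1/n$ exactly when the controlled choice mimics the random branch ($j=i$), is $<-1$ for all large $n$ whenever $j>i$ (as $m_{n}^{j-i}\ge m_{n}$ eventually dominates $n$), and is excluded for $j<i$ by the $\perp$-risk. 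Establishing these inequalities is the crux that transports the step-counter analysis to the reward-counter setting; granting it, the theorem follows immediately from \Cref{liminfmpstepval1} and \Cref{liminfmpstepval0}.
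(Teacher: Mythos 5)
Your proposal is correct and takes essentially the same route as the paper: there the theorem is proved exactly by citing \Cref{liminfmpstepval1} and \Cref{liminfmpstepval0}, and your account of their mechanisms (mimicking with \Cref{prop:product-sum} and \Cref{prop:tail-product} for the value-$1$ part; pigeonhole confusion of two non-top branches plus \Cref{alglose} for the FR lower bound) matches the paper's proofs, which are inherited from \Cref{infwin} and \Cref{inflose}. The path-length/mean-payoff bookkeeping you single out as the delicate point (mean payoff $\approx -1/n$ when mimicking, $<-1$ for aggressive errors $j>i$, and the $\perp$-risk excluding $j<i$) is precisely the analysis the paper carries out in the text introducing $\mdp_{\text{RI}}$ before those two lemmas.
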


\begin{proof}
This follows from \Cref{liminfmpstepval1} and \Cref{liminfmpstepval0}.
\end{proof}

\begin{rem}\label{rem:glue}
The MDPs from \Cref{infinitegadget}
and \Cref{stepcounter}
show that good strategies for $\liminfmpobj$ require ``at least''
a reward counter and a step counter, respectively.
There does, of course, exist a \emph{single MDP}
where good strategies for $\liminfmpobj$
require at least both a step
counter and a reward counter. We construct such an MDP by `gluing' the two
different MDPs together via an initial random state which points to each with
probability $1/2$.
\end{rem}


\subsubsection{Optimal Strategies}


\begin{figure}[t]
\begin{center}
\begin{tikzpicture}

\node[draw, minimum height=0.5cm, minimum width=0.5cm] (I1) at (-0.5,0) {\scriptsize $s_{0}$};
\node[] (I2) at (-0.5,-4.5) {};
\node[] (I3) at (7,0) {};
\node[] (I4) at (7,-4.5) {};

\node[] (HI1) at (-0.5,-3) {};
\node[] (HI2) at (7,-1.5) {};
\node[] (HI3) at (2,-6) {};
\node[] (HI4) at (4,-6) {};
\node[] (HI5) at (6,-6) {};

\draw [decorate,decoration={brace,mirror,amplitude=10pt},xshift=-4pt,yshift=0pt] (-2,0.75) -- (-2,-0.75) node [black,midway,xshift=-27pt] {\footnotesize Row 1};

\draw [decorate,decoration={brace,mirror,amplitude=10pt},xshift=-4pt,yshift=0pt] (-2,-3.75) -- (-2,-5.25) node [black,midway,xshift=-27pt] {\footnotesize Row 2};

\node[draw, circle, minimum width=1cm] (W1) at (1,0) {\scriptsize $i$};
\node[draw, circle, minimum width=1cm] (W2) at (3,0) {\scriptsize $i+1$};
\node[draw, circle, minimum width=1cm] (W3) at (5,0) {\scriptsize $i+2$};

\node[draw, circle, minimum width=1cm] (W4) at (1,-4.5) {\scriptsize $i+1$};
\node[draw, circle, minimum width=1cm] (W5) at (3,-4.5) {\scriptsize $i+2$};
\node[draw, circle, minimum width=1cm] (W6) at (5,-4.5) {\scriptsize $i+3$};

\node[draw, circle, minimum width=1cm] (R1) at (1,-1.5) {\scriptsize $r_{i, 1}$};
\node[draw, circle, minimum width=1cm] (R2) at (3,-1.5) {\scriptsize $r_{i+1, 1}$};
\node[draw, circle, minimum width=1cm] (R3) at (5,-1.5) {\scriptsize $r_{i+2, 1}$};

\node[draw, circle, fill=black] (B1) at (2,-1.5) {};
\node[draw, circle, fill=black] (B2) at (2,-2.25) {};
\node[draw, circle, fill=black] (B3) at (2,-3) {};

\node[draw, circle, fill=black] (B4) at (4,-1.5) {};
\node[draw, circle, fill=black] (B5) at (4,-2.25) {};
\node[draw, circle, fill=black] (B6) at (4,-3) {};

\draw[->, >=latex, dotted, thick] (I1) -- (W1);
\draw[->, >=latex, dotted, thick] (I2) -- (W4);
\draw[->, >=latex, dotted, thick] (W3) -- (I3);
\draw[->, >=latex, dotted, thick] (HI1) -- (W4);
\draw[->, >=latex, dotted, thick] (R3) -- (HI2);
\draw[->, >=latex, dotted, thick] (W6) -- (I4);
\draw[->, >=latex, dotted, thick] (W4) -- (HI3);
\draw[->, >=latex, dotted, thick] (W5) -- (HI4);
\draw[->, >=latex, dotted, thick] (W6) -- (HI5);

\draw[->,>=latex]
(W1) edge node[right, midway]{\scriptsize $-m_{i+2}$} (R1)
(R1) edge (B1)
(B1) edge (B2)
(B2) edge (B3)
(B4) edge (B5)
(B5) edge (B6)
(B3) edge node[above, midway, sloped]{\scriptsize $+m_{i+2}$} (W5)
(W2) edge node[right, midway]{\scriptsize $-m_{i+3}$} (R2)
(R2) edge (B4)
(B6) edge node[above, midway, sloped]{\scriptsize $+m_{i+3}$}(W6)
(W3) edge node[right, midway]{\scriptsize $-m_{i+4}$} (R3);
\draw[->,>=latex]
(W1) edge (W2)
(W2) edge (W3)
(W4) edge (W5)
(W5) edge (W6);

\end{tikzpicture}
\caption{Each row represents a copy of the MDP depicted in \Cref{chain}.
  Each white circle labeled with a number $i$
  represents the correspondingly numbered gadget (like in \Cref{infinitegadget})
  from that MDP\@. Now, instead
  of the bottom states in each gadget leading to an infinite losing chain,
  they lead to a restart state $r_{i,j}$ which leads to a fresh copy of the
  MDP (in the next row).
  Each restart incurs a penalty guaranteeing that the mean payoff dips
  below $-1$ before refunding it and continuing on in the next copy of the
  MDP\@. The states $r_{i,j}$ are labeled such that the $j$ indicates that if a
  run sees this state, then it is the $j$th restart. The $i$ indicates that
  the run entered the restart state from the $i$th gadget of the current copy
  of the MDP\@. The black states are dummy states inserted in order to preserve
  path length throughout.
\vspace{-5mm}
}%
\label{restart}
\end{center}
\end{figure}
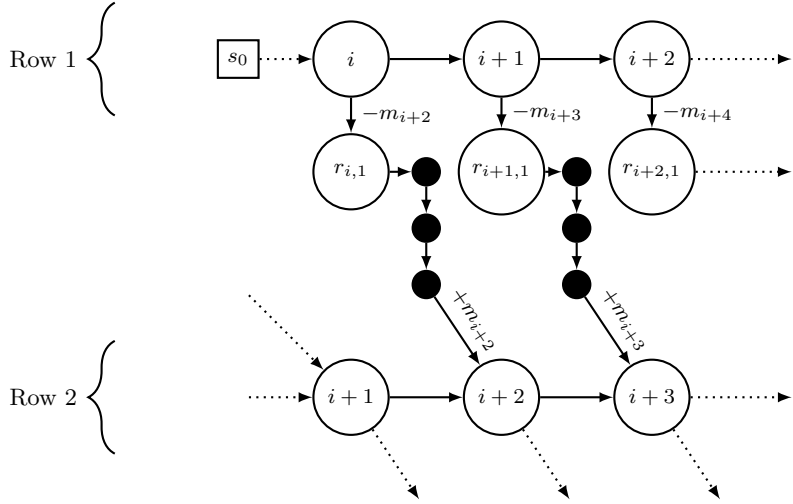


Even for acyclic MDPs with the step counter implicit in the state,
optimal (and even almost sure winning) strategies for $\liminfmpobj$
require infinite memory.
To prove this,
we consider a variant of the MDP from the previous section
which has been augmented to include restarts from the $\perp$ states.
For the rest of the section, $\mdp$ is the MDP constructed in \Cref{restart}. Initially the gadgets used are like \Cref{infinitegadget}, then we present similar results using \Cref{stepcounter} as the gadgets instead.

\begin{rem}
$\mdp$ is acyclic, finitely branching and the step counter is implicit
in the state. We now refer to the rows of \Cref{restart} as gadgets,
i.e., a gadget is a single instance of \Cref{chain} where the $\perp$ states lead to the next row.
\end{rem}

\begin{lem}\label{almostwin}
There exists a strategy $\sigma$ such that $\probm_{\mdp,\sigma,s_{0}}(\liminfmpobj)=1$.
\end{lem}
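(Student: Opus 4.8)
The plan is to reuse the near-optimal strategies $\zstrat_\eps$ of \Cref{infwin} row by row, choosing their error parameters to shrink fast enough that, almost surely, only finitely many restarts ever happen. Recall that each row of the MDP in \Cref{restart} is an isomorphic copy of the chain MDP of the previous section (\Cref{chain}), except that its former $\perp$ states now redirect — via a restart state $r_{i,j}$ whose penalty is immediately refunded — into the start of the next row. Fix any sequence $\eps_k \in (0,1)$ with $\prod_{k} \eps_k = 0$, for instance $\eps_k \defeq 2^{-k}$, and let $\sigma$ play, while in the $k$-th row, exactly as $\zstrat_{\eps_k}$ of \Cref{infwin} plays in a fresh copy of the chain. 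Since distinct rows occupy disjoint sets of states, the current row (and hence which $\zstrat_{\eps_k}$ to emulate) is determined by the current state, so $\sigma$ is a well-defined, infinite-memory strategy.

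First I would bound the number of restarts. By \Cref{infwin}, within a single copy the strategy $\zstrat_{\eps_k}$ can only fail by hitting a $\perp$ state, and it does so with probability at most $\eps_k$; in the restart MDP this is precisely the event of moving from row $k$ into row $k+1$. Hence, conditioned on entering row $k$ at its start, the probability of ever restarting out of it is at most $\eps_k$. Since the MDP is acyclic and each row is entered afresh at its start state, an induction gives
\[
\probm_{\mdp,s_0,\sigma}(\text{reach row } k+1) \;\le\; \prod_{j=1}^{k} \eps_j \;\to\; 0 \quad (k\to\infty).
\]
The events $\{\text{reach row } k\}$ are decreasing, so the probability of infinitely many restarts is $\lim_{k} \probm_{\mdp,s_0,\sigma}(\text{reach row } k) = 0$. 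Thus almost surely only finitely many restarts occur.

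Next I would show that on this probability-one event the run wins $\liminfmpobj$. On the event of finitely many restarts, the run eventually enters a last row $K$ and plays $\zstrat_{\eps_K}$ there forever without hitting a $\perp$ state. By the equality in \Cref{infwin}, conditioned on never hitting $\perp$ within a copy the strategy $\zstrat_{\eps_K}$ wins $\liminfmpobj$ in that copy almost surely (indeed, while not restarting the total reward in a copy never dips below $0$, so its running mean payoff stays $\ge 0$). It then remains to transfer this tail win to the entire run. The run decomposes as an almost-surely finite prefix — comprising rows $1,\dots,K-1$ together with the finitely many restart penalties — followed by this winning tail. Because $\liminfmpobj$ is shift invariant, and the prefix contributes only a bounded total reward which, divided by $n\to\infty$, vanishes from the mean, the whole run satisfies $\liminf_{n} \frac1n\sum_{j=0}^{n-1} \reward(\rho_e(j)) \ge 0$. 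Hence $\probm_{\mdp,s_0,\sigma}(\liminfmpobj)=1$, as claimed; together with the trivial upper bound this also reconfirms $\valueof{\mdp,\liminfmpobj}{s_0}=1$.

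The main obstacle is the interplay between the restart penalties and the $\liminf$ mean payoff: each restart deliberately drives the running mean below $-1$, so a run that restarted infinitely often would have $\liminf$ mean payoff $\le -1$ and lose. The proof therefore hinges on the quantitative choice of $\eps_k$ with $\prod_k \eps_k = 0$, which forces almost surely finitely many restarts; once that is in hand, shift invariance of $\liminfmpobj$ absorbs the finite prefix of penalties. The only point needing a careful check is that \Cref{infwin} applies verbatim to every copy, so that the per-row failure probability is uniformly bounded by $\eps_k$ independently of how and when the row was entered.
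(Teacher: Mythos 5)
Your overall architecture matches the paper's: (i) show that almost surely only finitely many restarts occur by bounding per-row restart probabilities and multiplying, and (ii) argue that a run with finitely many restarts spends an infinite mimicking tail in a final row, never lets its total reward dip below $0$ there, and hence wins $\liminfmpobj$ by shift invariance. Part (ii) is sound. The gap is in part (i), exactly at the point you flag as ``needing a careful check'': \Cref{infwin} does \emph{not} apply verbatim to rows entered after a restart. In the MDP of \Cref{restart}, a restart out of gadget $i$ of one row does not lead to the start state of the next row; it leads, through the black dummy states (which exist precisely to preserve path length, so that the step counter remains implicit in the state), to the start of gadget $i+2$ of the next row, i.e., into the \emph{middle} of that row's gadget chain. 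The white skip chain of a copy of \Cref{chain} is reachable only from that copy's initial state, so after a restart the initial skip of $N_{\eps_k}$ gadgets that $\zstrat_{\eps_k}$ must perform cannot be executed; ``play as $\zstrat_{\eps_k}$ plays in a fresh copy'' is then an emulation along histories that $\zstrat_{\eps_k}$ itself never produces, and carries no guarantee. Any implementable continuation (mimicking from the actual entry position $p$) restarts out of that row with probability $1-\prod_{n\ge p}\bigl(1-\sum_{j}\delta_j(n)\eps_j(n)\bigr)$, and since entry positions advance by only two gadgets per restart while $N_{2^{-k}}$ may grow arbitrarily fast, there is no reason this is $\le 2^{-k}$. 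Hence your displayed bound $\probm_{\mdp,s_0,\sigma}(\text{reach row }k+1)\le\prod_{j=1}^{k}\eps_j$ is unjustified, and with it the claim that infinitely many restarts have probability $0$.

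The repair is what the paper actually does, and it shows that shrinking error parameters are unnecessary: fix $\zstrat_{1/2}$ once, let row $1$ skip to $N_{1/2}$ and then mimic, and in every later row simply mimic from wherever the restart drops you. Entry positions never decrease along a run, so every row is entered at a gadget position $\ge N_{1/2}$, and its restart probability is at most $1-\prod_{n\ge N_{1/2}}\bigl(1-\sum_{j}\delta_j(n)\eps_j(n)\bigr)\le 1/2$. A \emph{uniform} per-row bound by any constant $<1$ already makes the probability of at least $i$ restarts at most $2^{-i}\to 0$, which is all part (i) needs; your part (ii) (tail mimicry plus shift invariance, with the finitely many refunded penalties vanishing in the mean) then completes the proof as in the paper.
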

\begin{proof}[Outline of the proof.]
Recall the strategy $\zstrat_{1/2}$ defined in \Cref{infwin} which achieves at least $1/2$ in each gadget that it is played in.
We then construct the almost surely winning strategy $\zstrat$ by concatenating $\zstrat_{1/2}$ strategies in the sense that $\zstrat$ plays just like $\zstrat_{1/2}$ in each gadget from each gadget's start state.

Since $\zstrat$ achieves at least $1/2$ in every gadget that it sees, with probability $1$, runs generated by $\zstrat$ restart only finitely many times.
The intuition is then that a run restarting finitely many times must spend an infinite tail in some final gadget.
Since $\zstrat$ mimics in every controlled state, not restarting anymore directly implies that the total payoff is eventually always $\geq 0$. Hence all runs generated by $\zstrat$ and restarting only finitely many times satisfy $\liminfmpobj$.
Therefore all but a nullset of runs generated by $\zstrat$ are winning, i.e.\ $\probm_{\mdp,s_{0},\zstrat}(\liminfmpobj)=1$.
\end{proof}

\begin{proof}[Full Proof]

We will show that there exists a strategy $\zstrat$ that satisfies the mean payoff objective with probability 1 from $s_{0}$.
Towards this objective we recall the strategy $\zstrat_{1/2}$ defined in \Cref{infwin}.
In a given gadget of this MDP with restarts, playing $\sigma_{1/2}$ in said gadget, there is a probability of at most $1/2$ of restarting in that gadget.
We then construct strategy $\zstrat$ by concatenating $\zstrat_{1/2}$ strategies in the sense that $\zstrat$ plays just like $\zstrat_{1/2}$ in each gadget from each gadget's start state.

Let $\playset$ be the set of runs induced by $\zstrat$ from $s_{0}$. We partition $\playset$ into the sets $\playset_{i}$ and $\playset_{\infty}$ of runs such that
$\playset = \left( \bigcup_{i=0}^{\infty} \playset_{i} \right) \cup \playset_{\infty}$. We define for $i=0$
\[\playset_{0} \eqdef \{ \rho \in \playset \mid
    \forall \ell \in \N.\, \neg \eventually(r_{\ell, 1}) \},\]
for $i \geq 1$
\[\playset_{i} \eqdef \{ \rho \in \playset \mid
    \exists j \in \N.\, \eventually(r_{j,i}) \wedge \forall \ell \in \N.\, \neg \eventually(r_{\ell,i+1}) \}\]
and
\[\playset_{\infty} \eqdef \{ \rho \in \playset \mid
    \forall i \in \N\ \exists j \in \N.\,  \eventually(r_{j,i})\}.\]
That is to say for all $i \in \N$, $\playset_{i}$ is the set of runs in $\playset$ that restart exactly $i$ times and $\playset_{\infty}$ is the set of runs in $\playset$ that restart infinitely many times.

We go on to define the sets of runs $\playset_{\geq i} \eqdef \bigcup_{j=i}^{\infty} \playset_{j} $ which are those runs which restart at least $i$ times. In particular note that $\playset_{\infty}= \bigcap_{i=0}^{\infty}\playset_{\geq i}$ and $\playset_{\geq i+1} \subseteq \playset_{\geq i}$.

By construction, any run $\rho \in \playset_{\infty}$ is losing since the negative reward that is collected upon restarting instantly brings the mean payoff below $-1$ by definition of $m_{n}$. Thus restarting infinitely many times translates directly into the mean payoff dropping below $-1$ infinitely many times and thus a strictly negative $\liminf$ mean payoff. As a result it must be the case that $\playset_{\infty} \subseteq \neg\liminfmpobj$.

After every restart, the negative reward is reimbursed. Intuitively, going
through finitely many restarts does not damage the chances of winning.
We now show that, except for a nullset, the runs restarting only finitely many
times satisfy the objective.
Indeed, every run with only finitely many restarts must spend an infinite tail
in some final gadget in which it does not restart.
In this final gadget, the strategy plays just like $\sigma_{1/2}$, which means that it mimics the random choice in every controlled state.
Since, by assumption, there are no more restarts, we obtain
$\probm_{\mdp,s_{0},\zstrat}(\playset_{i}) = \probm_{\mdp,s_{0},\zstrat}(\playset_{i} \wedge \forall j \in \N, \always (\neg r_{j,i+1}))$.
We then apply \Cref{infwin} to obtain that
\begin{equation}\label{eq:alwayswin}
\probm_{\mdp,s_{0},\zstrat}(\playset_{i}) = \probm_{\mdp,s_{0},\zstrat}(\playset_{i} \wedge \forall j \in \N, \always (\neg r_{j,i+1})) = \probm_{\mdp,s_{0},\zstrat}(\playset_{i} \wedge \liminfmpobj).
\end{equation}
In other words, except for a nullset, the runs restarting finitely often (here
$i$ times) satisfy $\liminfmpobj$.
Furthermore, notice that from this observation, the sets $\playset_{i}$
partition the set of winning runs.

We show now that $\probm_{\mdp,s_{0},\zstrat}(\playset_{\infty}) = 0$. We do so firstly by showing by induction that $\probm_{\mdp,s_{0},\zstrat}(\playset_{\geq i}) \leq 2^{-i}$ for $i \geq 1$, then applying the continuity of measures from above to obtain that $\probm_{\mdp,s_{0},\zstrat}(\playset_{\infty}) = 0$.

Our base case is $i=1$. $\playset$, by definition of $\zstrat$, is the set of
runs induced by playing $\zstrat_{1/2}$ in every gadget. By \Cref{infwin}, $\zstrat$
attains $\geq 1/2$ in every gadget. Therefore in particular the probability of a run leaving the first gadget is no more than $1/2$, i.e.\ $\probm_{\mdp,s_{0},\zstrat}(\playset_{\geq 1}) \leq 1/2$.

Now suppose that $\probm_{\mdp,s_{0},\zstrat}(\playset_{\geq i}) \leq 2^{-i}$. After restarting at least $i$ times, the probability of a run restarting at least once more is still $\leq 1/2$ since the strategy being played in every gadget is $\zstrat_{1/2}$. Hence \[\probm_{\mdp,s_{0},\zstrat}(\playset_{\geq i+1}) \leq \probm_{\mdp,s_{0},\zstrat}(\playset_{\geq i}) \cdot \dfrac{1}{2} \leq 2^{-(i+1)}\] which is what we wanted.

Now we use the fact that $\playset_{\infty}= \bigcap_{i=0}^{\infty}\playset_{\geq i}$ and $\playset_{\geq i+1} \subseteq \playset_{\geq i}$ to apply continuity of measures from above and obtain:
\[
\probm_{\mdp,s_{0},\zstrat}(\playset_{\infty}) = \probm_{\mdp,s_{0},\zstrat} \left( \bigcap_{i=0}^{\infty} \playset_{\geq i} \right) =
\lim_{i \to \infty} \probm_{\mdp,s_{0},\zstrat}(\playset_{\geq i}) \leq \lim_{i \to \infty} 2^{-i} = 0.
\]
Hence $\playset_{\infty}$ is a null set.

We can now write down the following:
\begin{align*}
1 &= \probm_{\mdp,s_{0},\zstrat}(\playset) \\
& = \left( \sum_{i=0}^{\infty} \probm_{\mdp,s_{0},\zstrat}(\playset_{i}) \right) + \probm_{\mdp,s_{0},\zstrat}(\playset_{\infty}) &\text{by partition of }\playset \\
& = \left( \sum_{i=0}^{\infty} \probm_{\mdp,s_{0},\zstrat}(\playset_{i} \wedge \liminfmpobj) \right) + \probm_{\mdp,s_{0},\zstrat}(\playset_{\infty}) & \text{by \Cref{eq:alwayswin}} \\
& = \left( \sum_{i=0}^{\infty} \probm_{\mdp,s_{0},\zstrat}(\playset_{i} \wedge \liminfmpobj) \right) \\
& \hspace{5cm} + \probm_{\mdp,s_{0},\zstrat}(\playset_{\infty} \wedge \liminfmpobj) & \text{by } \probm_{\mdp,s_{0},\zstrat}(\playset_{\infty}) = 0 \\
& = \probm_{\mdp,s_{0},\zstrat}(\liminfmpobj) &\text{by partition of }\liminfmpobj
\end{align*}

Thus $\probm_{\mdp,s_{0},\zstrat}(\playset) = \probm_{\mdp,s_{0},\zstrat}(\liminfmpobj) = 1$, i.e.\ $\zstrat$ wins almost surely.
\end{proof}

\begin{restatable}{lem}{lemmaalmostlose}\label{almostlose}
For any FR strategy $\sigma$, $\probm_{\mathcal{M}, \sigma, s_{0}}(\liminfmpobj)=0$.
\end{restatable}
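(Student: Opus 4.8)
The plan is to reuse the local gadget analysis of \Cref{inflose}, organising the runs exactly as in the proof of \Cref{almostwin}. I would fix an FR strategy $\sigma$ with $k$ memory modes, let $\playset$ be the runs it induces from $s_0$, and partition $\playset = \left(\bigcup_{i\ge 0}\playset_i\right)\cup\playset_\infty$, where $\playset_i$ (resp.\ $\playset_\infty$) are the runs that restart exactly $i$ times (resp.\ infinitely often). First I would dispose of $\playset_\infty$: by construction of \Cref{restart} each restart drags the mean payoff below $-1$ before the penalty is refunded, so a run restarting infinitely often sees a mean payoff $\le -1$ infinitely often and hence has negative $\liminf$ mean payoff. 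Thus $\playset_\infty\subseteq\neg\liminfmpobj$ and $\probm_{\mdp,\sigma,s_0}(\playset_\infty\cap\liminfmpobj)=0$.

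The core of the argument is to show $\probm_{\mdp,\sigma,s_0}(\playset_i\cap\liminfmpobj)=0$ for every fixed $i$. A run in $\playset_i$ eventually enters its final row and never restarts again; a row is a fresh copy of the restart-free MDP of \Cref{inflose}, in which $\sigma$ still uses at most $k$ modes. Hence the local lower bound \eqref{eq:local-bad} of \Cref{inflose} applies verbatim inside that row: for all $n\ge N'$ (where $N'$ is such that $k(n)>k+1$) and every memory mode $\memconf$ on entering the $n$-th gadget, the probability of ${\it Bad}_n$ (mean payoff $\le -1$ in that gadget, or leaving via a restart) is at least $e_n$, irrespective of past events, and $\sum_{n\ge N'}e_n$ diverges by \Cref{alglose}. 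Since a run in $\playset_i$ never restarts in its final row, any occurrence of ${\it Bad}_n$ there must be a mean payoff $\le -1$. Therefore, exactly as in the closing computation of \Cref{inflose} (using $\liminfmpobj\subseteq\eventually\always\neg{\it Bad}$, continuity of measures and \Cref{prop:product-sum}), the probability of eventually avoiding ${\it Bad}_n$ forever in the final row is at most $\lim_{l\to\infty}\prod_{n\ge l}(1-e_n)=0$, so $\probm_{\mdp,\sigma,s_0}(\playset_i\cap\liminfmpobj)=0$.

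Combining the two cases over the partition then gives
\[
\probm_{\mdp,\sigma,s_0}(\liminfmpobj) = \sum_{i\ge 0}\probm_{\mdp,\sigma,s_0}(\playset_i\cap\liminfmpobj) + \probm_{\mdp,\sigma,s_0}(\playset_\infty\cap\liminfmpobj) = 0,
\]
as required.

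The step I expect to be the main obstacle is making the ``final row'' reasoning precise: the final row is not fixed in advance but depends on the run, so one must argue that the local bound $e_n$ holds uniformly over the finitely many memory modes and over the unknown entry point, allowing the divergence/product argument of \Cref{inflose} to be run on the tail of the play within whichever row turns out to be final. This is exactly what the ``regardless of the memory mode and of events in past gadgets'' guarantee behind \eqref{eq:local-bad} provides, together with the fact that all rows are structurally identical copies; some care is also needed to keep the $\playset_i$ disjoint and to handle the conditioning on not restarting after the $i$-th restart.
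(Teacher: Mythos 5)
Your proposal is correct and takes essentially the same approach as the paper: the same partition of runs by the number of restarts, the same observation that infinitely many restarts force the mean payoff below $-1$ infinitely often, and the same reduction of the finitely-restarting case to \Cref{inflose} applied to the (still finite-memory) substrategy on the final row. The only cosmetic difference is that you unfold the internal bounds $e_n$ and the divergence/product computation of \Cref{inflose}, where the paper simply invokes that lemma as a black box for the substrategy induced in the final gadget.
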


\begin{proof}[Outline of the proof.]
Let $\sigma$ be any FR strategy. We partition the runs generated by $\sigma$ into runs restarting infinitely often, and those restarting only finitely many times.
Any runs restarting infinitely often are losing by construction.
The runs restarting only finitely many times spend an infinite tail in a given gadget, letting the mean payoff dip below $-1$ infinitely many times with probability 1 by \Cref{inflose}.
Hence we have that $\probm_{\mathcal{M}, \sigma, s_{0}}(\liminfmpobj)=0$.
\end{proof}

\begin{proof}[Full proof]
There are two ways to lose when playing in this MDP:\@ either the mean payoff dips below $-1$ infinitely often because the run takes infinitely many restarts, or the run only takes finitely many restarts, but the mean payoff drops below $-1$ infinitely many times in the last copy of the gadget that the run stays in. Recall that in \Cref{inflose} we showed that any FR strategy with probability 1 either restarts or lets the mean payoff dip below $-1$ infinitely often.

Let $\sigma$ be any FR strategy and let $\playset$ to be the set of runs induced by $\sigma$ from $s_{0}$.
We partition $\playset$ into the sets $\playset_{i}$ and $\playset_{\infty}$ of runs such that
$\playset = \left( \bigcup_{i=0}^{\infty} \playset_{i} \right) \cup \playset_{\infty}$. Where we define for $i=0$
\[\playset_{0} \eqdef \{ \rho \in \playset \mid
    \forall \ell \in \N, \neg \eventually(r_{\ell,1}) \},\]
for $i \geq 1$
\[\playset_{i} \eqdef \{ \rho \in \playset \mid
    \exists j \in \N, \eventually(r_{j, i}) \wedge \forall \ell \in \N, \neg \eventually(r_{\ell,i+1}) \}\]
and
\[\playset_{\infty} \eqdef \{ \rho \in \playset \mid
    \forall i, \exists j  \text{ F}(r_{j,i})\}.\]
That is to say for all $i \in \N$, $\playset_{i}$ is the set of runs in $\playset$ that restart exactly $i$ times and $\playset_{\infty}$ is the set of runs in $\playset$ that restart infinitely many times.

We go on to define the sets of runs $\playset_{\geq i} \eqdef \bigcup_{j=i}^{\infty} \playset_{j} $ which are those runs which restart at least $i$ times. In particular note that $\playset_{\infty}= \bigcap_{i=0}^{\infty}\playset_{\geq i}$ and $\playset_{\geq i+1} \subseteq \playset_{\geq i}$.

Note that any run in $\playset_{\infty}$ is losing by construction. The negative reward that is collected upon restarting instantly brings the mean payoff below $-1$ by definition of $m_{n}$. Thus restarting infinitely many times translates directly into the mean payoff dropping below $-1$ infinitely many times. Thus $\playset_{\infty} \subseteq \neg\liminfmpobj$ and so it follows that  $\probm_{\mdp,s_{0},\sigma}(\playset_{\infty}) = \probm_{\mdp,s_{0},\sigma}(\playset_{\infty} \wedge \neg \liminfmpobj)$. Since the sets $\playset_{i}$ and $\playset_{\infty}$ partition $\playset$ we have that:
\[
\probm_{\mdp,s_{0},\sigma}(\playset) = \left( \sum_{i=0}^{\infty} \probm_{\mdp,s_{0},\sigma}(\playset_{i}) \right) + \probm_{\mdp,s_{0},\sigma}(\playset_{\infty}).
\]

It remains to show that every set $\playset_{i}$ is almost surely losing, i.e.\ $\probm_{\mdp,s_{0},\sigma}(\playset_{i}) = \probm_{\mdp,s_{0},\sigma}(\playset_{i} \wedge \neg \liminfmpobj)$.
Consider a run $\rho \in \playset_{i}$. By definition it restarts exactly $i$ times. As a result, it spends infinitely long in the $i+1$st gadget.
Because $\sigma$ is an FR strategy, it must be the case that any substrategy $\sigma^{*}$ induced by $\sigma$ that is played in a given gadget is also an FR strategy.
This allows us to apply \Cref{inflose} to obtain that
\begin{equation}\label{eq:alwayslose}
\probm_{\mdp,s_{0},\sigma}(\playset_{i}) =
\probm_{\mdp,s_{0},\sigma}\left(\playset_{i} \wedge (\neg \liminfmpobj \vee \exists j \in \N,
\eventually (r_{j,i+1}))\right).
\end{equation}
However, any run $\rho \in \playset_{i}$ never sees any state $r_{j,i+1}$ for any $j$ by definition. Therefore it follows that
\[
\probm_{\mdp,s_{0},\sigma}\left(\playset_{i} \wedge (\neg \liminfmpobj \vee \exists j \in \N, \eventually (r_{j,i+1}))\right)  =
\probm_{\mdp,s_{0},\sigma}\left(\playset_{i} \wedge (\neg \liminfmpobj )\right)
\]
Hence $\probm_{\mdp,s_{0},\sigma}(\playset_{i}) = \probm_{\mdp,s_{0},\sigma}(\playset_{i} \wedge \neg \liminfmpobj)$ as required.

As a result we have that
\begin{align*}
1 &= \probm_{\mdp,s_{0},\sigma}(\playset) \\
& = \left( \sum_{i=0}^{\infty} \probm_{\mdp,s_{0},\sigma}(\playset_{i}) \right) + \probm_{\mdp,s_{0},\sigma}(\playset_{\infty})    &\text{by partition of }\playset \\
& = \left( \sum_{i=0}^{\infty} \probm_{\mdp,s_{0},\sigma}(\playset_{i} \wedge \neg\liminfmpobj) \right) + \probm_{\mdp,s_{0},\sigma}(\playset_{\infty} \wedge \neg\liminfmpobj) &\text{by \Cref{eq:alwayslose}}\\
& = \probm_{\mdp,s_{0},\sigma}(\neg \liminfmpobj) &\text{by partition of }\playset
\end{align*}

That is to say that for any FR strategy $\sigma$, $\probm_{\mdp,s_{0},\sigma}(\liminfmpobj)=0$.
\end{proof}

From \Cref{almostwin} and \Cref{almostlose} we obtain the following theorem.

\begin{thm}\label{almostsummary}
There exists a countable, finitely branching and acyclic MDP $\mathcal{M}$ whose step counter is implicit in the state for which
$\state_0$ is almost surely winning $\liminfmpobj$, i.e.,
$\exists\hat{\sigma}\,\probm_{\mdp, s_{0}, \hat{\sigma}}(\liminfmpobj)=1$,
but every FR strategy $\sigma$ is such that
$\probm_{\mdp, s_{0}, \sigma}(\liminfmpobj)=0$.
In particular, almost sure winning strategies, when they exist, cannot be chosen
$k$-bit Markov for any $k \in \N$ for countable MDPs.
\end{thm}
\begin{proof}
Proved by \Cref{almostwin} and \Cref{almostlose}.
\end{proof}


\bigskip

Now we construct the MDP $\mdp_{\text{Restart}}$ by using \Cref{restart}, but we substitute the instances of \Cref{infinitegadget} gadgets with instances of \Cref{stepcounter} gadgets. This allows us to obtain the following results which state that optimal strategies for $\liminfmpobj$ requires infinite memory, even when the reward counter is implicit in the state.

\begin{lem}\label{liminfmpstepam1}
There exists an HD strategy $\sigma$ such that $\probm_{\mdp_{\text{\emph{Restart}}}, s_{0}, \sigma}(\liminfmpobj)=1$.
\end{lem}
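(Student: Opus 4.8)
The plan is to follow the structure of the proof of \Cref{almostwin} almost verbatim, replacing every appeal to \Cref{infwin} (which concerned the \Cref{infinitegadget}-style gadgets) by the corresponding appeal to \Cref{liminfmpstepval1} (which concerns the \Cref{stepcounter}-style gadgets used to build $\mdp_{\text{Restart}}$). Since $\mdp_{\text{Restart}}$ is obtained from the MDP of \Cref{restart} by substituting the gadgets while leaving the restart layout and the penalty/refund mechanism unchanged, the combinatorial and probabilistic skeleton of the argument is identical; only the local guarantee inside a single gadget changes.

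First I would recall, from the proof of \Cref{liminfmpstepval1}, the mimicking strategy: in every controlled state $c_n$ it replays the branch taken by the preceding random choice at $s_n$. The crucial difference from the \Cref{infinitegadget} case is that in \Cref{stepcounter} the branches out of $s_n$ carry identical (zero) rewards and differ only in path length, so the reward implicit in the state does not reveal which branch was taken; the strategy must instead read this off the elapsed step count, and is therefore history dependent rather than memoryless. As in \Cref{infwin}, starting the mimicking play from a sufficiently late gadget makes the surviving probability a tail of the convergent product $\prod_{n}\left(1 - \sum_{j}\delta_{j}(n)\eps_{j}(n)\right)$, which by \Cref{prop:tail-product} exceeds $1/2$; call this strategy $\zstrat_{1/2}$. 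I then define an HD strategy $\zstrat$ that plays like $\zstrat_{1/2}$ in each row of \Cref{restart} from that row's start state, so that in any single copy the probability of a restart is at most $1/2$.

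The remainder is an exact transcription of \Cref{almostwin}. I would partition the runs $\playset$ induced by $\zstrat$ from $s_0$ into the sets $\playset_i$ of runs restarting exactly $i$ times and the set $\playset_\infty$ restarting infinitely often. By the definition of the reward $m_n$ in \Cref{stepcounter}, each restart penalty drives the mean payoff below $-1$, so $\playset_\infty \subseteq \neg\liminfmpobj$; an induction showing $\probm_{\mdp_{\text{Restart}},s_0,\zstrat}(\playset_{\ge i}) \le 2^{-i}$ (each row restarts with probability $\le 1/2$) together with continuity of measures from above yields $\probm_{\mdp_{\text{Restart}},s_0,\zstrat}(\playset_\infty)=0$. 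For a run in $\playset_i$, it spends an infinite tail in one final copy where $\zstrat$ mimics, so \Cref{liminfmpstepval1} gives $\probm_{\mdp_{\text{Restart}},s_0,\zstrat}(\playset_i) = \probm_{\mdp_{\text{Restart}},s_0,\zstrat}(\playset_i \wedge \liminfmpobj)$. Summing over $i$ and using $\probm(\playset_\infty)=0$ gives $\probm_{\mdp_{\text{Restart}},s_0,\zstrat}(\liminfmpobj)=1$.

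The only genuinely new point to get right is that the per-gadget guarantee of \Cref{liminfmpstepval1} transfers \emph{uniformly} to each row of the restart construction, regardless of the run's earlier history: one must confirm, exactly as argued for \Cref{stepcounter}, that the rapid growth of $m_n$ makes rewards accumulated before the current gadget irrelevant to whether the current gadget produces a mean payoff $\le -1$, so that mimicking inside a fresh copy restarts with probability $\le 1/2$ independently of the past. Given that this locality was already established for \Cref{stepcounter}, the rest is a routine replay of \Cref{almostwin}, and the main (mild) subtlety is simply that the resulting winning strategy is HD rather than MD because the mimicking now depends on the step count.
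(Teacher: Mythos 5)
Your proposal is correct and matches the paper's approach exactly: the paper's own proof of this lemma is the single line ``The proof is identical to that of \Cref{almostwin}'', and your replay of that argument---with \Cref{liminfmpstepval1} supplying the per-gadget mimicking guarantee in place of \Cref{infwin}---is precisely what that one-liner implicitly asks the reader to do. Your added observations (that the mimicking must read the step count, making the strategy HD, and that the rapid growth of $m_n$ makes the guarantee uniform across rows) are faithful elaborations of the paper's intent rather than deviations from it.
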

\begin{proof}
The proof is identical to that of
\Cref{almostwin}.
\end{proof}

\begin{lem}\label{liminfmpstepam0}
For any FR strategy $\sigma$, $\probm_{\mdp_{\text{\emph{Restart}}}, s_{0}, \sigma}(\liminfmpobj)=0$.
\end{lem}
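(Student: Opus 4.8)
The plan is to reuse the proof of \Cref{almostlose} essentially verbatim, substituting the reward-implicit ``losing'' lemma \Cref{liminfmpstepval0} for the step-implicit lemma \Cref{inflose}. Recall that $\mdp_{\text{Restart}}$ is built from \Cref{restart} by replacing each instance of an \Cref{infinitegadget} gadget with a \Cref{stepcounter} gadget, so the overall restart architecture is unchanged and only the internal structure of each copy of the chain differs. This means the high-level partitioning argument can be transported directly, and the only place where the gadget change matters is the step that analyzes the behaviour inside a single copy.

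First I would fix an arbitrary FR strategy $\sigma$, let $\playset$ be the set of runs it induces from $s_0$, and partition $\playset$ exactly as in \Cref{almostlose} into the sets $\playset_i$ (runs restarting exactly $i$ times) for $i \in \N$ together with $\playset_\infty$ (runs restarting infinitely often), using the same restart-state predicates $\eventually(r_{j,i})$. Because $\mdp_{\text{Restart}}$ keeps the restart construction of \Cref{restart}, the definition of $m_n$ still forces each restart to drive the mean payoff below $-1$; hence every run in $\playset_\infty$ sees a mean payoff $\le -1$ infinitely often, giving $\playset_\infty \subseteq \neg\liminfmpobj$. The core of the argument is then the finitely-restarting case: a run in $\playset_i$ spends an infinite tail inside the $(i+1)$-st copy of the chain (now built from \Cref{stepcounter} gadgets) without ever reaching a restart state. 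Since $\sigma$ is FR, the substrategy it induces on that copy is again an FR strategy, so \Cref{liminfmpstepval0} applies and yields that, with probability $1$, such a run either restarts or lets its mean payoff dip below $-1$ infinitely often. As runs in $\playset_i$ by definition never see any $r_{j,i+1}$, this gives
\[
\probm_{\mdp_{\text{Restart}},s_0,\sigma}(\playset_i)
= \probm_{\mdp_{\text{Restart}},s_0,\sigma}(\playset_i \wedge \neg\liminfmpobj),
\]
so each $\playset_i$ is almost surely losing. Summing over the partition $\playset = \big(\bigcup_i \playset_i\big)\cup \playset_\infty$, exactly as in the final display of \Cref{almostlose}, yields $\probm_{\mdp_{\text{Restart}},s_0,\sigma}(\neg\liminfmpobj)=1$ and hence $\probm_{\mdp_{\text{Restart}},s_0,\sigma}(\liminfmpobj)=0$.

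The main obstacle I expect is not the combinatorics of the partition (which is identical to \Cref{almostlose}) but the justification that \Cref{liminfmpstepval0} really does apply to the substrategy confined to a single copy, i.e.\ that the within-gadget analysis of the reward-implicit gadgets is insensitive to the run's history in earlier copies. I would discharge this by appealing to the path-length and total-reward estimate accompanying \Cref{stepcounter}: the rewards grow fast enough (via $m_n^{k(n)}$) that the accumulated reward carried in from previous copies—refunded at each restart—cannot compensate for a mismatched controlled choice in the current gadget. Consequently the mimicking-versus-confusion dichotomy underlying \Cref{inflose}, and hence \Cref{liminfmpstepval0}, transfers to the tail copy unchanged, which is exactly what the argument needs.
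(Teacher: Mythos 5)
Your proposal matches the paper's approach exactly: the paper's proof of this lemma is literally ``The proof is identical to that of \Cref{almostlose}'', and your write-up carries out precisely that transfer --- same partition into $\playset_i$ and $\playset_\infty$, same handling of infinitely-restarting runs, and the substitution of \Cref{liminfmpstepval0} (the reward-implicit analogue of \Cref{inflose}) for the within-copy analysis. Your closing paragraph justifying why history from earlier copies cannot interfere (via the $m_n^{k(n)}$ growth in the \Cref{stepcounter} gadgets) is a sound and welcome elaboration of a point the paper leaves implicit.
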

\begin{proof}
The proof is identical to that of
 \Cref{almostlose}.
\end{proof}

\begin{restatable}{thm}{thmmpstepoptlower}\label{mpstepoptlower}
There exists a countable, finitely branching and acyclic MDP $\mdp_{\text{\emph{Restart}}}$
whose total reward is implicit in the state where, for the initial state $s_0$,
\begin{itemize}
\item
there exists an HD strategy $\sigma$ s.t.\
$\probm_{\mdp_{\text{\emph{Restart}}}, s_{0}, \sigma}(\liminfmpobj)=1$.
\item
for every FR strategy $\sigma$,
$\probm_{\mdp_{\text{\emph{Restart}}}, s_{0}, \sigma}(\liminfmpobj)=0$.
\end{itemize}
\end{restatable}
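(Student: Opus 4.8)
The plan is to derive the theorem directly from the two lemmas immediately preceding it, \Cref{liminfmpstepam1} and \Cref{liminfmpstepam0}, whose statements are exactly the two bullet points. At the level of the theorem itself there is therefore nothing to do beyond invoking them; the real content lies in those lemmas, and their proofs are obtained by transporting the arguments of \Cref{almostwin} and \Cref{almostlose} to the reward-implicit setting, where the gadgets of \Cref{infinitegadget} have been replaced by those of \Cref{stepcounter} so that the \emph{total reward} is implicit in the state.

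For the positive part (\Cref{liminfmpstepam1}) I would reuse the strategy $\zstrat$ built in \Cref{almostwin}: concatenate copies of the strategy $\zstrat_{1/2}$ from \Cref{infwin}, playing $\zstrat_{1/2}$ afresh in each gadget (each row of \Cref{restart}, i.e.\ each instance of \Cref{chain}) from its start state. The key facts are that $\zstrat_{1/2}$ mimics the random choice at every controlled state, so that in a gadget where no restart occurs the mean payoff stays $\ge -1/n \to 0$ and the run is winning; and that $\zstrat_{1/2}$ restarts with probability $\le 1/2$ in each gadget, so that $\probm(\playset_{\ge i}) \le 2^{-i}$ and $\probm(\playset_{\infty})=0$ by continuity of measures from above. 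Hence almost every run restarts only finitely often, spends an infinite tail in a final gadget, and satisfies $\liminfmpobj$.

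For the negative part (\Cref{liminfmpstepam0}) I would partition the runs of an arbitrary FR strategy $\sigma$ into those restarting infinitely often ($\playset_{\infty}$) and those restarting exactly $i$ times ($\playset_i$). Runs in $\playset_{\infty}$ are losing by construction, since each restart forces the mean payoff below $-1$ by the definition of $m_n$. Runs in $\playset_i$ spend an infinite tail in the $(i+1)$-st gadget, where the restricted strategy is still an FR strategy, so \Cref{inflose} applies and drives the mean payoff below $-1$ infinitely often almost surely. Summing over the partition gives $\probm(\liminfmpobj)=0$ for every FR strategy.

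The main obstacle, already settled by the analysis preceding \Cref{liminfmpstepval0}, is to verify that the reward-implicit gadgets behave locally like the step-implicit ones, i.e.\ that events in past gadgets do not affect the win/lose verdict in the current gadget. This is precisely the path-length-versus-total-reward estimate there: taking the $i$-th random choice bounds the path length by $2m_n^{i}$, so a mismatched controlled choice $j>i$ yields mean payoff $<-1$, while the matching choice $j=i$ yields $-1/n \to 0$, and lower choices $j<i$ are losing through the risk of the sink. Once this locality is in place, the proofs of \Cref{almostwin} and \Cref{almostlose} transfer essentially verbatim, and the theorem follows by combining \Cref{liminfmpstepam1} and \Cref{liminfmpstepam0}.
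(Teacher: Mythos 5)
Your proposal is correct and follows the paper's own route exactly: the paper also derives the theorem immediately from \Cref{liminfmpstepam1} and \Cref{liminfmpstepam0}, whose proofs are declared identical to those of \Cref{almostwin} and \Cref{almostlose}, relying on the same locality analysis of the reward-implicit gadgets of \Cref{stepcounter} (path length bounded by $2m_n^{i}$, so a mismatch $j>i$ forces mean payoff below $-1$ while mimicry yields $-1/n$). Nothing further is needed.
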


\begin{proof}
This follows from \Cref{liminfmpstepam1} and \Cref{liminfmpstepam0}.
\end{proof}

\section{Total Payoff}\label{sec:totalpayoff}
\begin{table*}[hbtp]
\centering
\begin{tabular}{|ll||c|c|}
\hline
\multicolumn{2}{|l||}{Total Payoff}                                        & $\eps$-optimal & Optimal \\ \hline
\multicolumn{1}{|l|}{\multirow{2}{*}{Finitely branching}}   & Upper Bound & Det(RC)~\ref{fintpepsupper} &  Det(RC)~\ref{finoptupper}  \\ \cline{2-4}
\multicolumn{1}{|l|}{}                                      & Lower Bound & $\neg$Rand(F+SC)~\ref{infinitesummarytp} &    $\neg$Rand(F+SC)~\ref{almostsummarytp} \\ \hline
\multicolumn{1}{|l|}{\multirow{2}{*}{Infinitely branching}} & Upper Bound & Det(SC+RC)~\ref{inftpepsupper}  &  Det(SC+RC)~\ref{infoptuppertp} \\ \cline{2-4}
\multicolumn{1}{|l|}{}                                      & Lower Bond  &  \begin{tabular}[c]{@{}l@{}} $\neg$Rand(F+SC)~\ref{infinitesummarytp} and\\ $\neg$Rand(F+RC)~\ref{infbranchsteplowertp} \end{tabular}   &    \begin{tabular}[c]{@{}l@{}} $\neg$Rand(F+SC)~\ref{almostsummarytp} and\\ $\neg$Rand(F+RC)~\ref{infbranchsteplowertp} \end{tabular} \\ \hline
\end{tabular}
\caption{Strategy complexity of $\eps$-optimal/optimal
strategies for the total payoff objective in
infinitely/finitely branching MDPs.
}\label{table:totalpayoff}
\end{table*}

\subsection{Upper Bounds}

In order to tackle the upper bounds for the total payoff objective $\liminftpobj$, we
work with the derived MDPs $R(\mdp)$ and $S(\mdp)$ which encode the total reward and the
step counter into the state respectively. Once the total reward is encoded into the state,
the point payoff coincides with the total payoff. We use this observation to reduce
$\liminftpobj$ to $\liminfppobj$ and obtain our upper bounds from the corresponding point payoff results.

\begin{cor}\label{fintpepsupper}
Given a finitely branching MDP $\mdp$, there exist $\eps$-optimal strategies
for $\liminftpobj$ which use just a reward counter.
\end{cor}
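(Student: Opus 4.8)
The plan is to reduce $\liminftpobj$ in $\mdp$ to $\liminfppobj$ in the derived MDP $R(\mdp)$, exactly mirroring how $\liminfmpobj$ was reduced to $\liminfppobj$ via $A(\mdp)$ in \Cref{mpepsupper}. Recall that in $R(\mdp)$ the current total reward is encoded into the state and each transition carries as its reward the total reward of its target, so the point payoffs of a run in $R(\mdp)$ coincide with the total payoffs of the corresponding run in $\mdp$. The first thing I would verify is that $R(\mdp)$ is again finitely branching: a state $(s,r)$ has exactly one successor $(s', r + r(s\to s'))$ for each successor $s'$ of $s$ in $\mdp$, so the branching degree is preserved and stays finite. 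This is what licenses the use of the finitely branching point-payoff result below rather than the weaker infinitely branching one.

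Next I would apply \Cref{finpointpayoff} to $R(\mdp)$ (legitimate precisely because $R(\mdp)$ is finitely branching) to obtain, for every $\eps>0$, an $\eps$-optimal MD strategy $\sigma'$ for $\liminfppobj$ from $(s_0,0)$, i.e.\ $\probm_{R(\mdp),(s_0,0),\sigma'}(\liminfppobj) \ge \valueof{R(\mdp),\liminfppobj}{(s_0,0)} - \eps$. Then \Cref{totaltopoint} translates $\sigma'$ into a strategy $\sigma$ in $\mdp$ attaining the same value $c \eqdef \probm_{R(\mdp),(s_0,0),\sigma'}(\liminfppobj)$ for $\liminftpobj$ from $s_0$, using the memory of $\sigma'$ plus a reward counter. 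Since $\sigma'$ is MD (memory of size $1$), the strategy $\sigma$ uses just a reward counter.

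It remains to relate $\valueof{R(\mdp),\liminfppobj}{(s_0,0)}$ to $\valueof{\mdp,\liminftpobj}{s_0}$. The inequality needed here, namely $\valueof{R(\mdp),\liminfppobj}{(s_0,0)} \ge \valueof{\mdp,\liminftpobj}{s_0}$, follows from the canonical bijection between runs of $\mdp$ from $s_0$ and runs of $R(\mdp)$ from $(s_0,0)$: the reward component of a state of $R(\mdp)$ is determined by the underlying $\mdp$-path, so any strategy $\tau$ in $\mdp$ lifts to a strategy $\tau'$ in $R(\mdp)$ inducing the same run distribution, and under this bijection the total payoff in $\mdp$ equals the point payoff in $R(\mdp)$. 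Taking the supremum over $\tau$ yields the claimed inequality (in fact equality, the reverse being \Cref{totaltopoint}). Combining, $\probm_{\mdp,s_0,\sigma}(\liminftpobj) = c \ge \valueof{\mdp,\liminftpobj}{s_0} - \eps$, so $\sigma$ is $\eps$-optimal for $\liminftpobj$ and uses just a reward counter, as required.

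The only genuinely non-mechanical point is this value-preservation step connecting the two MDPs; everything else is an assembly of \Cref{finpointpayoff} and \Cref{totaltopoint}. I expect the main obstacle to be stating the run correspondence cleanly — in particular confirming that encoding the reward counter neither adds nor removes strategic power, so that no value is lost in either direction, together with the observation that finite branching is preserved under the $R(\cdot)$ construction.
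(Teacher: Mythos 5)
Your proposal is correct and follows essentially the same route as the paper's proof: reduce to $\liminfppobj$ in $R(\mdp)$, apply \Cref{finpointpayoff} there, and translate back via \Cref{totaltopoint}. The two points you flag as needing verification (that $R(\cdot)$ preserves finite branching, and that the value of $(s_0,0)$ in $R(\mdp)$ is not smaller than that of $s_0$ in $\mdp$) are indeed the steps the paper leaves implicit, and your arguments for them are sound.
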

\begin{proof}
We place ourselves in $R(\mdp)$ where
$\liminftpobj$ and $\liminfppobj$ coincide.
Thus we can apply \Cref{finpointpayoff} to obtain $\eps$-optimal MD
strategies for $\liminftpobj$ from every state of $R(\mdp)$.
By \Cref{totaltopoint} we can translate these MD strategies on $R(\mdp)$ back to
strategies on $\mdp$ with just a reward counter.
\end{proof}

\begin{restatable}{cor}{corinftpepsupper}\label{inftpepsupper}
Given an MDP $\mdp$ with initial state $s_{0}$,
\begin{itemize}
\item there exist $\eps$-optimal MD strategies for $\liminftpobj$ in $S(R(\mdp))$,
\item there exist $\eps$-optimal strategies for $\liminftpobj$ which use a step counter and a reward counter.
\end{itemize}
\end{restatable}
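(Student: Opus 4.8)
The plan is to reduce the total payoff objective $\liminftpobj$ in $\mdp$ to the point payoff objective $\liminfppobj$ in the derived MDP $R(\mdp)$, exactly mirroring the treatment of mean payoff in \Cref{mpepsupper}. By \Cref{def:encodereward}, the transition reward of $(s,r)\to(s',r')$ in $R(\mdp)$ is the accumulated total reward $r'$, so the sequence of point rewards of a run in $R(\mdp)$ is precisely the sequence of total rewards of the corresponding run in $\mdp$; hence $\liminfppobj$ in $R(\mdp)$ coincides with $\liminftpobj$ in $\mdp$. Since the construction $S(\cdot)$ preserves transition rewards (\Cref{def:encodestep}), this correspondence carries over unchanged to $S(R(\mdp))$, where the point payoff is again exactly the total payoff of $\mdp$.

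First I would apply \Cref{infpointpayoff} to the MDP $R(\mdp)$. That theorem holds for arbitrary, possibly infinitely branching, MDPs, so it yields for each $\eps>0$ an $\eps$-optimal MD strategy for $\liminfppobj$ in $S(R(\mdp))$. As the point payoff in $S(R(\mdp))$ is exactly the total payoff in $\mdp$, this MD strategy is $\eps$-optimal for the objective corresponding to $\liminftpobj$, which establishes the first bullet.

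For the second bullet I would use the other conclusion of \Cref{infpointpayoff}, namely that there is also an $\eps$-optimal deterministic Markov strategy $\sigma'$ for $\liminfppobj$ directly in $R(\mdp)$. I would then invoke \Cref{totaltopoint} in its Markov case: from a Markov (step counter) strategy $\sigma'$ in $R(\mdp)$ attaining value $c$ for $\liminfppobj$ from $(s_0,0)$, it produces a strategy $\sigma$ in $\mdp$ attaining $c$ for $\liminftpobj$ from $s_0$ that uses the same memory as $\sigma'$ (a step counter) plus a reward counter. The outcome is an $\eps$-optimal strategy in $\mdp$ using only a step counter and a reward counter, as claimed.

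I expect no genuine obstacle here; all the heavy lifting is already contained in \Cref{infpointpayoff}, \Cref{totaltopoint}, and the constructions of $R(\mdp)$ and $S(\mdp)$. The only point requiring care is the bookkeeping of the objective correspondences across the three MDPs $\mdp$, $R(\mdp)$ and $S(R(\mdp))$: one must verify that the reward labelling of $R(\mdp)$ converts total payoff into point payoff, and that layering $S(\cdot)$ on top leaves the point payoff intact, so that $\liminftpobj$ and $\liminfppobj$ line up at every stage and the memory overhead accumulated by the two translation lemmas is exactly a step counter plus a reward counter.
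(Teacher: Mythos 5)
Your proposal is correct and follows essentially the same route as the paper: reduce $\liminftpobj$ in $\mdp$ to $\liminfppobj$ in $R(\mdp)$, apply \Cref{infpointpayoff} to $R(\mdp)$ to get $\eps$-optimal MD strategies in $S(R(\mdp))$, and then translate back to $\mdp$ accumulating a step counter and a reward counter. The only (cosmetic) difference is that for the second bullet you invoke the Markov-strategy conclusion of \Cref{infpointpayoff} directly and then apply \Cref{totaltopoint}, whereas the paper unfolds that step explicitly via \Cref{steptopoint} before applying \Cref{totaltopoint}; since the Markov conclusion of \Cref{infpointpayoff} is itself proved by \Cref{steptopoint}, the two arguments coincide.
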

\begin{proof}
We consider the encoded system $R(\mdp)$ in which the reward counter is implicit in the state. Recall that total rewards in $\mdp$ correspond exactly to point rewards in $R(\mdp)$. We then apply \Cref{infpointpayoff} to $R(\mdp)$ to obtain  $\eps$-optimal MD strategies for $\liminfppobj$ in $S(R(\mdp))$.
\Cref{steptopoint} allows us to translate these MD strategies back to $R(\mdp)$ with a memory overhead of just a step counter. Then we apply
\Cref{totaltopoint} to translate these Markov strategies back to $\mdp$ with a memory overhead of just a reward counter. Hence $\eps$-optimal strategies for $\liminftpobj$ in $\mdp$ just use a step counter and a reward counter as required.
\end{proof}

\begin{rem}
While $\eps$-optimal strategies for mean payoff and total payoff
(in infinitely branching MDPs) have the same memory requirements, the step counter and the reward counter do not arise in the same way.
Both the step counter and reward counter used in $\eps$-optimal strategies for mean payoff  arise from the construction of $A(\mdp)$. However, in the case for total payoff, only the reward counter arises from the construction of $R(\mdp)$. The step counter on the other hand arises from the Markov strategy needed for point payoff in $R(\mdp)$.
\end{rem}

\begin{cor}\label{finoptupper}
Given a finitely branching MDP $\mdp$ and initial state $s_{0}$, optimal strategies, where they exist,
can be chosen with just a reward counter for $\liminftpobj$.
\end{cor}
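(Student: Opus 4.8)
The plan is to follow the same recipe used for the optimal point- and mean-payoff upper bounds (\Cref{finoptupperpp} and \Cref{infoptuppermp}), but carried out inside the reward-encoded MDP $R(\mdp)$, which is exactly the setting already used for the $\eps$-optimal total-payoff bound in \Cref{fintpepsupper}. First I would observe that since $\mdp$ is finitely branching, so is $R(\mdp)$: each state $(s,r)$ has one successor $(s', r + r(s \to s'))$ per successor $s'$ of $s$ in $\mdp$, so the branching degree is preserved. Recall also that the point payoffs along a run of $R(\mdp)$ equal the total payoffs along the corresponding run of $\mdp$, so that $\liminftpobj$ in $\mdp$ corresponds to $\liminfppobj$ in $R(\mdp)$.

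The crucial point is that $\liminftpobj$ is \emph{not} shift invariant in $\mdp$, so \Cref{epsilontooptimal} cannot be applied to it directly; however, the objective it reduces to, namely $\liminfppobj$, \emph{is} shift invariant, and this is precisely why the argument must be relocated to $R(\mdp)$. Working in $R(\mdp)$, I would apply \Cref{finpointpayoff} (finite branching is inherited, as noted) to obtain, for every $\eps > 0$ and every state of $R(\mdp)$, an $\eps$-optimal MD strategy for $\liminfppobj$. Then, invoking part (2) of \Cref{epsilontooptimal} for the shift-invariant objective $\liminfppobj$, I obtain a single MD strategy $\sigma'$ on $R(\mdp)$ that is optimal from every state of $R(\mdp)$ admitting an optimal strategy.

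Finally I would transfer $\sigma'$ back to $\mdp$ using \Cref{totaltopoint}: an MD strategy on $R(\mdp)$ attaining value $c$ for $\liminfppobj$ from $(s_0,0)$ yields a strategy on $\mdp$ attaining $c$ for $\liminftpobj$ from $s_0$ that uses only a reward counter. To see that optimality (and not merely the attained value) transfers, one notes that general strategies of $\mdp$ and of $R(\mdp)$ are in value-preserving correspondence, since the reward coordinate $r$ is a deterministic function of the history in $\mdp$; hence $\valueof{\mdp,\liminftpobj}{s_0} = \valueof{R(\mdp),\liminfppobj}{(s_0,0)}$, and whenever $s_0$ admits an optimal total-payoff strategy, $(s_0,0)$ admits an optimal $\liminfppobj$ strategy, so $\sigma'$ is optimal there and its image under \Cref{totaltopoint} is optimal for $\liminftpobj$ from $s_0$.

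I expect the main obstacle to be exactly this last bookkeeping: the non-shift-invariance of total payoff forces the whole optimality argument into $R(\mdp)$, and one must check carefully that the value correspondence between $\mdp$ and $R(\mdp)$ is exact, so that a strategy optimal in $R(\mdp)$ maps to one optimal in $\mdp$ rather than merely $\eps$-optimal.
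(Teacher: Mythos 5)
Your proposal is correct and follows essentially the same route as the paper's proof: work in $R(\mdp)$, where the point-payoff and total-payoff objectives coincide and shift invariance holds, apply \Cref{finpointpayoff} and then \Cref{epsilontooptimal} to get a single MD strategy optimal from every state of $R(\mdp)$ that admits one, and translate it back to $\mdp$ via \Cref{totaltopoint}. The only difference is that you make explicit the value-preserving correspondence between $\mdp$ and $R(\mdp)$ (so that optimality, not just attainment, transfers), a point the paper leaves implicit.
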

\begin{proof}
We place ourselves in $R(\mdp)$ where
$\liminftpobj$ is shift invariant. Moreover, in $R(\mdp)$ the objectives
$\liminftpobj$ and $\liminfppobj$ coincide.
Thus we can apply \Cref{finpointpayoff} to obtain $\eps$-optimal MD
strategies for $\liminftpobj$ from every state of $R(\mdp)$.
From \Cref{epsilontooptimal} we obtain a single MD
strategy that is optimal from every state of $R(\mdp)$ that has an optimal
strategy. By \Cref{totaltopoint} we can translate this MD strategy on $R(\mdp)$ back to
a strategy on $\mdp$ with just a reward counter.
\end{proof}

\begin{cor}\label{infoptuppertp}
Given an MDP $\mdp$ and initial state $s_{0}$, optimal strategies, where they exist,
can be chosen with just a reward counter and a step counter for $\liminftpobj$.
\end{cor}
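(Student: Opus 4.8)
The plan is to mirror the $\eps$-optimal argument of \Cref{inftpepsupper} but feed it through the \emph{optimal} point-payoff result \Cref{infoptupperpp} instead of \Cref{infpointpayoff}. As in that reduction I would move to the reward-encoded MDP $R(\mdp)$, in which the point rewards are exactly the total rewards of $\mdp$, so that $\liminftpobj$ in $\mdp$ from $s_0$ corresponds to $\liminfppobj$ in $R(\mdp)$ from $(s_0,0)$, with equal values. Since $R(\mdp)$ preserves the branching degree of $\mdp$, it may be infinitely branching, which is precisely why the step-counter version \Cref{infoptupperpp} (rather than the positional \Cref{finoptupperpp}) is the right tool here.

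Concretely, I would first apply \Cref{infoptupperpp} to $R(\mdp)$ to obtain a single strategy that is optimal for $\liminfppobj$ from every state of $R(\mdp)$ admitting an optimal strategy and that uses only a step counter, i.e.\ is a Markov strategy. Before invoking it, I would verify that existence of optimal strategies transfers from $\mdp$ to $R(\mdp)$: an optimal strategy for $\liminftpobj$ from $s_0$ in $\mdp$ lifts to $R(\mdp)$ (the reward level at each state is a function of the history, so the lifted strategy plays identically), attains the same probability, and is therefore optimal for $\liminfppobj$ from $(s_0,0)$ because the two values coincide. Hence $(s_0,0)$ admits an optimal strategy in $R(\mdp)$ exactly when $s_0$ does in $\mdp$, and \Cref{infoptupperpp} delivers the desired Markov strategy $\sigma'$ on $R(\mdp)$.

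Finally I would push $\sigma'$ back to $\mdp$ using \Cref{totaltopoint}, which turns a Markov strategy on $R(\mdp)$ attaining value $c$ for $\liminfppobj$ into a strategy on $\mdp$ attaining $c$ for $\liminftpobj$, adding only a reward counter to the memory. Composed with the step counter already carried by $\sigma'$, this yields an optimal strategy for $\liminftpobj$ from $s_0$ using just a step counter and a reward counter, as required. I expect the reduction chain itself to be entirely routine, since \Cref{infoptupperpp} already internalises the shift-invariance argument (point payoff is always shift invariant) through \Cref{epsilontooptimal}; the only point needing care is the existence-transfer observation above, ensuring that the hypothesis ``where they exist'' is inherited by $R(\mdp)$ so that \Cref{infoptupperpp} is genuinely applicable.
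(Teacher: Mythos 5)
Your proposal is correct and takes essentially the same route as the paper: the paper works directly in $S(R(\mdp))$, applies \Cref{inftpepsupper} and then \Cref{epsilontooptimal} (using shift invariance of the encoded objective there), and translates back via \Cref{steptopoint} and \Cref{totaltopoint} --- which is exactly what your invocation of \Cref{infoptupperpp} on $R(\mdp)$ unfolds to, since on $S(R(\mdp))$ the encoded total-payoff objective \emph{is} the point-payoff objective. Your explicit verification that the existence of optimal strategies transfers from $\mdp$ to $R(\mdp)$ (via the history bijection and equality of values) is a detail the paper leaves implicit, and it is correctly argued.
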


\begin{proof}
We place ourselves in
$S(R(\mdp))$ and apply \Cref{inftpepsupper} to obtain $\eps$-optimal MD
strategies for $\liminftpobj$ from every state of $S(R(\mdp))$.
While $\liminftpobj$ is not shift invariant in $\mdp$, it is shift invariant in $S(R(\mdp))$,
and thus we can apply \Cref{epsilontooptimal} to obtain a single MD
strategy that is optimal from every state of $S(R(\mdp))$ that has an optimal
strategy. The result then follows from \Cref{steptopoint} and \Cref{totaltopoint}.
\end{proof}

\subsection{Lower Bounds}

\subsubsection{$\eps$-optimal strategies}

\begin{thm}\label{infbranchsteplowertp}
There exists an infinitely branching MDP $\mdp$ as in \Cref{infinitebranchtp} with reward implicit in the state and initial state $s$ such that
\begin{itemize}
\item every FR strategy $\sigma$ is such that $\probm_{\mdp, s, \sigma} (\liminftpobj) = 0$
\item there exists an HD strategy $\sigma$ such that $\probm_{\mdp, s, \sigma} (\liminftpobj) = 1$.
\end{itemize}
Hence, optimal (and even almost-surely winning) strategies and $\eps$-optimal
strategies for $\liminftpobj$ require infinite memory
beyond a reward counter.
\end{thm}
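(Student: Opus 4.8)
The plan is to reduce to the co-B\"uchi lower bound exactly as in the proof of \Cref{infbranchsteplower}; the only new ingredient needed is that in the MDP of \Cref{infinitebranchtp} the objective $\liminftpobj$ coincides \emph{run-by-run} with the co-B\"uchi objective of visiting $t$ only finitely often. This is precisely the coincidence already flagged in the note following \Cref{infbranchsteplower}, so once it is verified, both bullet points transfer verbatim from the co-B\"uchi result of \cite[Theorem 4]{KMSW2017}, which is the same external input used for $\liminfppobj$.

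First I would analyze the partial sums of the transition rewards. Every transition carries reward $0$ except the edges entering $t$ (reward $-1$) and the edge from $t$ back to $s$ (reward $+1$), so a complete excursion through $t$ is reward-neutral. Consequently the running total reward equals $0$ at every visit to $s$ and to any $r_i$, and equals $-1$ exactly at the single step when the run sits in $t$ (having paid the $-1$ but not yet collected the compensating $+1$). From this bookkeeping the coincidence follows immediately: a run that visits $t$ infinitely often has total payoff equal to $-1$ at infinitely many steps, so its $\liminf$ is $\le -1 < 0$ and it loses $\liminftpobj$; a run that visits $t$ only finitely often eventually keeps the total payoff pinned at $0$, so its $\liminf$ is $0 \ge 0$ and it wins. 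Hence $\liminftpobj$ and co-B\"uchi$(t)$ define the same measurable set of runs in this MDP.

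Having established the coincidence, I would invoke \cite[Theorem 4]{KMSW2017}, which for this MDP gives an FR strategy bound of $0$ and an HD strategy attaining $1$ for co-B\"uchi$(t)$; through the identity of the two objectives these yield exactly $\probm_{\mdp, s, \sigma}(\liminftpobj) = 0$ for every FR strategy $\sigma$ and an HD strategy with $\probm_{\mdp, s, \sigma}(\liminftpobj) = 1$. The concluding sentence about infinite memory beyond a reward counter is then a direct reading of these two facts, since the reward level is implicit in the state of \Cref{infinitebranchtp}. The only step requiring any care is the partial-sum bookkeeping of the second paragraph, but this is entirely routine; there is no genuine obstacle, as all the difficulty has been offloaded to the pre-existing co-B\"uchi lower bound.
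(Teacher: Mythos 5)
Your proof is correct and takes essentially the same route as the paper: the paper's own proof consists of exactly the observation that $\liminftpobj$ and the co-B\"uchi objective coincide run-by-run in the MDP of \Cref{infinitebranchtp}, followed by an appeal to \cite[Theorem 4]{KMSW2017}. Your partial-sum bookkeeping (total reward $0$ everywhere except the single step at $t$ where it is $-1$) is just an explicit verification of the coincidence that the paper states without detail.
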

\begin{proof}
This follows directly from~\cite[Theorem 4]{KMSW2017} and the observation that in \Cref{infinitebranchtp}, $\liminftpobj$, and co-B\"{u}chi objectives coincide.
\end{proof}

The statements and proofs of \Cref{infwin} and \Cref{inflose} also hold for $\liminftpobj$, giving us the following theorem.

\begin{thm}\label{infinitesummarytp}
There exists a countable, finitely branching and acyclic MDP $\mdp$ whose step counter is implicit in the state for which
$\valueof{\mdp,\liminftpobj}{s_{0}} = 1$ and any FR strategy $\zstrat$ is such that
$\probm_{\mdp, s_{0}, \zstrat}(\liminftpobj)=0$.
In particular, there are no $\eps$-optimal $k$-bit Markov strategies
for any $k \in \N$ and any $\eps < 1$ for
$\liminftpobj$ in
countable MDPs.
\end{thm}

\subsubsection{Optimal strategies}

The statements and proofs of \Cref{almostwin} and \Cref{almostlose} also hold for $\liminftpobj$, giving us the following theorem.

\begin{thm}\label{almostsummarytp}
There exists a countable, finitely branching and acyclic MDP $\mdp$ whose step counter is implicit in the state for which
$\state_0$ is almost surely winning $\liminftpobj$, i.e.,
$\exists\hat{\sigma}\,\probm_{\mdp, s_{0}, \hat{\sigma}}(\liminftpobj)=1$,
but every FR strategy $\sigma$ is such that
$\probm_{\mdp, s_{0}, \sigma}(\liminftpobj)=0$.
In particular, almost sure winning strategies, when they exist, cannot be chosen
$k$-bit Markov for any $k \in \N$ for countable MDPs.
\end{thm}
\begin{proof}
Proved by \Cref{almostwin} and \Cref{almostlose}.
\end{proof}

\section{Strengthening Results}\label{sec:strengthening}
The counterexamples we present in \Cref{sec:meanpayoff} feature finite but unbounded branching degree,
unbounded rewards and irrational transition probabilities.
In this section we show that the hardness does not depend on these aspects by
strengthening the counterexamples to have binary branching, bounded rewards
and rational transition probabilities.

Consider a new MDP $\mdp$ based on the MDP constructed in \Cref{infinitegadget} which now undergoes the following changes.
First we bound the branching degree by $2$.
We do so by replacing the outgoing transitions in states $s_{n}$ and $c_{n}$ of each gadget by binary trees with accordingly adjusted probabilities such that there is still a probability of $\delta_{i}(n)$ of receiving reward $i \cdot m_{n}$ in each gadget for $i \in \{0, 1, \ldots, k(n)\}$.

To adjust for the increased path lengths incurred by the modifications to each gadget, the construction in \Cref{chain} is accordingly modified by padding each vertical column of white states with extra transitions based on the number of transitions present in the matching gadget. As a result, path length is preserved even when skipping gadgets.
The construction in \Cref{restart} is similarly modified.

Second, we restrict the transition probabilities to rationals.
The transition probabilities $\delta_i(n)$ and $\eps_i(n)$
are replaced by close rationals in
$(\delta_i(n), \delta_i(n)+2^{-n})$
and in
$(\eps_i(n), \eps_i(n)+2^{-n})$, respectively.
These rationals are constructible, for example by approximation of $\delta_i(n)$ and $\eps_i(n)$ themselves.
Since these new rational probabilities are so close to the original ones,
all of the relevant convergence and divergence of series is preserved.

\begin{figure}
    \begin{tikzpicture}

    \node[draw,circle] (S1) at (0,0) {$s_{n}$};


    %
    %
    %
    %
    %
    \node[draw,circle] (MidBotL) at (2,-0.7){};
    \node[draw,circle] (MidTopL) at (2,0.7){};

    \node[draw, circle] (BotL) at (2,-1.5) {};
    \node[draw, circle] (TopL) at (2,1.5) {};

    \draw [dotted, thick] (MidBotL) -- (MidTopL);
    \draw [dotted, thick] (MidTopL) -- (TopL);

    %
    %
    %
    %
    \coordinate[shift={(0mm,2.5mm)}] (Mshift) at (M);


    \draw[->,>=latex] (S1) edge[bend left] node[above, midway, xshift=-7pt]{$\delta_{k(n)}(n)$} (TopL)
                    (S1) edge[bend left=20] node[below, midway]{$\delta_{j}(n)$} (MidTopL)
                    (S1) edge[bend right=20] node[above, midway]{$\delta_{1}(n) $} (MidBotL)
                    (S1) edge[bend right] node[below, midway, xshift=-5pt]{$\delta_{0}(n)$} (BotL);


    \node (T) at (4,0) {is replaced by};


	\node[draw,circle] (S2) at (6,0) {$s_{n}$};

	\node[draw,circle] (B1) at (7,1.5) {};
	\node[draw,circle] (B2) at (7,-1.5) {};

	\draw[->,>=latex] (S2) edge node[left, near end]{\small $p_{2}(n)$} (B1)
	(S2) edge node[left, near end]{\small $p_{1}(n)$} (B2);

	\node[draw,circle] (C1) at (8,2) {};
	\node[draw,circle] (C2) at (8,1) {};
	\node[draw,circle] (C3) at (8,-1) {};
	\node[draw,circle] (C4) at (8,-2) {};

	\draw[->,>=latex] (B1) edge node[above, near start, xshift=-5pt]{\footnotesize $p_{2,2}(n)$} (C1)
	(B1) edge node[below, near start, yshift=-3pt]{\footnotesize $p_{2,1}(n)$} (C2)
	(B2) edge node[above, near start, yshift=3pt]{\footnotesize $p_{1,2}(n)$} (C3)
	(B2) edge node[below, near start, xshift=-5pt]{\footnotesize $p_{1,1}(n)$} (C4);

	\node[draw,circle] (D1) at (9,2.25) {};
	\node[draw,circle] (D2) at (9,0.75) {};
	\node[draw,circle] (D3) at (9,-0.75) {};
	\node[draw,circle] (D4) at (9,-2.25) {};

	\draw [dotted, thick] (C1) edge (D1)
	(C2) edge (D2)
	(C3) edge (D3)
	(C4) edge (D4);

	\node (V1) at (9,1.5) {\vdots};
	\node (V2) at (9,-1.5) {\vdots};

	\node (E1) at (10, 2.25) {$\delta_{k(n)}(n)$};
	\node (E2) at (10.1, 0.75) {$\delta_{\left\lceil\tfrac{k(n)}{2}\right\rceil}(n)$};
    \node (E3) at (10.1, -0.75) {$\delta_{\left\lfloor\tfrac{k(n)}{2}\right\rfloor}(n)$};
    \node (E4) at (9.8, -2.25) {$\delta_{0}(n)$};

    \end{tikzpicture}
    \caption{Schema for replacing arbitrary finite branching with binary branching in \Cref{infinitegadget}.}%
    \label{kbranchtobinarybranch}
    \end{figure}
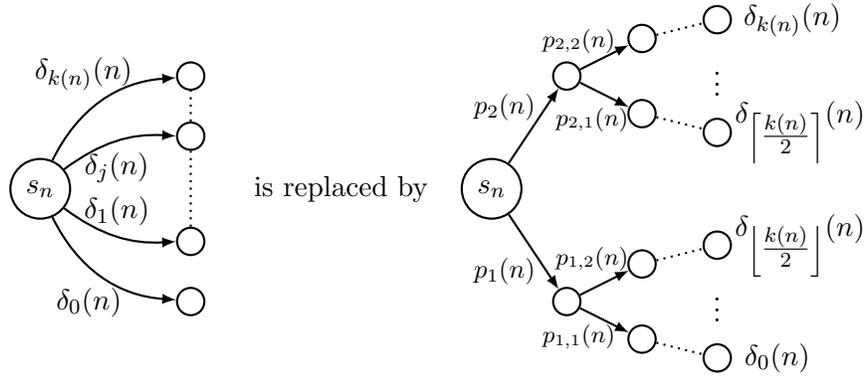

\begin{defi}[Binary branching]\label{def:binarybranching}
We formally define how to modify the MDPs in \Cref{sec:mplowerbounds} such that they have a branching degree of no more than $2$.

In each gadget in \Cref{infinitegadget} and \Cref{stepcounter}, we do as is shown in \Cref{kbranchtobinarybranch}. I.e.\ the outgoing transitions from $s_n$ and $c_n$ are replaced by a binary tree of depth at most $\left\lceil \lg(k(n)+1) \right\rceil$. Because $c_n$ is player controlled, we do not need to define any new transition probabilities. For the outgoing transitions from $s_n$, new probabilities must be defined such that the probability of receiving reward $im(n)$ is still $\delta_{i}(n)$.
For illustrative purposes, the transition probabilities for the initial branching are as follows.
\[
p_{1}(n) \eqdef \sum _{i=0}^{\left\lfloor \tfrac{k(n)}{2} \right\rfloor} \delta_{i}(n)
 \text{ and } p_{2}(n) \eqdef \sum_{i=\left\lceil \tfrac{k(n)}{2} \right\rceil}^{k(n)} \delta_{i}(n)
\]
All other transition probabilities are obtained inductively.

The extended path lengths are mirrored in a modified \Cref{chain} by padding
the path lengths by $4 + 2\lceil\lg(k(n))\rceil$ steps instead of $4$
steps. Similarly, in \Cref{restart} we pad the length
of the chains of black states by an extra $2 \lceil\lg(k(n)) \rceil$ steps.
So the step counter is still implicit in the state.
\end{defi}


\begin{defi}[Rational probabilities]\label{def:rationalprobs}
We define the probabilities $\gamma_{i}(n)$ and $\theta_{i}(n)$ as follows. We set $\gamma_{i}(n) \in (\delta_{i}(n), \delta_{i}(n) + 2^{-n}) \cap \mathbb{Q}$ and similarly set $\theta_{i}(n) \in (\eps_{i}(n), \eps_{i}(n) + 2^{-n}) \cap \mathbb{Q}$.
Furthermore, we define new functions $g^*$, $h^*$ and $k^*$ as follows.
We set \[
g^*(i) \eqdef \min \left\{ N : \left( \sum_{n>N} \gamma_{i-1}(n) \theta_{i-1}(n) \right)  \le 2^{-i}  \right\}, \quad h^*(1) \eqdef 2
\]
\[
h^*(i+1) \eqdef \left\lceil \max \left\{ g^*(i+1), \text{Tower}(i+2),  \min \left\{ m+1 \in \N :\sum^{m}_{n=h^{*}(i)} \theta_{i-1}(n) \geq 1 \right\} \right\} \right\rceil.
\]
which yield $k^*(i) \eqdef h^{*^{-1}}(n)$.

Note that $g^*(i)$ is well defined since
\begin{align*}
\sum_{n>N} \gamma_{i-1}(n) \theta_{i-1}(n) & < \sum_{n>N} \delta_{i-1}(n) \eps_{i-1}(n) + 2^{-n}(\delta_{i-1}(n) + \eps_{i-1}(n) + 2^{-n})\\
& <\sum_{n>N} \delta_{i-1}(n) \eps_{i-1}(n) + 2^{-n} \cdot 3
\end{align*}
is convergent for all $i$.
Similarly, $h^*(i)$ is also well defined since
\[
\sum^{\infty}_{n=h^{*}(i)} \theta_{i-1}(n) > \sum^{\infty}_{n=h^{*}(i)} \eps_{i-1}(n)
\]
which diverges for all $i$.
\end{defi}

\begin{rem}\label{rem:rationalprobs}
We now make sure that the results from \Cref{sec:meanpayoff} still hold when we change \Cref{infinitegadget} by replacing the transition probabilities $\delta_{i}(n)$ and $\eps_{i}(n)$ with $\gamma_{i}(n)$ and $\theta_{i}(n)$, respectively.
To this end we must check that some crucial results still hold. Namely, that \Cref{lem:welldefined} and \Cref{alglose} still hold given the modified transition probabilities.

In order to show that \Cref{lem:welldefined} still holds, we must show that $\sum^{k(n)-1}_{i=0} \gamma_i (n) < 1$. I.e.\ we get
\begin{align*}
\sum^{k(n)-1}_{i=0} \gamma_i (n) & \leq \sum^{k(n)-1}_{i=0} \delta_i (n) + \sum^{k(n)-1}_{i=0} 2^{-n} \\
& = \sum^{k(n)-1}_{i=0} \delta_i (n) + (k(n)-1) 2^{-n} \\
& < \sum^{k(n)}_{i=1} 2^{-i} + 2^{k(n)-1} \cdot 2^{-n} \\
& \leq \sum^{k(n)}_{i=1} 2^{-i} + 2^{-\tfrac{n}{2}} < 1 & \text{since } k(n) < \dfrac{n}{2}
\end{align*}
That is to say that the transition probabilities are indeed well defined using rational probabilities $\gamma_i (n)$ in lieu of $\delta_{i} (n)$.

Similarly, we must now show that the following sum diverges.
\[
\sum_{n=k^{-1}(2)}^{\infty} \Big( \gamma_{j(n)}(n)(\alpha_{n} \theta_{j(n)}(n) + (1- \alpha_{n}) \theta_{i(n)}(n)) + \gamma_{i(n)}(n)(\alpha_{n} + (1- \alpha_{n}) \theta_{i(n)}(n)) \Big)
\]
We do so by noticing that
\begin{align*}
\sum_{n=k^{-1}(2)}^{\infty} \Big( \gamma_{j(n)}(n)(\alpha_{n} \theta_{j(n)}(n) + (1- \alpha_{n}) & \theta_{i(n)}(n)) + \gamma_{i(n)}(n)(\alpha_{n} + (1- \alpha_{n}) \theta_{i(n)}(n)) \Big) \\
\geq \sum_{n=k^{-1}(2)}^{\infty} \Big( \delta_{j(n)}(n)(\alpha_{n} \eps_{j(n)}(n) + (1- \alpha_{n}) & \eps_{i(n)}(n)) + \delta_{i(n)}(n)(\alpha_{n} + (1- \alpha_{n}) \eps_{i(n)}(n)) \Big) \\
& \text{since  }\gamma_{i}(n) \geq \delta_i (n) \text{ and } \theta_i (n) \geq \eps_i (n) \text{ for all } i.
\end{align*}
Hence \Cref{alglose} yields the desired divergence result.

Putting both of the above results together, we can obtain rational probability versions of \Cref{infwin} and \Cref{inflose}.
\end{rem}

Combining these constructions allows us to obtain the following properties.

\begin{thm}
There exists a countable, acyclic MDP $\mdp$, whose step counter is implicit in the state, whose transition probabilities are rational and whose branching degree is bounded by $2$ for which
$\valueof{\mdp,\liminfmpobj}{s_{0}} = 1$ and any FR strategy $\sigma$ is such that
$\probm_{\mdp, s_{0}, \sigma}(\liminfmpobj)=0$.
In particular, there are no $\eps$-optimal step counter plus finite memory strategies
for any $\varepsilon < 1$ for the
$\liminfmpobj$ objective for countable MDPs.
\end{thm}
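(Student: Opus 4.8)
The plan is to assemble the desired MDP by applying \emph{both} preparatory modifications to the MDP underlying \Cref{infinitesummary}, namely the gadgets of \Cref{infinitegadget} chained as in \Cref{chain}. First I would impose the binary-branching transformation of \Cref{def:binarybranching}, replacing the outgoing edges of each $s_n$ and $c_n$ by binary trees of depth $\lceil\lg(k(n)+1)\rceil$, and then replace every transition probability $\delta_i(n),\eps_i(n)$ by the rationals $\gamma_i(n),\theta_i(n)$ of \Cref{def:rationalprobs}. Call the result $\mdp$. By construction $\mdp$ is countable, acyclic, has branching degree at most $2$, and uses only rational probabilities. The padding prescribed in \Cref{def:binarybranching} (an extra $2\lceil\lg(k(n))\rceil$ steps per gadget, with the matching padding of the skip-chain and of the black-state chains) keeps all paths from $s_0$ to any fixed state of equal length, so the step counter is still implicit in the state. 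Thus it only remains to re-run the two halves of \Cref{infinitesummary} on $\mdp$.

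For the value I would invoke the rational-probability version of \Cref{infwin} licensed in \Cref{rem:rationalprobs}. The mimicking strategy $\zstrat$ reads off the random outcome $i$ while descending the random binary subtree below $s_n$, then descends the controlled binary subtree below $c_n$ to the matching $i$-th leaf; as before this keeps the per-gadget mean payoff from being forced below the threshold, so the only way to lose is to fall into a $\perp$ state, which now happens with probability $\sum_j \gamma_j(n)\theta_j(n)$ in the $n$-th gadget. By \Cref{rem:rationalprobs} the series $\sum_n\sum_j \gamma_j(n)\theta_j(n)$ still converges (this is exactly what $g^*$ is tuned for), so by \Cref{prop:product-sum} the survival product is strictly positive, and skipping a sufficiently long prefix of gadgets and applying \Cref{prop:tail-product} pushes it above $1-\eps$ for every $\eps>0$. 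Hence $\valueof{\mdp,\liminfmpobj}{s_0}=1$.

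For the lower bound I would invoke the rational-probability version of \Cref{inflose}. The one point that genuinely needs re-checking — and which I expect to be the main obstacle — is the Pigeonhole confusion step, since the single random branching at $s_n$ is now spread over a binary tree through whose nodes an FR strategy may update its memory. However, the whole argument depends only on the memory mode of $\sigma$ at the instant the run reaches the root $c_n$ of the controlled subtree: a strategy with $k$ memory modes can occupy at most $k$ modes there, while the $k(n)+1$ distinct reward levels all funnel into $c_n$. So once $k(n)+1>k$ (for all $n\ge N'$ with $N'$ large) two reward levels $i(n)<j(n)\neq k(n)$ must arrive at $c_n$ in the same memory mode, exactly as in the original proof. \Cref{claim:confusion-simple} then reduces the local behaviour to the simplified gadget of \Cref{ijcase}, and the divergence of $\sum_{n\ge N'} e_n$ follows from the rational version of \Cref{alglose} already verified in \Cref{rem:rationalprobs}. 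By \Cref{prop:product-sum}, $\prod_{n\ge N'}(1-e_n)=0$, so every FR strategy $\sigma$ satisfies $\probm_{\mdp,s_0,\sigma}(\liminfmpobj)=0$.

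Combining the two parts gives the strengthened analogue of \Cref{infinitesummary}. The ``in particular'' clause is then immediate: because the step counter is implicit in the state of $\mdp$, an FR strategy on $\mdp$ is precisely a finite-memory strategy augmented with a step counter, so the nonexistence of winning FR strategies rules out $\eps$-optimal step-counter-plus-finite-memory strategies for every $\eps<1$. I expect only the binary-tree confusion check to require new words; everything else is a direct transcription of \Cref{infwin} and \Cref{inflose} with $\gamma_i(n),\theta_i(n)$ in place of $\delta_i(n),\eps_i(n)$, which \Cref{rem:rationalprobs} has already justified.
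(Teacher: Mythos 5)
Your proposal is correct and follows essentially the same route as the paper, which proves this theorem by combining the binary-branching modification (\Cref{def:binarybranching}) and the rational-probability modification (\Cref{def:rationalprobs}) of the construction in \Cref{infinitegadget}/\Cref{chain}, and then invoking \Cref{infwin} and \Cref{inflose} in the rational-probability versions justified by \Cref{rem:rationalprobs}. Your explicit re-check of the Pigeonhole confusion step at $c_n$ when the random choice is spread over a binary tree is a detail the paper leaves implicit, but it confirms rather than alters the argument.
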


\begin{proof}
This follows from \Cref{infwin}, \Cref{inflose} by modifying the constructions in \Cref{infinitegadget} and \Cref{chain} as detailed in \Cref{def:binarybranching} and \Cref{def:rationalprobs}.
\end{proof}

\begin{thm}
There exists a countable, acyclic MDP $\mdp$, whose step counter is implicit in the state, whose transition probabilities are rational and whose branching degree is bounded by $2$ for which
$s_{0}$ is almost surely winning for $\liminfmpobj$ and any FR strategy $\sigma$ is such that
$\probm_{\mdp, s_{0}, \sigma}(\liminfmpobj)=0$.
In particular, almost sure winning strategies, when they exist, cannot be chosen
with a step counter plus finite memory for countable MDPs.
\end{thm}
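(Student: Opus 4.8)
The plan is to reuse the almost-sure winning machinery of \Cref{almostwin} and \Cref{almostlose} essentially unchanged, applying the same two modifications already validated for the $\eps$-optimal strengthening. Concretely, I would start from the restart MDP of \Cref{restart} built out of the gadgets of \Cref{infinitegadget} chained as in \Cref{chain}, and then apply the binary-branching transformation of \Cref{def:binarybranching} together with the rational-probability transformation of \Cref{def:rationalprobs} uniformly across every gadget and every row. The first thing to check is that these modifications preserve all the structural invariants claimed in the statement: acyclicity and branching degree $\le 2$ are immediate from \Cref{def:binarybranching}, rationality of the transition probabilities is immediate from \Cref{def:rationalprobs}, and the step counter remains implicit in the state because \Cref{def:binarybranching} pads each column of white states in \Cref{chain} by $4 + 2\lceil\lg(k(n))\rceil$ steps and pads the black-state chains in \Cref{restart} by the matching $2\lceil\lg(k(n))\rceil$ steps, so every path from $s_0$ to a fixed state still has a common length.

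Next I would invoke the rational/binary versions of \Cref{infwin} and \Cref{inflose} that \Cref{rem:rationalprobs} has already established: there \Cref{lem:welldefined} and \Cref{alglose} were shown to survive the replacement of $\delta_i(n),\eps_i(n)$ by the close rationals $\gamma_i(n),\theta_i(n)$, which is exactly what keeps the surviving-probability product of \Cref{infwin} positive and the error-probability series of \Cref{inflose} divergent. The key observation is that the proofs of \Cref{almostwin} and \Cref{almostlose} treat \Cref{infwin} and \Cref{inflose} purely as black boxes: \Cref{almostwin} concatenates the strategy $\zstrat_{1/2}$ across rows and argues that runs restart only finitely often almost surely (via the per-row bound $\le 1/2$ and continuity of measures from above), while \Cref{almostlose} partitions runs by their number of restarts and applies \Cref{inflose} to the infinite tail spent in the final non-restarting row. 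Neither argument inspects the internal fan-out of a gadget or the exact irrational values of the probabilities; both depend only on the two quantitative facts (positive survival, divergent error series) that the modified MDP still enjoys.

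I would then conclude exactly as in \Cref{almostsummary}: the modified concatenated strategy $\zstrat$ witnesses $\probm_{\mdp,s_0,\zstrat}(\liminfmpobj)=1$, establishing the first property, while every FR strategy attains $0$, establishing the second; the ``in particular'' follows because a step counter plus finite memory is a special case of an FR strategy on an MDP whose step counter is already implicit in the state. The hard part will be the path-length bookkeeping in \Cref{restart} under binary branching: one must ensure the black-state padding keeps each restart penalty aligned so that the accumulated reward is dominated by the $m_n$ scaling, forcing the mean payoff strictly below $-1$ at each restart and thus keeping the infinitely-restarting runs losing. Once this alignment is verified --- the same $m_n$-domination argument used in \Cref{almostlose}, merely with longer but still state-determined path lengths --- the result follows.
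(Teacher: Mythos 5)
Your proposal matches the paper's own proof: the paper establishes this theorem exactly by applying the binary-branching modification (\Cref{def:binarybranching}) and the rational-probability modification (\Cref{def:rationalprobs}) to the constructions in \Cref{infinitegadget} and \Cref{restart}, and then invoking \Cref{almostwin} and \Cref{almostlose} (whose dependence on \Cref{infwin} and \Cref{inflose} survives via the checks in \Cref{rem:rationalprobs}), just as you describe. You also correctly refrain from invoking the bounded-rewards transformation, which the paper deliberately omits for the mean payoff case.
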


\begin{proof}
This follows from \Cref{almostwin}, \Cref{almostlose} by modifying the constructions in \Cref{infinitegadget} and \Cref{restart} as detailed in \Cref{def:binarybranching} and \Cref{def:rationalprobs}.
\end{proof}

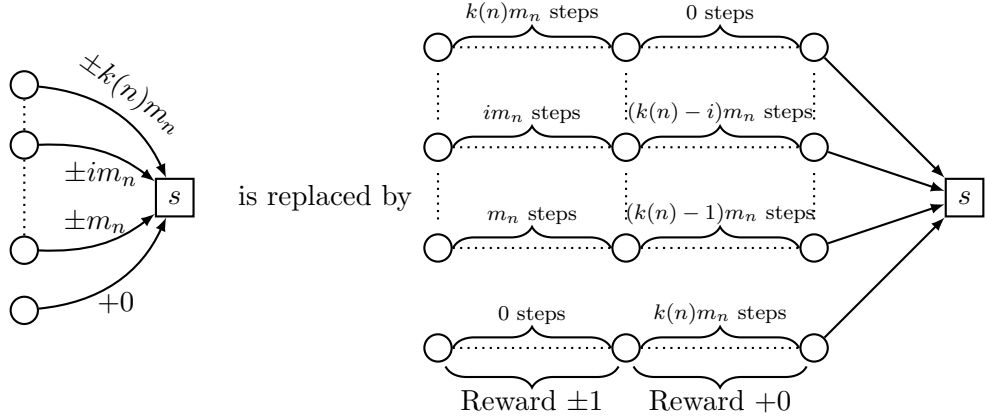
\begin{figure*}
\begin{center}
\begin{tikzpicture}

\node[draw, minimum height=0.5cm, minimum width=0.5cm] (C1) at (2,0) {$s$};

\node[draw, circle] (Q1) at (0,1.5) {};
\node[draw,circle] (Q2) at (0,0.7){};
\node[draw,circle] (Q3) at (0,-0.7){};
\node[draw, circle] (Q4) at (0,-1.5) {};

\draw [dotted, thick] (Q3) -- (Q2);
\draw [dotted, thick] (Q2) -- (Q1);

\draw[->,>=latex]
                    (Q1) edge[bend left] node[above, sloped, midway]{$\pm k(n)m_{n}$} (C1)
                    (Q2) edge[bend left=20] node[below, midway]{$\pm im_{n}$} (C1)
                    (Q3) edge[bend right=20] node[above, midway, xshift=-2pt]{$\pm m_{n}$} (C1)
                    (Q4) edge [bend right] node[below, midway]{$+0$} (C1);

\node (T) at (4,0) {is replaced by};


\node[draw, minimum width=0.5cm, minimum height=0.5cm] (C2) at (12.5,0) {$s$};

\node[draw, circle] (P1) at (5.5,2) {};
\node[draw, circle] (P2) at (5.5,0.67) {};
\node[draw, circle] (P3) at (5.5,-0.67) {};
\node[draw, circle] (P4) at (5.5,-2) {};

\node[draw, circle] (P5) at (8,2) {};
\node[draw, circle] (P6) at (8,0.67) {};
\node[draw, circle] (P7) at (8,-0.67) {};
\node[draw, circle] (P8) at (8,-2) {};

\node[draw, circle] (P9) at (10.5,2) {};
\node[draw, circle] (P10) at (10.5,0.67) {};
\node[draw, circle] (P11) at (10.5,-0.67) {};
\node[draw, circle] (P12) at (10.5,-2) {};

\node[] (I1) at (5.5,1.8) {};
\node[] (I2) at (5.5,0.87) {};
\node[] (I3) at (5.5,0.47) {};
\node[] (I4) at (5.5,-0.47) {};

\node[] (I5) at (8,1.8) {};
\node[] (I6) at (8,0.87) {};
\node[] (I7) at (8,0.47) {};
\node[] (I8) at (8,-0.47) {};

\node[] (I9) at (10.5,1.8) {};
\node[] (I10) at (10.5,0.87) {};
\node[] (I11) at (10.5,0.47) {};
\node[] (I12) at (10.5,-0.47) {};

\draw[dotted, thick] (P1) -- (P5);
\draw[dotted, thick] (P2) -- (P6);
\draw[dotted, thick] (P3) -- (P7);
\draw[dotted, thick] (P4) -- (P8);

\draw[dotted, thick] (P5) -- (P9);
\draw[dotted, thick] (P6) -- (P10);
\draw[dotted, thick] (P7) -- (P11);
\draw[dotted, thick] (P8) -- (P12);

\draw[dotted, thick] (I1) -- (I2);
\draw[dotted, thick] (I3) -- (I4);
\draw[dotted, thick] (I5) -- (I6);
\draw[dotted, thick] (I7) -- (I8);

\draw[dotted, thick] (I9) -- (I10);
\draw[dotted, thick] (I11) -- (I12);

\draw[->,>=latex]
(P9) edge (C2)
(P10) edge (C2)
(P11) edge (C2)
(P12) edge (C2);

\draw [decorate, decoration={brace,amplitude=8pt},xshift=0pt,yshift=0pt]
(5.7,2) -- (7.8,2) node [black,midway,above, yshift=5pt] {\scriptsize $k(n) m_{n}$ steps};
\draw [decorate, decoration={brace,amplitude=8pt},xshift=0pt,yshift=0pt]
(5.7,0.67) -- (7.8,0.67) node [black,midway,above, yshift=5pt] {\scriptsize $im_{n}$ steps};
\draw [decorate, decoration={brace,amplitude=8pt},xshift=0pt,yshift=0pt]
(5.7,-0.67) -- (7.8,-0.67) node [black,midway,above, yshift=5pt] {\scriptsize $m_{n}$ steps};
\draw [decorate, decoration={brace,amplitude=8pt},xshift=0pt,yshift=0pt]
(5.7,-2) -- (7.8,-2) node [black,midway,above, yshift=5pt] {\scriptsize $0$ steps};

\draw [decorate, decoration={brace,amplitude=8pt},xshift=0pt,yshift=0pt]
(8.2,2) -- (10.3,2) node [black,midway,above, yshift=5pt] {\scriptsize $0$ steps};
\draw [decorate, decoration={brace,amplitude=8pt},xshift=0pt,yshift=0pt]
(8.2,0.67) -- (10.3,0.67) node [black,midway,above, yshift=5pt] {\scriptsize $(k(n)-i)m_{n}$ steps};
\draw [decorate, decoration={brace,amplitude=8pt},xshift=0pt,yshift=0pt]
(8.2,-0.67) -- (10.3,-0.67) node [black,midway,above, yshift=5pt] {\scriptsize $(k(n)-1)m_{n}$ steps};
\draw [decorate, decoration={brace,amplitude=8pt},xshift=0pt,yshift=0pt]
(8.2,-2) -- (10.3,-2) node [black,midway,above, yshift=5pt] {\scriptsize $k(n)m_n$ steps};

\draw [decorate, decoration={brace,mirror,amplitude=8pt},xshift=0pt,yshift=-7pt]
(5.6,-2) -- (7.9,-2) node [black,midway,below, yshift=-5pt] {Reward $\pm 1$};
\draw [decorate, decoration={brace,mirror,amplitude=8pt},xshift=0pt,yshift=-7pt]
(8.1,-2) -- (10.4,-2) node [black,midway,below, yshift=-5pt] {Reward $+0$};

\end{tikzpicture}
\caption{Schema for replacing large rewards with bounded rewards in \Cref{infinitegadget}.
}%
\label{boundedreward}
\end{center}
\end{figure*}
We now further alter \Cref{infinitegadget} by bounding the rewards.
The rewards on transitions are now limited to $-1$, $0$ or $1$.
To compensate for the smaller rewards, in the $n$-th gadget, each transition bearing a reward is replaced by $k(n) \cdot m_{n}$ transitions as follows. If the original transition had reward $j \cdot m_{n}$ then that transition is replaced with $j \cdot m_{n}$ transitions with reward $1$, and $(k(n) - j) \cdot m_{n}$ transitions with reward $0$. Symmetrically all negatively weighted transitions are similarly replaced by transitions with rewards $-1$ and $0$.
The extra padding with the transitions with reward $0$ is done in order to
preserve the path length, i.e., such that the step counter is still implicit
in the state.

\begin{defi}[Bounded rewards]\label{def:boundedrewards}
We formally define how to modify the MDPs in \Cref{sec:mplowerbounds} such that their rewards are either $-1$, $+1$ or $0$.
In each \Cref{infinitegadget} gadget, we do as illustrated in \Cref{kbranchtobinarybranch}. I.e.\ the incoming transitions to $c_n$ and $s_n$ carry rewards $\pm im_n$ for some $i$ with $0 \leq i \leq k(n)$. We then split this transition carrying reward $\pm im_n$ into a chain of $k(n)m_n$ transitions. The first $i m_n$ of which carry reward $\pm 1$, and the last $(k(n)-i)m_n$ of which carry reward $0$.

The extended path lengths are mirrored in a modified \Cref{chain} by padding the path lengths by an extra $2k(n)m_n$
steps.
Because skipping ahead in \Cref{chain} reimburses reward $+i$ upon entering state $s_{N^{*}+i+1}$, we replace these transitions with $i$ transitions bearing reward $+1$ and reflect this increased path length by padding the incoming transitions to $s_{N^{*}+i+1}$ with an extra $i$ transitions bearing reward $0$.

The extended path lengths must also be reflected in \Cref{restart}. This is done by replacing transitions carrying reward $\pm m_i$ by $m_i$ transitions carrying reward $\pm 1$. We also increase the number of black states from $3$ to $2k(n)m_n + 3$ to match the number of steps taken inside the $n$th gadget.
\end{defi}

\begin{thm}
There exists a countable, acyclic MDP $\mdp$, whose step counter is implicit in the state, whose transition probabilities are rational, whose rewards on transitions are in $\{-1, 0, 1\}$ and whose branching degree is bounded by $2$ for which
$\valueof{\mdp,\liminftpobj}{s_{0}} = 1$ and any FR strategy $\sigma$ is such that
$\probm_{\mdp, s_{0}, \sigma}(\liminftpobj)=0$.
In particular, there are no $\eps$-optimal step counter plus finite memory strategies
for any $\varepsilon < 1$ for the
$\liminftpobj$ objective for countable MDPs.
\end{thm}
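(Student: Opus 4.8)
The plan is to start from the witness MDP of \Cref{infinitesummarytp} --- the chain of \Cref{infinitegadget} gadgets assembled as in \Cref{chain}, for which the total-payoff versions of \Cref{infwin} and \Cref{inflose} already give $\valueof{\mdp,\liminftpobj}{s_0}=1$ while forcing $\probm_{\mdp,s_0,\sigma}(\liminftpobj)=0$ for every FR strategy $\sigma$ --- and to apply the three strengthening transformations of \Cref{def:binarybranching}, \Cref{def:rationalprobs} and \Cref{def:boundedrewards} one after another. Each of them only inserts fresh forward-directed intermediate states or replaces a single transition by a finite chain or a binary tree, so acyclicity is retained, and in each case the prescribed padding makes every path from $s_0$ to a given state have the same length, so the step counter stays implicit in the state.

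For the first two transformations I would simply reuse the bookkeeping already carried out in the excerpt. The binary-branching replacement of \Cref{def:binarybranching} leaves the marginal probability $\delta_i(n)$ of receiving the $i$-th reward in the $n$-th gadget unchanged, and \Cref{rem:rationalprobs} verifies that passing to the rational probabilities $\gamma_i(n)\in(\delta_i(n),\delta_i(n)+2^{-n})$ and $\theta_i(n)\in(\eps_i(n),\eps_i(n)+2^{-n})$, together with the adjusted functions $g^*,h^*,k^*$, preserves both \Cref{lem:welldefined} and the divergence statement of \Cref{alglose}. Hence the rational-probability, binary-branching analogues of \Cref{infwin} and \Cref{inflose} hold verbatim.

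The genuinely new ingredient is the bounded-reward transformation of \Cref{def:boundedrewards}, and the crux I expect to spend the most effort on is checking that it leaves the \emph{total}-payoff behaviour of every run intact. The key observation is that replacing a reward $\pm i m_n$ by a monotone chain of $i m_n$ unit rewards (followed by $0$-reward padding) only interpolates the running partial sum \emph{monotonically} between its original milestone values, with the random (positive) block preceding the controlled (negative) block inside each gadget; therefore the infimum of the partial sums over any tail equals the infimum over the original milestone values, and the $\liminf$ of the partial sums is unchanged. Consequently a run satisfies $\liminftpobj$ in the bounded-reward MDP exactly when the corresponding run does so in the original. Under the mimicking strategy of \Cref{infwin} the running total rises to $+i m_n$ and descends back to the entry level inside each gadget without dipping below it, so on every run avoiding $\perp$ the partial sums are eventually $\ge 0$ and $\liminftpobj$ holds; conversely, whenever a confused FR strategy plays branch $j>i$ after random branch $i$, the partial sum hits $(i-j)m_n\le -m_n<-1$ at the end of that gadget, so the \emph{Bad} event of \Cref{inflose} still drives the total payoff strictly below $-1$, and the divergence of $\sum_n e_n$ (now via $g^*,h^*,k^*$) makes this recur infinitely often almost surely.

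Assembling these facts exactly as in the two preceding strengthening theorems yields a countable, acyclic MDP with rational transition probabilities, rewards in $\{-1,0,1\}$, branching degree at most $2$ and step counter implicit in the state, for which $\valueof{\mdp,\liminftpobj}{s_0}=1$ yet every FR strategy attains $0$. The ``in particular'' clause is then immediate: since the step counter is implicit in the state, every step-counter-plus-finite-memory strategy on $\mdp$ is already an FR strategy on $\mdp$, and all such strategies attain $0$.
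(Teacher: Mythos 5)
Your proposal is correct and takes essentially the same route as the paper: the paper's own proof is exactly the combination of the total-payoff versions of \Cref{infwin} and \Cref{inflose} (i.e., the witness of \Cref{infinitesummarytp}) with the modifications of \Cref{def:binarybranching}, \Cref{def:rationalprobs} and \Cref{def:boundedrewards}, plus the observation that an implicit step counter makes step-counter-plus-finite-memory strategies equivalent to FR strategies. Your monotone-interpolation argument showing that replacing each reward $\pm i m_n$ by a chain of unit rewards leaves the $\liminf$ of the partial sums unchanged is a correct fleshing-out of the one step the paper leaves implicit.
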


\begin{proof}
This follows from \Cref{infwin}, \Cref{inflose} by modifying the constructions in \Cref{infinitegadget} and \Cref{chain} as detailed in \Cref{def:binarybranching}, \Cref{def:boundedrewards} and \Cref{def:rationalprobs}.
\end{proof}

\begin{thm}
There exists a countable, acyclic MDP $\mdp$, whose step counter is implicit in the state, whose transition probabilities are rational, whose rewards on transitions are in $\{-1, 0, 1\}$ and whose branching degree is bounded by $2$ for which
$s_{0}$ is almost surely winning for $\liminftpobj$ and any FR strategy $\sigma$ is such that
$\probm_{\mdp, s_{0}, \sigma}(\liminftpobj)=0$.
In particular, almost sure winning strategies, when they exist, cannot be chosen
with a step counter plus finite memory for countable MDPs.
\end{thm}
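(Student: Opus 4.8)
The plan is to assemble this statement from results already in hand rather than to argue from scratch: the hard analytic work lives in \Cref{almostwin} and \Cref{almostlose}, and the three structural refinements have been packaged as \Cref{def:binarybranching}, \Cref{def:boundedrewards} and \Cref{def:rationalprobs}. First I would note that the restart MDP of \Cref{restart}, built from the gadgets of \Cref{infinitegadget}, already witnesses the almost-sure lower bound for $\liminftpobj$: although \Cref{almostwin} and \Cref{almostlose} are phrased for $\liminfmpobj$, they transfer verbatim to $\liminftpobj$, which is exactly \Cref{almostsummarytp}. The reason is that mimicking the random choice returns the accumulated total reward to its pre-gadget value, whereas any non-mimicking move (which an FR strategy is eventually forced into, by the Pigeonhole argument of \Cref{inflose}) drives the total reward down by a multiple of $m_n$; so the win/lose dichotomy for total payoff is identical to the one for mean payoff.

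Next I would apply the three modifications to this MDP simultaneously. Binary branching (\Cref{def:binarybranching}) replaces the unbounded fan-out at each $s_n$ by a binary tree of depth $\lceil \lg(k(n)+1)\rceil$ whose inductively defined probabilities preserve the marginal chance $\delta_i(n)$ of each reward level; rational probabilities (\Cref{def:rationalprobs}) substitute nearby rationals $\gamma_i(n) \in (\delta_i(n),\delta_i(n)+2^{-n})$ and $\theta_i(n) \in (\eps_i(n),\eps_i(n)+2^{-n})$; and bounded rewards (\Cref{def:boundedrewards}) split each transition of reward $\pm i m_n$ into $i m_n$ transitions of reward $\pm 1$ followed by $0$-reward padding. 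Crucially, all three modifications pad path lengths (extra $2\lceil\lg k(n)\rceil$ and $2 k(n) m_n$ steps, respectively) so that every path from $s_0$ to a given state still has a fixed length; hence the step counter remains implicit in the state, and the MDP stays countable, acyclic and of branching degree $\le 2$.

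Then I would verify that the two load-bearing lemmas survive the perturbation. Well-definedness of the probabilities (\Cref{lem:welldefined}) and the divergence lemma (\Cref{alglose})---on which \Cref{infwin}, \Cref{inflose} and hence \Cref{almostwin}, \Cref{almostlose} rest---continue to hold under the rational replacement, as already checked in \Cref{rem:rationalprobs}, because $\gamma_i(n) \ge \delta_i(n)$ and $\theta_i(n) \ge \eps_i(n)$ keep the relevant sums divergent while the $2^{-n}$ slack keeps them summable where needed. The binary tree does not alter the marginal reward distribution, so it is invisible to these arguments. The only genuinely total-payoff-specific point is that the bounded-reward chains do not create spurious negative dips on winning runs: within a mimicked gadget the $+1$ chain is traversed before the matching $-1$ chain, so the running total rises and then returns to its entry value without ever falling below it, and that entry value is $\ge 0$ since it is reimbursed at every restart. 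Conversely, a non-mimicking step still produces a $\Theta(m_n)$ negative spike. Thus the dichotomy of \Cref{almostwin} and \Cref{almostlose} is preserved for $\liminftpobj$ on the modified MDP, yielding both the almost-surely-winning HD strategy and the failure of every FR strategy; the ``in particular'' is then immediate, since a step counter plus finite memory is an FR strategy once the step counter is implicit in the state.

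The hard part will be essentially bookkeeping: confirming that the three modifications compose without breaking either the convergence/divergence estimates or the path-length accounting that keeps the step counter implicit. The single substantive check is the total-payoff-specific one just described---that the ordering of the $\pm 1$ chains prevents winning runs from dipping below their entry reward---while everything else is inherited from the corresponding mean-payoff strengthening results.
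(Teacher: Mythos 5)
Your proposal is correct and follows essentially the same route as the paper: the paper's proof is precisely to invoke \Cref{almostwin} and \Cref{almostlose} (transferred to $\liminftpobj$ as in \Cref{almostsummarytp}) on the restart construction of \Cref{restart}, modified according to \Cref{def:binarybranching}, \Cref{def:boundedrewards} and \Cref{def:rationalprobs}. Your additional check that the $+1$ chains precede the $-1$ chains inside each gadget---so mimicking runs never dip below their entry total---is a detail the paper leaves implicit, and spelling it out only strengthens the argument.
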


\begin{proof}
This follows from \Cref{almostwin}, \Cref{almostlose} by modifying the constructions in \Cref{infinitegadget} and \Cref{restart} as detailed in \Cref{def:binarybranching}, \Cref{def:boundedrewards} and \Cref{def:rationalprobs}.
\end{proof}

\begin{rem} 
We draw attention to the fact that we could state the total payoff theorems using bounded rewards, but we did not do so for the equivalent mean payoff results.
In the case of mean payoff, with the step counter implicit in the state,
having bounded transition rewards, e.g.\ bounded by $\pm b$,
means that the average reward in any given state will always be bounded by $\pm b$.
In the context of our example in \Cref{infinitegadget}, this means that if we used the construction in \Cref{def:boundedrewards}, the absolute worst the average reward can be is $\sim -1$, this can only happen going from the $0$th random transition to the $k(n)$th choice.
But even worse, using only one bit of memory to remember whether the random transition was $> \tfrac{k(n)}{2}$ or not, the mean payoff is suddenly at worst $\sim -\tfrac{1}{k(n)}$ which converges to $0$.

In general it would be interesting to consider the mean payoff objective with step counter encoded and bounded rewards since our results do not obviously carry over to this case.
\end{rem}

\bigskip

Some extra care is needed to convince ourselves that \Cref{mpstepepslower} and \Cref{mpstepoptlower} can also be strengthened. Consider the construction in \Cref{stepcounter}. In the random choice, the transition rewards are already all $0$, so only the branching degree needs to be adjusted by padding the choice with a binary tree as above. In the controlled choice, the transitions carrying reward $ \pm m_{n}^{i}$ are replaced by $m_{n}^{i}$ transitions each bearing reward $\pm 1$ respectively.
Therefore, the path lengths increase in the following way in the $n$-th
gadget. In $s_{n}$ and $c_{n}$, the binary trees increase path length by up to
$\lceil\lg(k(n)+1)\rceil$ (where $\lg$ is the logarithm to base $2$)
and after $c_{n}$ the path length increases by up to $m_{n}^{k(n)}$ twice.

Consider the scenario where the play took the $i$-th random choice and the
player makes the `best' mistake where they choose transition $i+1$.
We show that, even in this best error case (and thus in all other error
cases), the newly added path lengths do
still not help to prevent seeing a mean payoff $\le -1/2$ in the $n$-th gadget.
In this case, in the state between $c_{n}$ and $s_{n+1}$, the total payoff is $-m_{n}^{i+1}$ and the total number of steps taken by the play so far is upper bounded by
\[
\beta_n \eqdef \left( \sum_{i=N^{*}}^{n-1} 2\lceil\lg(k(i)+1)\rceil + 2m_{i}^{k(i)} \right)
+ 2\lceil\lg(k(n)+1)\rceil + m_{n}^{i} + m_{n}^{i+1}.
\]
Recall that $m_{n} \defeq \sum_{i = N^{*}}^{n-1} m_{i}^{k(n)}$ with $m_{N^{*}} \defeq 1$, and this is the definition of $m_{n}$ from \Cref{stepcounter} which is different from the definition of $m_{n}$ in \Cref{infinitegadget}. Note that $k(n)$ is very slowly growing, so it follows that
\[
\beta_n \leq 3m_{n} + m_{n}^{i} + m_{n}^{i+1} \leq 2m_{n}^{i+1}.
\]
That is to say that the mean payoff is $\le \dfrac{-m_{n}^{i+1}}{2m_{n}^{i+1}}
= -1/2$. As a result, in the case of a bad aggressive decision,
the mean payoff will still drop below $-1/2$
in this modified MDP (instead of dropping below $-1$ in the original MDP).
This is just as good to falsify $\liminfmpobj$.

Thus we obtain the following two results.

\begin{thm}
There exists a countable, acyclic MDP $\mdp$, whose reward counter is implicit in the state, whose transition probabilities are rational, whose rewards on transitions are in $\{-1, 0, 1\}$ and whose branching degree is bounded by $2$ for which
$\valueof{\mdp,\liminfmpobj}{s_{0}} = 1$ and any FR strategy $\sigma$ is such that
$\probm_{\mdp, s_{0}, \sigma}(\liminfmpobj)=0$.
In particular, there are no $\eps$-optimal step counter plus finite memory strategies
for any $\eps < 1$ for the
$\liminfmpobj$ objective for countable MDPs.
\end{thm}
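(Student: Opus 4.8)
The plan is to instantiate the reward-counter-implicit gadget of \Cref{stepcounter}, chained as in \Cref{chain}, and then apply the three strengthening transformations foreshadowed in the paragraph preceding the statement, namely \Cref{def:binarybranching}, \Cref{def:rationalprobs} and \Cref{def:boundedrewards}. First I would replace the $(k(n)+1)$-way random fan-out at $s_n$ by a binary tree of depth $\lceil\lg(k(n)+1)\rceil$ whose leaf probabilities realise the rationals $\gamma_i(n)$ of \Cref{def:rationalprobs}, so that leaf $i$ is still reached with probability $\gamma_i(n)\approx\delta_i(n)$, and replace the controlled fan-out at $c_n$ by an analogous binary tree (which needs no probabilities, as $c_n$ is controlled). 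Then I would replace each reward-$\pm m_n^{i}$ transition leaving $c_n$ by a chain of $m_n^{i}$ unit-reward transitions of sign $\pm 1$, padding the other branches and the chains in \Cref{chain} with reward-$0$ transitions so that the counter remains implicit in the state and the gadget stays acyclic with branching degree $\le 2$ and rewards in $\{-1,0,1\}$.

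Second, I would re-verify the two quantitative facts on which the lower bound rests, now for the rational, binary, bounded-reward gadget; both are isolated in \Cref{rem:rationalprobs}. Well-definedness of the random fan-out is \Cref{lem:welldefined} with $\gamma_i(n)$ in place of $\delta_i(n)$: the bound $\sum_{i=0}^{k(n)-1}\gamma_i(n)<1$ holds because $\gamma_i(n)<\delta_i(n)+2^{-n}$ and $k(n)<n/2$. Divergence of the losing series \eqref{ijlosingsum} of \Cref{alglose} survives verbatim because $\gamma_i(n)\ge\delta_i(n)$ and $\theta_i(n)\ge\eps_i(n)$ pointwise, so the modified sum dominates the original divergent one. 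These give rational-probability analogues of \Cref{infwin}/\Cref{liminfmpstepval1} and \Cref{inflose}/\Cref{liminfmpstepval0}.

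Third --- the delicate step --- I would check the mean-payoff bookkeeping for the bounded-reward version, i.e.\ that mimicking still wins whereas every deviation is still fatal. Mimicking the observed random branch $i$ nets total reward $0$ over each gadget and the gadget spans $\approx n m_n^{i}$ steps, so the running mean payoff stays $\ge -1/n$; hence $\valueof{\mdp,\liminfmpobj}{(s_{0},0)}=1$ follows as in \Cref{liminfmpstepval1}, by mimicking after skipping a sufficiently long prefix of gadgets and invoking \Cref{prop:tail-product}. For deviations, the key computation is the one sketched before the statement: taking random branch $i$ but playing the least aggressive wrong branch $i+1$ leaves total reward $-m_n^{i+1}$ just before $s_{n+1}$, while the number of steps taken so far is at most $\beta_n\le 2 m_n^{i+1}$, so the mean payoff is $\le -1/2$; every more aggressive branch $j>i+1$ is worse, and every conservative branch $j<i$ carries the prohibitive sink risk $\theta_j(n)$. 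Thus, exactly as in \Cref{inflose}, once $k(n)>k+1$ the Pigeonhole Principle forces any $k$-mode FR strategy to confuse two branches $i(n)<j(n)\ne k(n)$ in every gadget, yielding a per-gadget failure probability $\ge e_n$ with $\sum_n e_n=\infty$ by the strengthened \Cref{alglose}; by \Cref{prop:product-sum} the survival product vanishes, so $\probm_{\mdp,(s_{0},0),\sigma}(\liminfmpobj)=0$ for every FR strategy $\sigma$.

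Finally, the ``in particular'' clause follows by the implicit-counter reduction used throughout \Cref{sec:mplowerbounds}, exactly as for \Cref{mpstepepslower}: the counter encoded into the states of $\mdp$ means that an FR strategy on $\mdp$ is a finite-memory strategy equipped with that counter on the underlying MDP, so the absence of any $\eps$-optimal FR strategy on $\mdp$ is precisely the assertion that no $\eps$-optimal step counter plus finite memory strategy exists for $\liminfmpobj$ for any $\eps<1$. I expect the bounded-reward bookkeeping of the third step to be the main obstacle: as the remark preceding the statement warns, bounding the rewards collapses the lower bound for the step-counter-implicit construction of \Cref{infinitegadget}, because there the path length is fixed and the per-gadget mean is forced toward $0$; here it works only because the number of steps is a free design parameter, and it is precisely the inequality $\beta_n\le 2 m_n^{i+1}$ that lets a deviation of total reward $-m_n^{i+1}$ still register as a mean payoff $\le -1/2$. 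Verifying that this step-count-to-reward ratio stays bounded away from $0$ uniformly over all wrong branches and all gadgets is the crux.
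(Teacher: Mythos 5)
Your proposal is correct and takes essentially the same route as the paper: the same reward-counter-implicit gadgets of \Cref{stepcounter} chained as in \Cref{chain}, the same three transformations (\Cref{def:binarybranching}, \Cref{def:rationalprobs}, \Cref{def:boundedrewards}) with the rational-probability re-verification of \Cref{lem:welldefined} and \Cref{alglose} exactly as in \Cref{rem:rationalprobs}, and the same final assembly via \Cref{liminfmpstepval1} and \Cref{liminfmpstepval0}. In particular, you reproduce the paper's decisive bookkeeping step --- that after random branch $i$ the least aggressive wrong choice $i+1$ leaves total reward $-m_n^{i+1}$ within $\beta_n \le 2m_n^{i+1}$ steps, hence mean payoff $\le -1/2$, which still falsifies $\liminfmpobj$ --- and you correctly identify this step-count-to-reward ratio as the crux where bounding the rewards could otherwise have collapsed the lower bound.
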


\begin{proof}
This follows from \Cref{liminfmpstepval1}, \Cref{liminfmpstepval0}, \Cref{def:binarybranching}, \Cref{def:boundedrewards} and \Cref{def:rationalprobs}.
\end{proof}

\begin{thm}
There exists a countable, acyclic MDP $\mdp$, whose reward counter is implicit in the state, whose transition probabilities are rational, whose rewards on transitions are in $\{-1, 0, 1\}$ and whose branching degree is bounded by $2$ for which
$s_{0}$ is almost surely winning for $\liminfmpobj$ and any FR strategy $\sigma$ is such that
$\probm_{\mdp, s_{0}, \sigma}(\liminfmpobj)=0$.
In particular, almost sure winning strategies, when they exist, cannot be chosen
with a step counter plus finite memory for countable MDPs.
\end{thm}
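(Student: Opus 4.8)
The plan is to reuse the reward-implicit construction and push it through the same three strengthening transformations already applied in the step-counter-implicit cases. Concretely, I would start from the MDP $\mdp_{\text{Restart}}$ obtained by plugging the \Cref{stepcounter} gadgets (in which the total reward, not the step counter, is implicit in the state) into the restart framework of \Cref{restart}. By \Cref{liminfmpstepam1} and \Cref{liminfmpstepam0} this MDP already exhibits the desired qualitative behaviour: $s_{0}$ is almost surely winning for $\liminfmpobj$ via an HD strategy, while every FR strategy wins with probability $0$. Hence the only work is to verify that the transformations of \Cref{def:binarybranching}, \Cref{def:boundedrewards} and \Cref{def:rationalprobs} can be applied simultaneously without destroying either property, without breaking acyclicity, and without breaking the fact that the total reward stays implicit in the state.

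First I would apply the binary-branching transformation of \Cref{def:binarybranching}, replacing the fan-out at each $s_{n}$ and $c_{n}$ by a binary tree of depth $\lceil \lg(k(n)+1) \rceil$, with the random branching probabilities recombined so that reaching the $i$-th successor still has probability $\delta_{i}(n)$. The resulting increase in path length is absorbed by the padding described there, so the total reward remains implicit in the state. Next I would apply the rational-probability transformation of \Cref{def:rationalprobs}, replacing each $\delta_{i}(n)$ and $\eps_{i}(n)$ by close rationals $\gamma_{i}(n) \in (\delta_{i}(n),\delta_{i}(n)+2^{-n})$ and $\theta_{i}(n) \in (\eps_{i}(n),\eps_{i}(n)+2^{-n})$. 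As established in \Cref{rem:rationalprobs}, this preserves both the well-definedness of the distributions (the analogue of \Cref{lem:welldefined}) and the divergence of the critical series (the analogue of \Cref{alglose}), since the rationals dominate the originals yet stay within a summable distance of them.

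The main obstacle is the bounded-reward transformation of \Cref{def:boundedrewards}, since (as flagged in the preceding Remark) bounding rewards in a mean-payoff example is delicate: a transition reward of magnitude $\pm i m_{n}$ is replaced by a chain of $i m_{n}$ unit-reward transitions padded with zero-reward transitions, and this inflates the step count, which could in principle dilute the mean so much that a bad controlled choice no longer drives the $\liminf$ below $0$. The crucial point is that we are in the reward-implicit construction, where the estimate carried out just before this theorem already settles the issue: even under the best possible mistake (taking branch $i+1$ after the random choice $i$), the total number of steps $\beta_{n}$ accumulated up to the relevant state satisfies $\beta_{n} \le 2 m_{n}^{i+1}$, while the total reward there is $-m_{n}^{i+1}$, so the mean payoff is $\le -1/2$. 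Since any value strictly below $0$ suffices to falsify $\liminfmpobj$, an FR strategy still loses, and the transformation preserves path length so that the reward counter stays implicit.

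Putting the three transformations together, the strengthened $\mdp_{\text{Restart}}$ is countable, acyclic, binary-branching, has rational transition probabilities and rewards in $\{-1,0,1\}$, keeps the total reward implicit in the state, and still satisfies the strengthened analogues of \Cref{liminfmpstepam1} and \Cref{liminfmpstepam0}. Therefore $s_{0}$ is almost surely winning for $\liminfmpobj$ while every FR strategy, and hence every step-counter-plus-finite-memory strategy, wins with probability $0$, which is precisely the claim.
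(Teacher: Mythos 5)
Your proposal is correct and follows essentially the same route as the paper: the paper's proof likewise takes $\mdp_{\text{Restart}}$ (the restart construction built from the reward-implicit gadgets of \Cref{stepcounter}), invokes \Cref{liminfmpstepam1} and \Cref{liminfmpstepam0}, and applies the transformations of \Cref{def:binarybranching}, \Cref{def:rationalprobs} and \Cref{def:boundedrewards}, with the same key verification that the inflated step count still leaves the mean payoff $\le -1/2$ (via the bound $\beta_n \le 2m_n^{i+1}$) after a wrong controlled choice. You correctly identified this dilution issue as the only delicate point, which is exactly the ``extra care'' argument the paper carries out immediately before the theorem.
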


\begin{proof}
This follows from \Cref{liminfmpstepam1}, \Cref{liminfmpstepam0}, \Cref{def:binarybranching}, \Cref{def:boundedrewards} and \Cref{def:rationalprobs}.
\end{proof}

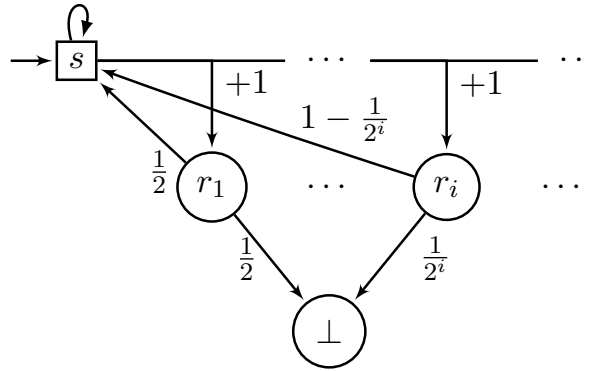
\begin{figure}
\begin{center}
\scalebox{1.2}{
\begin{tikzpicture}[>=latex',shorten >=1pt,node distance=1.9cm,on grid,auto,
roundnode/.style={circle, draw,minimum size=1.5mm},
squarenode/.style={rectangle, draw,minimum size=2mm},
diamonddnode/.style={diamond, draw,minimum size=2mm}]

\node [squarenode,initial,initial text={}] (s) at(0,0) [draw]{$s$};

\node[roundnode] (r1)  [below right=1.4cm and 1.5cm of s] {$r_1$};
\node[roundnode] (r3)  [draw=none,right=1.3 of r1] {$\cdots$};
\node[roundnode] (r4)  [right=1.3cm of r3] {$r_i$};
\node[roundnode] (r5)  [draw=none,right=1.3cm of r4] {$\cdots$};

\node [roundnode,inner sep = 4pt] (t)  [below=1.6cm of r3] {$ \perp $};

\draw [->] (s) -- ++(1.5,0)  -- node[near start, right]{$+1$} (r1);
\node[roundnode] (d)  [draw=none,right=2.8 of s] {$\cdots$};
\draw[-] (s) -- (d);
\draw [->] (d) -- ++(1.3,0) -- node[near start, right]{$+1$} (r4);
\node[roundnode] (dd)  [draw=none,right=2.8 of d] {$\cdots$};
\draw[-] (d) -- (dd);

\path[->] (r1) edge node [midway,left] {$\frac{1}{2}$} (t);
\path[->] (r4) edge node [midway,right=.2cm] {$\frac{1}{2^{i}}$} (t);
\path[->] (r1) edge  node[pos=0.3,below] {$\frac{1}{2}$} (s);
\path[->] (r4) edge [bend left=1] node [pos=0.2,above] {$1-\frac{1}{2^{i}}$} (s);

\path (s) edge[->,>=latex, loop above] (s);

\end{tikzpicture}
}
\caption{
An MDP where $\eps$-optimal strategies for $\liminfmpobj$ require only memory that grows unboundedly with the number of steps taken so far.
}%
\label{growingmemory}
\end{center}
\end{figure}

\begin{rem}
The result from \Cref{inflose} holds even for strategies $\sigma$ whose memory
grows unboundedly, but slower than $k(n)-1$. That is to say that there exists a countable, acyclic MDP $\mdp$, whose step counter is implicit in the state such that $\valueof{\mdp,\liminfmpobj}{s_{0}} = 1$ and any strategy $\sigma$ with number of memory modes $< k(n) - 1$ in the $n$th gadget is such that
$\probm_{\mdp, \sigma, s_{0}}(\liminfmpobj)=0$.
This follows from a slightly modified version of \Cref{alglose} which considers the situation where states $i(n)$ and $k(n)$ are confused in the player's memory. Then the argument used in \Cref{inflose} can be modified to include $i(n), j(n): \mathbb{N} \to \{ 0, 1, \ldots, k(n)\}$.
The result then follows since in every gadget at least one memory mode will confuse at least two states $i(n), j(n): \N \to \{ 0, 1, \ldots, k(n)-1\}$, which as we have shown is enough to falsify $\liminfmpobj$.

This is in contrast to examples such as \Cref{growingmemory} where the only requirement on the memory is that it grow unboundedly.
\end{rem}

\section{Conclusion and Outlook}\label{sec:conclusion}
We have established matching lower and upper bounds on the strategy complexity
of $\liminf$ threshold objectives for point, total and mean payoff on countably infinite
MDPs; cf.~\Cref{table:allresults}.

The upper bounds hold not only for integer transition rewards, but also
for rationals or reals, provided that the reward counter (in those cases where one
is required) is of the same type.
The lower bounds hold even for integer transition rewards, since all our
counterexamples are of this form.

Directions for future work include the corresponding questions for $\limsup$
threshold objectives. While the $\liminf$ point payoff objective generalizes co-B\"uchi
(see \Cref{sec:prelim}), the $\limsup$ point payoff objective generalizes
B\"uchi. Thus the lower bounds for $\limsup$ point payoff are at least as high as
the lower bounds for B\"uchi objectives~\cite{KMST:ICALP2019,KMST2020c}.

\bibliographystyle{alphaurl}
\bibliography{conferences,refs}
\end{document}